\documentclass[11pt,a4paper]{article}
\pdfoutput=1 %for arxiv
\usepackage{fullpage}

\usepackage{amsmath}
\usepackage{amsfonts}
\usepackage{amssymb}
\usepackage{amsthm}
\usepackage{mathtools}
\allowdisplaybreaks

\usepackage[tt=false]{libertine}
\usepackage[libertine,vvarbb]{newtxmath}

\usepackage{float}
\usepackage{xspace}

\usepackage{bm}

\usepackage[hyphens]{url}
\usepackage{hyperref}
\usepackage[svgnames]{xcolor}
\hypersetup{colorlinks={true},urlcolor={blue},linkcolor={DarkBlue},citecolor=[named]{DarkGreen}}
\usepackage{microtype}
\usepackage[capitalise,nameinlink,noabbrev]{cleveref}
\usepackage[backend=biber, style=alphabetic, maxbibnames=999, maxalphanames=999, backref=true]{biblatex}
\addbibresource{DFPA-arxiv.bib}

\usepackage{enumitem}

\usepackage{subcaption}
\usepackage{tikz}
\usetikzlibrary{patterns}
\usetikzlibrary{patterns.meta}
\usetikzlibrary{math}

\usepackage{doi}

\renewcommand{\paragraph}[1]{\medskip\noindent\textbf{#1.\;}}

%%%%MACROS%%%%
\def \R{\mathbb R}
\def \N{\mathbb N}

\newcommand{\sset}[1]{\left\{ #1\right\}}
\newcommand{\ssets}[1]{\{ #1\}}
\newcommand{\fwh}[1]{\; \left| \; #1 \right.}

\newcommand{\card}[1]{\left| #1 \right|}
\newcommand{\cards}[1]{| #1 |}

\newcommand{\bigunion}{\bigcup}     

\newcommand{\ifif}{\Longleftrightarrow} 
\newcommand{\inters}{\cap}

\newcommand{\norm}[1]{\left\lVert #1\right\rVert}

\DeclareMathOperator*{\expectation}{\mathbb E}
\newcommand{\expect}[2][]{\expectation_{#1}\nolimits\left[#2\right]}

\DeclareMathOperator*{\probability}{\mathrm{Pr}}
\newcommand{\prob}[1]{\probability\left[#1\right]}

\DeclareMathOperator*{\argmax}{argmax}

\DeclareMathOperator*{\supportdistro}{\mathrm{supp}}
\newcommand{\support}[1]{\supportdistro\left(#1\right)}

\renewcommand\vec{\bm}

\newcommand{\ppad}{\textup{PPAD}\xspace}

 %nodes of circuit
 %number of bidders

\newcommand*{\poly}{\operatorname{poly}}

\def\pcircuit{\textup{\textsc{Pure-Circuit}}\xspace}
\newcommand{\PURE}{\textup{\textsf{PURIFY}}\xspace}
\newcommand{\NOT}{\textup{\textsf{NOT}}\xspace}

\newcommand{\AND}{\textup{\textsf{AND}}\xspace}
\newcommand{\val}[1]{\boldsymbol{\mathrm{A}}[#1]}
\newcommand{\valonly}{\boldsymbol{\mathrm{A}}}
\newcommand{\garbo}{\ensuremath{\bot}\xspace}
\newcommand{\cbidder}{\ensuremath{\textsf{const}}\xspace}
\newcommand{\variables}{\ensuremath{X}\xspace}
\newcommand{\uvar}{\ensuremath{x}\xspace}
\newcommand{\vvar}{\ensuremath{y}\xspace}
\newcommand{\wvar}{\ensuremath{z}\xspace}

%%%%%%%%%%%%%%%%%%

\theoremstyle{definition}
\newtheorem{definition}{Definition}

\theoremstyle{plain}
\newtheorem{theorem}{Theorem}[section]
\newtheorem{lemma}[theorem]{Lemma}
\newtheorem{corollary}[theorem]{Corollary}

\Crefname{claim}{Claim}{Claims}
\newtheorem{inftheorem}{Informal Theorem}

\theoremstyle{definition}
\newtheorem{remark}{Remark}
\newtheorem{example}{Example}
\newtheorem{question}{Open Question}

\title{On the Computation of Equilibria in Discrete First-Price Auctions\thanks{Aris Filos-Ratsikas was supported by the UK Engineering and Physical Sciences Research Council (EPSRC) grant EP/Y003624/1. Charalampos Kokkalis was supported by an EPSRC DTA Scholarship (Reference EP/W524384/1). Some preliminary work on this project was performed during a visit of Alexandros Hollender to the University of Edinburgh under the SICSA Distinguished Visiting Fellowship scheme.\newline 
\indent A preliminary version of this paper appeared in EC'24.}}

\author{
\begin{tabular}{c c}
& \\ \textbf{Aris Filos-Ratsikas} & \textbf{Yiannis Giannakopoulos}\\
\small{University of Edinburgh, United Kingdom} & \small{University of Glasgow, United Kingdom} \\
\href{mailto:Aris.Filos-Ratsikas@ed.ac.uk}{\small{\texttt{aris.filos-ratsikas@ed.ac.uk}}} & \href{mailto:yiannis.giannakopoulos@glasgow.ac.uk}{\small{\texttt{yiannis.giannakopoulos@glasgow.ac.uk}}}\\
& \\
\textbf{Alexandros Hollender} & \textbf{Charalampos Kokkalis}\\
\small{University of Oxford, United Kingdom} & \small{University of Edinburgh, United Kingdom} \\
\href{mailto:alexandros.hollender@cs.ox.ac.uk}{\small{\texttt{alexandros.hollender@cs.ox.ac.uk}}} & \href{mailto:charalampos.kokkalis@ed.ac.uk}{\small{\texttt{charalampos.kokkalis@ed.ac.uk}}}
\end{tabular}}

\date{}

\begin{document}
\maketitle

\begin{abstract}
We study the computational complexity of computing Bayes-Nash equilibria in first-price auctions with discrete value distributions and discrete bidding space, under general subjective beliefs. 
It is known that such auctions do not always have pure equilibria. In this paper, we prove that the problem of deciding their existence is NP-complete, even for approximate equilibria. 
On the other hand, it can be shown that \emph{mixed} equilibria are guaranteed to exist; however, their computational complexity has not been studied before. We establish the PPAD-completeness of computing a mixed equilibrium and we complement this by an efficient algorithm for finding symmetric approximate equilibria in the special case of iid priors. %  
En route to these results, we develop a computational equivalence framework between continuous and discrete first-price auctions, which can be of independent interest, and which allows us to transfer existing positive and negative results from one setting to the other. 
Finally, we show that correlated equilibria of the auction can be computed in polynomial time.
\end{abstract}

\clearpage
\setcounter{tocdepth}{2}
\tableofcontents

\clearpage
\section{Introduction}

The first-price auction is arguably the simplest and most intuitive auction format: there is one good for sale, which gets allocated to the highest bidder (breaking ties uniformly at random), for a payment equal to her submitted bid. In its primitive forms, this type of auction dates back to ancient times \parencite{cassady1967auctions}, and became quite popular during the 17th century for the sale of paintings and estates. In recent times, the first-price auction has been central to the sale of online advertising space, known as the multi billion-dollar industry of ``ad impressions'' (e.g., see \parencite{balseiro2021robust})\footnote{In fact, the sale of ad impressions accounted for 58\% of Google's revenue in 2022, for an amount of \$162.45 billion; see \parencite{statista}.}. Indeed, most digital marketplaces, known as \emph{ad exchanges} ---including Google's Ad Manager--- are currently using first-price auctions for the sale of their ad impressions \parencite{paes2020competitive,despotakis2021first,conitzer2022multiplicative}. This mass migration to the first-price auction from other auction formats, such as the celebrated second-price auction \parencite{vickrey1961counterspeculation}, was coined ``the first-price movement'' and was mainly attributed to the auction's simplicity for the users, but also to increased revenue for the publishers. Indeed, in a survey of 100 publishers in November 2018, 78\% reported an increase in revenue due to migrating to the first-price auction~\parencite{digiday}. 

Contrary to its famous second-price counterpart, the first-price auction provides incentives to the participants to strategically choose their bids. For example, consider a bidder that is willing to pay \$10 for the good. If the bidder believes that it is unlikely for any other bidder to be willing to bid an amount equal or close to \$10, she would \emph{underbid}, aiming to win the good at a lower price. This induces a \emph{strategic game}, in which the bidders choose their bids aiming to maximize their personal utilities. Inherent in this game is the notion of a \emph{belief} of a bidder for the values (the aforementioned ``willingness to buy'') of the other bidders, and hence, in game-theoretic terms, this is a \emph{Bayesian game} of incomplete information. These games were formally introduced in the seminal work of~\textcite{harsanyi1967games} and capture situations in which the bidders are uncertain about the type of opponents they are facing in the game. The stable outcomes of these games, where no bidder wishes to unilaterally change her strategy, are known as \emph{Bayes-Nash equilibria} (or simply, \emph{equilibria}).

In the context of auctions, incomplete information is typically captured by probability distributions (also known as \emph{priors}) over the possible values that the other bidders have for the item. Interestingly, \citeauthor{vickrey1961counterspeculation}, in his seminal paper in 1961, defined and analysed a Bayesian game for the first-price auction, even before \citeauthor{harsanyi1967games}'s formal introduction of the concept. Following his work, a plethora of papers in the economics literature studied the equilibria of first-price auction; see, e.g., \parencite{vickrey1961counterspeculation,griesmer1967toward,maskin1985auction,maskin2000equilibrium,maskin2003uniqueness,chwe1989discrete,plum1992characterization,lebrun1996existence,lebrun1999first,lebrun2006uniqueness,lizzeri2000uniqueness,Athey2001}. The focus of these works is primarily on showing the existence of equilibria for different variants of the auction, their (non)-uniqueness, and in relatively few cases how to ``find'' them, i.e., deriving closed-form solutions that characterize them. 

In the computer science community, \textcite{fghlp2021_sicomp} recently put forward the study of the \emph{computational complexity} of finding these equilibria; that is, devising polynomial-time algorithms to compute them, or proving computational hardness results for appropriate complexity classes. They studied a setting with (subjective) continuous priors and discrete bids, for which the existence of a (pure) equilibrium is guaranteed by the work of \textcite{Athey2001}. Among other results, they showed that computing an equilibrium of the auction is PPAD-complete in general, but polynomial-time solvable in special cases. The class PPAD was defined by \textcite{papadimitriou1994complexity} and is believed to contain problems that are computationally hard to solve, including finding a mixed Nash equilibrium in general games \parencite{daskalakis2009complexity,chen2009settling}, among many others. Following up on \parencite{fghlp2021_sicomp}, \textcite{chen2023complexity} considered the same setting with \emph{common} priors and provided a PPAD-completeness result, but under a different tie-breaking rule, rather than uniform tie-breaking. 

A common feature of the aforementioned works is that they assume the priors to be \emph{continuous}, meaning that the values are drawn from continuous distributions over an infinite domain. While this assumption is quite common in the auction literature, considering auctions with \emph{discrete} values has clear merits as well. For example, such priors are often formed based on past bidding behaviour, or can be interpreted as the prices of reselling the good\footnote{Experimental works in auctions, e.g.~\parencite{coppinger1980incentives,cox1988theory}, refer to the values explicitly as ``resale values'' when communicating them to the participants.}; in either case, they are inherently of a discrete nature \parencite{escamocher2009existence}. Additionally, in experiments designed to evaluate the incentive properties of the auction, the values are ultimately chosen from a discrete set, e.g., see \parencite{coppinger1980incentives,cox1982theory,cox1983tests,cox1988theory,harrison1989theory}. Furthermore, cryptographic implementations of the auction, which are based on \emph{perturbations} of the values, are contingent on the discreteness in order to reason about equilibria in the perturbed space \parencite{miltersen2009privacy}. Finally, the setting of discrete values and discrete bids makes the problem more amenable to computational approaches, as the representation of the inputs (the priors) and the outputs (the strategies) is straightforward. In contrast, the issue of representation in continuous value spaces is a rather intricate one, and requires additional monotonicity assumptions on the strategies; see the discussion in \parencite{fghlp2021_sicomp} for more details. Other works that consider discrete priors in auctions include \parencite{cai2012algorithmic,fu2014optimal,papadimitriou2015optimal,koccyiugit2018robust,albert2022mechanism}.

Motivated by the above, we consider the problem of computing equilibria in first-price auctions with discrete values and discrete bids. We phrase the following general question:

\begin{question}
What is the computational complexity of finding equilibria in first-price auctions with discrete values and discrete bids?
\end{question}

\subsection{Our Results and Discussion}

\subsubsection{Pure equilibria} Contrary to the case of continuous values, it is known that in a first-price auction with discrete values, \emph{pure} Bayes-Nash equilibria may not even exist \parencite{maskin1985auction,escamocher2009existence}. This motivates the computational question of \emph{deciding} their existence. To this end, we provide our first main result:

\begin{inftheorem}\label{th:NP-completeness-informal}
    The problem of deciding the existence of a pure Bayes-Nash equilibrium, in a first-price auction with discrete subjective priors and discrete bids, is NP-complete.
\end{inftheorem}
The term ``subjective'' in the statement of \cref{th:NP-completeness-informal} refers to the fact that different bidders may have different beliefs about the values of other bidders. Subjective priors are also present in most of the results in \parencite{fghlp2021_sicomp}, as well as in the definition of general Bayesian games (e.g., see \cite{harsanyi1967games,myerson1997book,Jehle2001a}. Related notions of subjectivity have been considered rather widely in the literature, e.g., see \cite{hahn1973notion,fudenberg1986limit,battigalli1992learning,battigalli1988conjectural,kalai1993rational,rubinstein1994rationalizable,witkowski2012peer,frongillo2016geometric}. An interesting special case of subjective priors is the well-known \emph{independent private values (IPV)} model, where the prior distribution for every bidder $i$ is the same from the perspective of any other bidder $j$ (see, e.g., \parencite{maskin1985auction}).

\subsubsection*{$\varepsilon$-approximate Equilibria} In the context of game theory, it is often useful to talk about $\varepsilon$-approximate equilibria, i.e., sets of strategies where any unilateral deviation could improve a player's utility by at most $\varepsilon$ (here, in the additive sense). With regard to computational complexity in particular, this might be necessary, as exact equilibria in certain settings might involve irrational quantities in their description, making them a priori impossible to represent on a computer; see \parencite{fghlp2021_sicomp} for a discussion related to the first-price auction, and also \parencite{goldberg2011survey} for a more general context. In our setting with discrete values and bids, this is not an issue: all pure equilibria are rational. Still, in light of the aforementioned non-existence and NP-hardness results, one might wonder: ``Could $\varepsilon$-approximate equilibria always exist, and could we perhaps compute them in polynomial time?''. It turns out that the answer to both of these questions is ``no'', at least when $\varepsilon$ is a reasonably small constant.
In fact, the formal version of~\cref{th:NP-completeness-informal} is an \emph{inapproximability} result for $\varepsilon$-approximate equilibria, which we present after we establish a strengthening of the non-existence results of \parencite{maskin1985auction,escamocher2009existence} to approximate equilibria as well.  

\subsubsection{Mixed Equilibria} Motivated by the non-existence of pure equilibria for discrete values, we then turn our attention to \emph{mixed} Bayes-Nash equilibria of the first-price auction, in which agents are allowed to randomize among different bids; note that previous works \parencite{fghlp2021_sicomp,chen2023complexity} focus exclusively on the complexity of pure equilibria. The existence of such mixed equilibria is guaranteed, due to known existence results in general Bayesian games; see also our discussion in \cref{sec:correlated-equilibria}. This renders the (mixed) equilibrium-finding problem a \emph{total} search problem; i.e., one that always has a solution, and thus a candidate for membership in the aforementioned class PPAD. Our second main result establishes that the problem is, in fact, PPAD-complete.

\begin{inftheorem}\label{th:PPAD-completeness-informal}
    The problem of computing a mixed Bayes-Nash equilibrium of a first-price auction with discrete subjective priors and discrete bids is PPAD-complete.
\end{inftheorem}

We remark that, similar to \cref{th:NP-completeness-informal}, the PPAD-hardness part of \cref{th:PPAD-completeness-informal} also applies even to $\varepsilon$-approximate equilibria, for a constant $\varepsilon$, whereas the PPAD-membership applies to approximate equilibria for the level of accuracy required by the representation (i.e., inversely-exponential in the input size).

One could conceivably exploit the nature of the aforementioned general existence proofs, and combine them with the general membership technique of~\textcite{Daskalakis2006}, to obtain the PPAD-membership part of our result in \cref{th:PPAD-completeness-informal}. Here we take a different route, by establishing an interesting connection between the discrete and continuous auction settings. In particular, we prove a \emph{computational equivalence} between $\varepsilon$-approximate \emph{mixed} equilibria of the discrete first-price auction and $\varepsilon'$-approximate \emph{pure} equilibria of the continuous variant. This equivalence could serve as a useful tool in the future: the gist of our framework is that one can effectively think of pure equilibria in the continuous setting and mixed equilibria in the discrete setting as two sides of the same coin, allowing for translating results between the two settings.  

For example, in this paper we will use this equivalence to derive the aforementioned PPAD-membership as a corollary of the corresponding result for pure equilibria~\parencite{fghlp2021_sicomp}, by transforming the discrete auction into a continuous one, and computing a pure equilibrium there. The other direction of the equivalence yields a PPAD-hardness result, which is however somewhat unsatisfactory, as it only applies to \emph{monotone} equilibria. This is a by-product of the fact that, as mentioned earlier, pure equilibria of the continuous auction need to be monotone for reasons related to their representation \parencite{fghlp2021_sicomp}. For that reason, we provide a stand-alone, stronger PPAD-hardness result for all equilibria (not necessarily monotone) via a direct reduction from the PPAD-complete problem \textsc{Pure Circuit} of \textcite{deligkas2022pure}. This allows also our PPAD-hardness to hold even for instances with very simple bidding spaces.

\subsubsection{Positive Results} The NP- and PPAD-hardness results of the aforementioned theorems showcase the computational challenges in the quest for finding equilibria for general prior distributions. In quest for positive results, we first consider a natural restriction of these priors, to be independent and identically distributed (iid), and the question of computing \emph{symmetric} mixed Bayes-Nash equilibria, i.e., equilibria in which all the bidders use the same bidding strategy. For this case we provide a PTAS (polynomial-time approximation scheme); that is, an algorithm that finds an $\varepsilon$-approximate equilibrium for any $\varepsilon$, in time polynomial in the description of the auction, but possibly exponential in $1/\varepsilon$. In particular, if we aim for an $\varepsilon$-approximate equilibrium with $\varepsilon$ constant, then this yields a polynomial-time algorithm. We state the corresponding theorem informally below:

\begin{inftheorem}\label{th:informal-ptas}
The problem of computing a symmetric $\varepsilon$-approximate mixed Bayes-Nash equilibrium of a first-price auction with iid priors admits a PTAS. 
\end{inftheorem}
Our proof is based on formulating the equilibrium computation problem as a system of polynomial inequalities, which can be solved within inversely-exponential precision using known results from the literature (e.g., see \parencite{grigor1988solving}). To ensure that the running time of our algorithm is polynomial, we need to make sure that the number of variables in the system is small; in our case, that would correspond to the value and bidding spaces being of fixed size. For the former, we exploit the nature of monotone strategies and devise a succinct representation which we refer to as \emph{support-representation}. For the latter, we prove a ``shrinkage lemma'', which enables us to work on a substantially smaller bidding space of size $O(1/\varepsilon)$, find an $\varepsilon'$-approximate equilibrium there, and translate it to an $(\varepsilon+\varepsilon')$-equilibrium of the original auction. 

Finally, we consider \emph{correlated} equilibria, a more general equilibrium notion due to \textcite{aumann74}. In Bayesian games, there are several conceivable definitions of correlated equilibria; see, e.g., \parencite{aumann1987correlated,bergemann2013robust,bergemann2016bayes,forges-5defs-ce,forges-revisited,forges2023correlated} and \parencite[Sec. 6.3]{myerson1997book}. In this paper we adopt a standard notion, defined via the \emph{type-agent representation},\footnote{Some works refer to this representation as the \emph{agent normal form representation}, e.g., see \parencite{hartline2015no,ahunbay2024uniqueness}. We use the term ``type-agent representation'' instead, following \textcite[Sec.~2.8]{myerson1997book}.} a canonical normal-form game representation of Bayesian games that preserves the underlying fundamental equilibrium structure of the auction (e.g., see \cite[Sec.~2.8]{myerson1997book}, \cite[Sec.~7.2.3]{Jehle2001a}), also adopted in \parencite{hartline2015no,ahunbay2024uniqueness}. 
Our contribution here, which is presented in \cref{sec:correlated-equilibria}, is showing that (exact) correlated equilibria of the first-price auction can be computed in polynomial time.

\subsection{Further Related Work}

As we mentioned in the introduction, \textcite{fghlp2021_sicomp} were the first to study the complexity of equilibrium computation in first-price auctions with continuous priors and discrete bids, providing a PPAD-completeness result for the case of subjective priors. In follow-up work, \textcite{chen2023complexity} considered the IPV setting and proved a PPAD-completeness under an alternative, somewhat involved tie-breaking rule for the possible winners, rather than the standard uniform tie-breaking rule. Our shrinkage lemma is inspired by a conceptually similar idea in \parencite{chen2023complexity}, but, contrary to their work, it refers to auctions with discrete values (rather than continuous), subjective priors (rather than IPV), and mixed equilibria (rather than pure equilibria).

The setting with discrete values (and discrete bids), which we study in the present paper, was first studied by \cite{escamocher2009existence}, who provided preliminary results limited mainly to $2$ bidders with bivalued iid distributions. In a conceptually related paper, \textcite{wang2020bayesian} considered the computation of equilibria in first-price auctions with discrete values and continuous bids under a non-standard tie-breaking rule, which was used primarily by \textcite{maskin2000equilibrium} as an intermediate step to prove results for uniform tie-breaking. Other works that study the complexity of equilibrium computation in Bayesian auctions, and Bayesian games in general, include \parencite{Gottlob:2007aa,Conitzer:2008aa,Papadimitriou2008a,cai2014simultaneous}.

\section{Preliminaries}

In a (discrete, Bayesian) \emph{first-price auction (DFPA)}, there is a set
$N=\{1,2,\ldots,n\}$ of \emph{bidders} and one item for sale. Each
bidder $i$ has a \emph{value} $v_i\in V_i$ for the item and submits a \emph{bid} $b_i \in B$.
Sets $V_1,V_2,\dots,V_n,B$ are \emph{finite} subsets of $[0,1]$ and are called the \emph{value spaces} of the bidders and the \emph{bidding space} of the auction, respectively.

The item is allocated to the highest bidder, who has to submit a payment equal to her bid. In case of a tie for the highest bid, the winner is determined according to the
\emph{uniform tie-breaking} rule.
That is, for a \emph{bid profile} $\vec{b}=(b_1,\ldots,b_n)$, the \emph{ex-post utility} of bidder $i$ with value $v_i$ is defined as
\begin{equation}
\label{eq:ex_post_utilities}
\tilde{u}_i(\vec{b};v_i) \coloneq 
\begin{cases}
\frac{1}{\cards{W(\vec{b})}}(v_i-b_i), & \text{if}\;\; i\in W(\vec{b}), \\
0, & \text{otherwise}, 
\end{cases}
\qquad\text{where}\;\; W(\vec{b})=\argmax_{j\in N} b_j
\end{equation}

\paragraph{Bayesian priors}
Each bidder $i\in N$ has a \emph{subjective belief} for the values of each of the other bidders $j\in N\setminus\ssets{i}$, in the form of a \emph{prior} distribution $F_{i,j}$ over $V_{j}$. This induces a product distribution $\vec{F}_{-i}\coloneq\times_{j\neq i} F_{i,j}$ for the values $\vec{v}_{-i}=(v_1,\dots,v_{i-1},v_{i+1},\dots, v_n)\in \times_{j\in N\setminus\ssets{i}} V_j\eqqcolon\vec{V}_{-i}$ of the other bidders. In
other words, from the perspective of bidder $i$, the values $v_j$ for $j\neq i$ are
drawn \emph{independently} from distributions $F_{i,j}$.

We will also be interested in special cases of these Bayesian priors, namely: 
\begin{itemize}
    \item[---] \emph{Independent Private Values (IPV)}, where $F_{i,j} = F_{i',j}$ for all $j \in N$ and $i, i' \in N\setminus\ssets{j}$. In this case, notation can be simplified by using $F_j$ instead of $F_{i,j}$. This can be interpreted as the value profile $\vec{v}\in \times_{i\in N} V_i$ being drawn from the product distribution $\vec{F}=\times_{j\in N} F_j$.
    \item[---] \emph{Identical Independent Values (iid)}, which is a special case of IPV above where $V_{i}=V_{i'}$ and $F_{i}=F_{i'}$ for all $i,i'\in N$. In other words, bidder values are iid according to a common distribution $F$.
\end{itemize}

\subsection{The Auction Game}
\label{sec:model}

The DFPA described above gives rise to a Bayesian game of incomplete information, where each
bidder $i$ chooses her bid based on her own (true) value $v_i$ and her beliefs
$\vec{F}_{-i}$ about the other bidders. 
A (mixed) \emph{strategy} of bidder $i$ is a function $\beta_i:
V_i \rightarrow \Delta(B)$ mapping values to distributions over bids.\footnote{See, e.g., \cite[Sec.~3.9]{myerson1997book}, \cite[Sec.~7.2.3]{Jehle2001a}, \parencite{milgrom1985distributional,lucier2017equilibria}.}$^{,}$\footnote{Formally, for any finite set $X$ we define $\Delta(X)$ as the simplex 
$$\Delta(X)\coloneq \sset{p\in[0,1]^X\fwh{\sum_{x\in X} p(x)=1}},$$ 
where $p(x)$ can be interpreted as the mass that the distribution assigns to element $x\in X$. We will denote the support of a distribution $p\in \Delta(X)$ with $\support{p}\coloneqq \sset{x\in X \fwh{p(x)>0}}$.} 
\emph{Pure} strategies correspond to the special case where a mixed strategy
$\beta_i$ always assigns full mass on single bids; that is, for all $v_i\in V_i$
there exists a $b_i \in B$ such that
$\beta_i(v_i)(b_i)=1$ and
$\beta_i(v_i)(b)=0$ for all $b\neq b_i$. Therefore,
for simplicity, we will sometimes represent pure strategies directly as functions
$\hat\beta_i:V_i \to B$ from values to bids.

Given a strategy profile $\vec{\beta}_{-i}\in \times_{j\in N\setminus\ssets{i}}\Delta(B)^{V_j}$ of the other bidders, the (interim) \emph{utility} of a bidder $i$ with value $v_i$, when bidding $b\in B$, is given by
\begin{align}
u_i(b,\vec{\beta}_{-i};v_i) 
    &\coloneq\expect[\vec v_{-i}\sim \vec{F}_{-i}]{\expect[\vec b_{-i}\sim \vec{\beta}_{-i}(\vec{v}_{-i})]{\tilde{u}_i(b,\vec{b}_{-i};v_i)}} \notag\\
    &=  \sum_{\vec{v}_{-i}\in \vec{V}_{-i}}\left(\prod_{j\in N\setminus\ssets{i}} f_{i,j}(v_j)\right)\sum_{\vec{b}_{-i}\in B^{N\setminus\ssets{i}}}\left(\prod_{j\in N\setminus\ssets{i}} \beta_j(v_j)(b_j)\right)\tilde{u}_i(b,\vec{b}_{-i};v_i).\label{eq:DFPA-utility-interim-mixed}
\end{align}
where $\vec{\beta}_{-i}(\vec{v}_{-i})$ is a shorthand for the product distribution
$\times_{j\in N\setminus\ssets{i}}\beta_j(v_j)$, and $f_{i,j}$ denotes the probability mass function (pmf) of $F_{i,j}$. Recall that $\beta_j(v_j)(b_j)$ denotes the probability that bidder $j \neq i$ submits bid $b_j$
when having value $v_j$.
This can be viewed as the bidder computing her utility as follows: (a) she draws a value
$v_j$ for each bidder $j \neq i$ from her corresponding subjective prior $F_{i,j}$; (b) she uses the strategy ``rules'' $\vec{\beta}_{-i}$
of the others to map the values obtained in (a) to actual bid distributions over $B$.

 For
convenience, we also extend, in the natural way, the definition above to handle
the case where bidder $i$ randomizes over her bids; that is, for
$\vec{\gamma}\in\Delta(B)$ we define:
\begin{equation}
\label{eq:DFPA-utility-interim-mixed-randomized-bidding}
u_i(\vec{\gamma},\vec{\beta}_{-i};v_i) 
\coloneq \expect[b\sim \vec{\gamma}]{u_i(b,\vec{\beta}_{-i};v_i)}
= \sum_{b\in B} \vec{\gamma}(b) u_i(b,\vec{\beta}_{-i};v_i).
\end{equation}

\paragraph{Equilibria}
We proceed to define the two main (exact and approximate) equilibrium notions that we will study in this paper. 

\begin{definition}[$\varepsilon$-approximate mixed Bayes-Nash equilibrium of the DFPA]\label{def:approx-mixed-bayes-nash-equilibrium}
Let $\varepsilon \geq 0$.
A (mixed) strategy profile $\vec{\beta}=(\beta_1, \ldots, \beta_n)$ is an (interim) $\varepsilon$-approximate mixed Bayes-Nash equilibrium (MBNE) of the DFPA if for any bidder $i \in N$ and any value $v_i \in V_i$, 
\begin{equation}
\label{eq:MBNE-def-condition-full}
u_i(\beta_i(v_i),\vec{\beta}_{-i};v_i) \geq u_i(\vec{\gamma},\vec{\beta}_{-i};v_i) - \varepsilon \qquad \text{for all}\;\; \vec{\gamma}\in \Delta(B).
\end{equation}
We will refer to a $0$-approximate MBNE as an \emph{exact} MBNE.
\end{definition}
In the special case where all bidders choose the same strategies, i.e.\ $V_i=V_i'$ and $\beta_i=\beta_{i'}$ for all $i,i'\in N$, the (approximate) equilibrium will be called \emph{symmetric}.
\begin{remark}
    \label{note:MBNE-def-pure-deviation}
    It is straightforward to check that condition~\eqref{eq:MBNE-def-condition-full} can be equivalently stated to just range over all \emph{pure} deviations $\gamma\in B$ (instead of all \emph{mixed} ones $\vec{\gamma}\in\Delta(B)$), without affecting~\cref{def:approx-mixed-bayes-nash-equilibrium}. 
\end{remark}

It is known (see, e.g., \cite[Theorem~7.3]{Jehle2001a}) that finite Bayesian
games (and thus, DFPA as well) always have at least one exact MBNE. Therefore, existence of $\varepsilon$-approximate MBNE is
also guaranteed for any $\varepsilon > 0$. 

Similarly, one can define the notion of pure equilibria:

\begin{definition}[$\varepsilon$-approximate pure Bayes-Nash equilibrium of the DFPA]\label{def:approx-pure-bayes-nash-equilibrium}
Let $\varepsilon \geq 0$.
A pure strategy profile $\hat{\vec{\beta}}=(\hat\beta_1, \ldots, \hat\beta_n)$ is an (interim) $\varepsilon$-approximate pure Bayes-Nash equilibrium (PBNE) of the DFPA if for any bidder $i \in N$ and any value $v_i \in V_i$, 
\[
u_i(\hat\beta_i(v_i),\hat{\vec{\beta}}_{-i};v_i) \geq u_i(b,\hat{\vec{\beta}}_{-i};v_i) - \varepsilon \qquad \text{for all}\;\; b\in B.
\]
\end{definition}
In contrast to mixed equilibria, discrete first-price auctions do not, in general, have exact pure equilibria~\cite{escamocher2009existence}. In~\cref{th:tight_existence_approx}, we extend this nonexistence result to \emph{approximate} (pure) equilibria.

\paragraph{No overbidding} Following the literature of first-price auctions in economics \parencite{maskin2000equilibrium,maskin2003uniqueness,lebrun2006uniqueness}, as well as in computer science \parencite{fghlp2021_sicomp,wang2020bayesian,escamocher2009existence,chen2023complexity}, we will also make the standard no overbidding assumption. In our context, this translates into strategies never assigning positive probability to some bid $b_i$ which is larger than the bidder's value $v_i$. Note that, if we allow the bidders to abstain from the auction, by adopting what is known as the \emph{null bid} assumption in the literature (see~\parencite{maskin1985auction,Athey2001}), i.e.\ that $0\in B$, then any overbidding strategy is weakly dominated by a strategy that transfers any bidding probability mass from a $b_i > v_i$ to the zero bid.

\paragraph{Representation} For our model to be fit for computational purposes, we have to determine how exactly the inputs and the outputs of the associated computational problems will be represented. Similarly to \parencite{fghlp2021_sicomp,chen2023complexity}, we will assume that the bidding space $B$ is given explicitly as part of the input in the form of rational numbers. The value spaces $V_i$ will also be given explicitly, in the same way. 
The value distributions $F_{i,j}$ will be described explicitly via their probability mass functions. That is, we are given a list of rationals $\{f_{i,j}(v)\}_{v\in V_j}$ representing the probability that bidder $j$ has true value $v$, from the perspective of bidder $i$. 
The mixed strategies in the output are described explicitly by (rational) numbers $\ssets{p_{i}(v,b)}\in[0,1]$ representing the probability that bidder $i$ submits bid $b\in B$ when her true value is $v\in V_i$. Note that, for the special case of pure strategies it is $\ssets{p_{i}(v,b)}\in\ssets{0,1}$.  

\subsection{Efficient Computation of Bidder Utilities}\label{sec:key-properties}
We conclude our preliminaries with the following lemma, which will be useful for several of our results throughout the paper. The lemma establishes that the expected utilities, given mixed bidding strategies of the bidders, can be computed in polynomial time. We remark that \textcite[Lemma~3.2]{fghlp2021_sicomp} proved a similar lemma for the case of pure strategies (and continuous value spaces). Our proof is similar, but extra care is needed in order to handle the case of mixed strategies; we use similar notation to \parencite{fghlp2021_sicomp}, for consistency.

\begin{lemma}
    \label{lemma:DFPA-utilities-efficient}
    Fix a DFPA. For any bidder $i\in N$ and any true value $v_i\in V_i$,
        the utility\footnote{See~\eqref{eq:DFPA-utility-interim-mixed-randomized-bidding} and \eqref{eq:DFPA-utility-interim-mixed} for the utility definition.} $u_i(\vec\gamma,\vec{\beta}_{-i};v_i)$ of $i$, given as input any distribution $\vec{\gamma}$ over her bids and any mixed bidding strategy profile $\vec{\beta}_{-i}$ of the other bidders, is computable in polynomial time.
\end{lemma}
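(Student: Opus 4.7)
The plan is to compute $u_i(\vec\gamma,\vec{\beta}_{-i};v_i)$ by first handling a single pure bid $b\in B$ and then taking the convex combination given by \eqref{eq:DFPA-utility-interim-mixed-randomized-bidding}. The naive evaluation of \eqref{eq:DFPA-utility-interim-mixed} has two nested sums, over $\vec V_{-i}$ and $B^{N\setminus\{i\}}$, both of size exponential in $n$, so the work is to exploit independence to collapse them.

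The first key observation is that, from bidder $i$'s viewpoint, the pairs $(v_j,b_j)$ are mutually independent across $j\neq i$: the subjective prior factorises as $\vec F_{-i}=\times_{j\neq i} F_{i,j}$, and conditional on $v_j$ the bid is drawn from $\beta_j(v_j)$ only. Consequently, for each opponent $j$ and each level $b\in B$, the entire joint distribution of $(v_j,b_j)$ that matters for $i$'s utility compresses into two scalars
\[
 p_j^{<}(b)\coloneq \sum_{v\in V_j} f_{i,j}(v)\sum_{\substack{b'\in B\\ b'<b}}\beta_j(v)(b'), \qquad p_j^{=}(b)\coloneq \sum_{v\in V_j} f_{i,j}(v)\,\beta_j(v)(b),
\]
each computable in time polynomial in the input size.

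The second observation handles uniform tie-breaking. Because $\tilde u_i=0$ whenever some opponent strictly outbids $i$, the only events that contribute are those in which all opponents bid at most $b$; if exactly $k$ of them bid $b$, then $i$'s ex-post utility is $(v_i-b)/(k+1)$. Using independence, the probability $D_k$ of such an event is the $k$-th coefficient of the polynomial
\[
\prod_{j\neq i}\bigl(p_j^{<}(b)+p_j^{=}(b)\,x\bigr),
\]
which a standard convolution-style dynamic program (multiplying in one opponent at a time, starting from the constant $1$) produces for all $k=0,1,\dots,n-1$ in $O(n^2)$ arithmetic operations. Then
\[
 u_i(b,\vec{\beta}_{-i};v_i)=(v_i-b)\sum_{k=0}^{n-1}\frac{D_k}{k+1},
\]
and a final sum $\sum_{b\in B}\vec\gamma(b)\,u_i(b,\vec{\beta}_{-i};v_i)$ delivers $u_i(\vec\gamma,\vec{\beta}_{-i};v_i)$ in overall polynomial time.

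There is no real obstacle beyond the bookkeeping; the only subtlety is accommodating the $1/(k+1)$ factor from uniform tie-breaking, which is resolved cleanly once the distribution over the number of ties is produced by the dynamic program. The argument parallels \cite[Lemma~3.2]{fghlp2021_sicomp} for the pure/continuous setting, with the averaging over the mixed bids $\beta_j(v_j)$ absorbed transparently into the definitions of $p_j^{<}$ and $p_j^{=}$.
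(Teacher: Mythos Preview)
Your proposal is correct and essentially identical to the paper's proof: your quantities $p_j^{<}(b)$ and $p_j^{=}(b)$ are exactly the paper's $G_{j,b}$ and $g_{j,b}$, your $D_k$ is the paper's $T_n(b,n-1,k)$, and your ``multiply in one opponent at a time'' polynomial convolution is precisely the dynamic-programming recurrence the paper writes out explicitly. The generating-function phrasing is a clean way to package the same $O(n^2)$ computation, but there is no substantive difference in the argument.
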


\begin{proof}
    Without loss of generality, we can reorder the bidders such that the one whose utility we are calculating is the last, i.e., $n=\card{N}$. Let $H_n(b,\vec{\beta}_{-n})$ denote the probability of bidder $n$ winning when bidding $b\in B$ against the mixed strategies $\vec{\beta}_{-n}$ of the remaining bidders (based on bidder $n$'s subjective beliefs). We can write bidder $n$'s utility when bidding $b$ as $u_n(b,\vec{\beta}_{-n};v)=(v-b)H_n(b,\vec{\beta}_{-n})$, so it suffices to show how to efficiently compute $H_n(b,\vec{\beta}_{-n})$. We can express this as:
    \begin{equation}
    \label{eq:H-function}
        H_n(b,\vec{\beta}_{-n})=\sum_{r=0}^{n-1}\frac{1}{r+1}T_n(b,n-1,r)
    \end{equation}
    where, for $0\leq r \leq \ell \leq n-1$, we use $T_n(b,\ell,r)$ to denote the probability that \emph{exactly} $r$ out of the first $\ell$ bidders bid $b$, and the remaining $\ell - r$ bidders all bid below $b$. Once again, this probability is based on bidder $n$'s subjective priors. Next, for a given bidder $j$, let:
    \begin{align}
       g_{j,b} &:= \displaystyle \mathop{\mathbb{E}}_{v_{j} \sim F_{n,j}} \left[ \Pr_{\xi \sim \beta_j(v_j)}[\xi = b] \right] = \sum_{v_j \in V_j}\beta_j(v_j)(b)f_{n,j}(v_j) \label{eq:def-utility-function-g}\\
        G_{j,b} &:= \displaystyle \mathop{\mathbb{E}}_{v_{j} \sim F_{n,j}} \left[ \Pr_{\xi \sim \beta_j(v_j)}[\xi < b] \right] = \sum_{\substack{b' \in B, \\ b' < b}} g_{j,b'} \label{eq:def-utility-function-G}
    \end{align}
    where $f_{n,j}$ is the probability mass function corresponding to bidder $n$'s beliefs about bidder $j$'s value. What we have defined above is the probability (perceived from the perspective of bidder $n$) that bidder $j$ bids \emph{exactly} $b$ (denoted as $g_{j,b}$) and below $b$ (denoted as $G_{j,b}$). The computation of these takes into account two sources of randomness; first, we draw a value $v_j$ for bidder $j$ from the prior distribution $F_{n,j}$, and then we pick a bid $\xi$, given the distribution over bids that $j$'s strategy represents at the drawn value $\beta_j(v_j)$. It is easy to see that these quantities can be computed in time $O(|V||B|)$ with access to the pmf of the prior distributions as well as the distributions over bids corresponding to the mixed strategies.
    We can now express the term $T_n(b,n-1,r)$ from \eqref{eq:H-function} using the newly defined quantities of \cref{eq:def-utility-function-g,eq:def-utility-function-G}:
    \begin{equation}
    \label{eq:inefficient-T}
        T_n(b,n-1,r) = \sum_{\substack{S \subseteq [n-1] \\ \card{S}=r}} \prod_{j \in S}g_{j,b} \cdot \prod_{j \notin S}G_{j,b}     
    \end{equation}
    where we have used the notation $[n-1]$ to indicate the set of the integers from $1$ to $n-1$ inclusive.

    Note that \eqref{eq:inefficient-T} cannot be computed efficiently in any obvious way, since the number of summands can be exponential in $n$. To overcome this issue, we proceed with a dynamic programming algorithm for computing $T_n(b,n-1,r)$, in a similar manner to ~\parencite[Lemma 3.2]{fghlp2021_sicomp}.
    The probabilities $T_n(b, \ell, k)$ can be computed from $g_{j,b}$ and $G_{j,b}$ (which were defined in \cref{eq:def-utility-function-g,eq:def-utility-function-G} respectively), via dynamic programming, conditioning on bidder $\ell$'s bid in the following way:
    \begin{align*}
    	T_n(b,0,0) &=1;              &\\
    	T_n(b,\ell,k) &=0,&\quad\text{for}\;\; k>\ell;\\
    	T_n(b,\ell+1,0  ) &=T_n(b,\ell,0)G_{\ell+1,b};&\\
    	T_n(b,\ell+1,k+1) &=T_n(b,\ell,k)g_{\ell+1,b} + T_n(b,\ell,k+1)G_{\ell+1,b}; &\text{for}\;\; k\leq\ell.
    \end{align*}
    Thus, all values of $T_n(b,n-1,k)$, for $k=0,\ldots,n-1$, can be computed with a total number of $O(n^2)$ recursive calls, so that $H_n(b,\vec{\beta}_{-n})$ can be computed in polynomial time.
    This directly implies that the expected utility $u_n(\vec{\beta} ;v)$ of bidder $n$, given as input a mixed strategy profile $\vec{\beta}$, is efficiently computable, by virtue of it being a convex combination of at most $|B|$ pure strategy utilities.
\end{proof}

\section{Pure Equilibria}

In this section, we present our results for the pure Bayes-Nash equilibria (PBNE) of first-price auctions. First, it is already known from \parencite{escamocher2009existence} that in the case of discrete values, PBNE may not exist, even for very simple instances of two bidders and iid priors with support size $2$. In the following theorem, we strengthen this impossibility result to $\varepsilon$-approximate PBNE, for a reasonably large constant value of $\varepsilon$. We delegate the proof of the theorem to \cref{sec:existence_discrete_negative}.

\begin{theorem}
    \label{th:tight_existence_approx}
    For any $\varepsilon<\frac{1}{72}$, there exist discrete first-price auctions,
    even with only two bidders and iid priors with support size $2$, that do not have $\varepsilon$-approximate pure equilibria. 
    \end{theorem}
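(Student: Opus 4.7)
The plan is to exhibit a concrete discrete first-price auction with two bidders, an iid prior on a two-element value space $V=\{v_L,v_H\}$, and a small finite bidding space $B$, and then to show that for every pure bidding profile some bidder has a single-bid deviation at some value that strictly improves her interim utility by at least $1/72$. Since the set of pure profiles is finite---bounded by $|B|^4$ and pruned further by the no-overbidding constraint---this reduces to a finite case check. The novelty beyond the qualitative non-existence argument of \parencite{escamocher2009existence} is that the best-response improvement must be bounded below by a fixed \emph{constant}, uniformly across all pure profiles.

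First, I would parameterize the instance by values $v_L<v_H$ in $[0,1]$, iid prior probability $p=\Pr[v=v_L]$, and a small bidding space $B\subseteq[0,1]$ containing $0$, $v_L$, and a handful of intermediate bids between $v_L$ and $v_H$. By no-overbidding, each bidder's pure strategy is a pair $(\hat\beta_i(v_L),\hat\beta_i(v_H))$ with $\hat\beta_i(v_L)\leq v_L$ and $\hat\beta_i(v_H)\leq v_H$. Specializing~\eqref{eq:DFPA-utility-interim-mixed} to $n=2$, the interim utility of bidder $i$ at value $v$ bidding $b$ against the opponent's strategy $\hat\beta_{-i}$ takes the closed form
\[
u_i(b,\hat\beta_{-i};v)=(v-b)\bigl[\,p\cdot W\!\bigl(b,\hat\beta_{-i}(v_L)\bigr)+(1-p)\cdot W\!\bigl(b,\hat\beta_{-i}(v_H)\bigr)\bigr],
\]
where $W(b,b')$ equals $1$, $1/2$, or $0$ according to whether $b>b'$, $b=b'$, or $b<b'$. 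This makes every utility and every deviation gain an explicit piecewise polynomial in the parameters.

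Second, I would partition the pure profiles according to the ordering of the four bids $\hat\beta_1(v_L),\hat\beta_1(v_H),\hat\beta_2(v_L),\hat\beta_2(v_H)$ and their tie pattern. For each class, I would pinpoint a canonical beneficial deviation: typically either shading the bid at $v_H$ down to just undercut a tied opponent's high bid (exploiting the $1/2$-factor from uniform tie-breaking), jumping up by one bid step in $B$ to break a weak tie or to defeat a current winning bid, or moving from a sure-losing bid at value $v$ to the smallest strictly-winning bid. In each class the utility gain is a closed-form expression in $(v_L,v_H,p)$ and the discrete spacing of $B$. I would then choose the parameters so that the minimum of all such gains across classes is at least $1/72$; the constant $1/72$ is the value of this max-min optimization for an appropriately chosen small fixed bid space.

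The main obstacle is the case analysis combined with the parameter optimization. Different classes of pure profiles are resolved by qualitatively different deviations: some (e.g.\ shading by one bid step against a tie) have small gains that force $B$ to have sufficiently wide gaps, while others (e.g.\ converting a sure loss into a sure win) have gains constrained more by $v_H-v_L$ and $p$ than by the spacing. Balancing all these constraints simultaneously---including for asymmetric and non-monotone profiles, which substantially expand the case list---is what determines the optimal constant, and verifying that a single explicit instance achieves the $1/72$ bound tightly in the worst case is the main computational work of the proof.
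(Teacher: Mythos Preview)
Your approach is essentially the paper's: exhibit a concrete two-bidder iid instance with a small bid grid and verify, by finite case analysis, that every pure profile admits a deviation worth at least $1/72$. The paper's execution is considerably leaner than your sketch, thanks to one simplification you do not use: it takes $v_L=0$, so no-overbidding pins $\hat\beta_i(v_L)=0$ for both bidders automatically, collapsing the four-bid case split you describe down to a single pair $(\hat\beta_1(v_H),\hat\beta_2(v_H))$. On that one-dimensional space the paper exhibits a clean best-response cycle (if the opponent bids low, jump one step above; if the opponent bids high, drop back to the lowest positive bid) with every step of the cycle carrying a gain exceeding $1/72$; the specific instance is $V=\{0,1\}$, $B=\{0,\tfrac{1}{12},\tfrac{2}{12},\dots,1\}$, and $f(0)=\tfrac{M-3}{M-1}-\tfrac{2}{3M}$ with $M=12$. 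Your brute-force partition over all orderings of four bids would work in principle but is unnecessary once $v_L=0$ is chosen.
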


Driven by the non-existence result of \cref{th:tight_existence_approx}, we consider the computational complexity of deciding the existence of PBNE in a DFPA. Our main result of this section is that, when the prior distributions are subjective, the problem is NP-complete, even for approximate equilibria. 
The following statement is the formal version of \cref{th:NP-completeness-informal}. In the interest of space and the flow of our presentation, we provide only a high-level sketch here. The full proof of our theorem is delegated to \cref{app:np-hardness}.

\begin{theorem}
    Consider any $\varepsilon < \frac{1}{180}$. The problem of deciding the existence of an $\varepsilon$-PBNE of a DFPA is NP-complete. 
\end{theorem}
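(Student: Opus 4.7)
A candidate pure strategy profile $\hat{\vec{\beta}}$ has description polynomial in the input (it is just a function $V_i \to B$ for each $i$). To verify that it is an $\varepsilon$-PBNE, we use \cref{lemma:DFPA-utilities-efficient}: for each bidder $i$, each value $v_i \in V_i$, and each possible deviation $b \in B$, compute $u_i(\hat{\beta}_i(v_i), \hat{\vec{\beta}}_{-i}; v_i)$ and $u_i(b, \hat{\vec{\beta}}_{-i}; v_i)$ in polynomial time, and check the inequality in \cref{def:approx-pure-bayes-nash-equilibrium}. Since there are only $\sum_i |V_i| \cdot |B|$ such checks, the entire verification runs in polynomial time.

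\textbf{NP-hardness: the approach.} The plan is to reduce from a standard NP-complete Boolean satisfiability problem such as \textsc{3-SAT} (or \textsc{Circuit-SAT}, depending on convenience of the gadgets). The construction will embed a Boolean formula $\varphi$ into a DFPA $\mathcal{A}_\varphi$ such that $\mathcal{A}_\varphi$ admits an $\varepsilon$-PBNE iff $\varphi$ is satisfiable. The crucial ingredient is the non-existence result of \cref{th:tight_existence_approx}: one can build a small two-bidder gadget that has \emph{no} $\varepsilon$-PBNE, for $\varepsilon$ up to roughly $1/72$. This will serve as a ``trap'' that can be activated or deactivated depending on the strategic choices elsewhere in the auction.

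\textbf{Key steps and gadgets.} First, I would design a \emph{variable gadget} for each Boolean variable $x_\ell$ of $\varphi$: a small bidder (or pair of bidders) whose only candidate best-response strategies come in exactly two types, corresponding to $x_\ell = \textup{True}$ and $x_\ell = \textup{False}$. The flexibility of \emph{subjective} priors is essential here: each bidder's beliefs $F_{i,j}$ are chosen independently of the actual strategies of the others, so we can set up each gadget's incentives in isolation, while still letting different gadgets interact through shared bidders. Second, I would design a \emph{clause gadget} for each clause $C$ of $\varphi$, wired to the variable gadgets of the literals in $C$: the gadget embeds (a scaled copy of) the non-existence construction of \cref{th:tight_existence_approx}, but parameterised so that if at least one literal of $C$ takes the ``satisfying'' value, one of the bidders in the gadget becomes indifferent enough that a consistent bid assignment becomes an $\varepsilon$-PBNE; otherwise, the gadget inherits the non-existence property and no $\varepsilon$-PBNE of the whole auction can exist. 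Finally, a simple soundness check ensures that any $\varepsilon$-PBNE of $\mathcal{A}_\varphi$ induces, via the variable gadgets, a well-defined assignment that satisfies all clauses of $\varphi$.

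\textbf{Main obstacle.} The principal technical difficulty is \emph{quantitative composition}: the $\varepsilon$-slack must be controlled across the combination of many gadgets. Each clause gadget inherits its non-existence threshold of roughly $1/72$, but when embedded in a larger auction its effective size (and hence the approximation slack it imposes) gets diluted by the interim utility averaging over the priors of other bidders. Managing this dilution, together with ensuring that bidders shared across gadgets do not introduce spurious best responses, is what forces the weaker constant $\varepsilon < 1/180$ in the statement (as opposed to $1/72$). The proof plan is therefore to (i) compute exactly how much the clause-gadget incentives get scaled once embedded, (ii) choose the gadget parameters and prior masses so that a satisfying assignment kills the non-existence condition with a margin of more than $\varepsilon$ on each relevant inequality, and (iii) verify that an unsatisfied clause still triggers the strict non-existence with the same margin. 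The remaining steps (completeness of the reduction and polynomial-time computability of $\mathcal{A}_\varphi$ from $\varphi$) are routine.
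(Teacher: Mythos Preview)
Your NP-membership argument is correct and matches the paper's.

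For hardness, your high-level plan (encode a SAT instance, use subjective priors to isolate gadgets, plant a non-existence ``trap'') is in the right spirit, but the analysis of the main obstacle is off, and the actual technical content is not addressed. Concretely:

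\emph{The ``dilution'' problem you describe does not arise.} With subjective beliefs you can, for every bidder $i$, set $f_{i,j}(0)=1$ for all bidders $j$ outside $i$'s gadget; then by no-overbidding those bidders always bid $0$ from $i$'s perspective and contribute nothing to $i$'s interim utilities (beyond tie-breaking, which is easy to control). The paper exploits exactly this: each gadget is analysed completely in isolation, and the utility margins established there survive unchanged in the full auction. So ``quantitative composition'' is a non-issue, not the principal difficulty.

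\emph{The constant $1/180$ has nothing to do with composition.} The paper works with unnormalised values $V=\{0,\tfrac{9}{4},10\}$ and bids $B=\{0,1,2,3\}$, proves every gadget has a unique $\varepsilon$-best-response for $\varepsilon<1/18$, and then simply divides everything by $10$ to land in $[0,1]$, yielding $1/180$. It is a scaling artefact, not an accumulation of slack.

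\emph{What the proof actually requires (and your sketch skips).} The paper reduces from \textsc{Circuit-SAT} and simulates each gate by a small number of bidders whose unique $\varepsilon$-best-response (for $\varepsilon<1/18$) encodes the gate's truth table; a ``projection'' gadget cleans up intermediate strategies back to the two canonical ones $s_0,s_1$. Crucially, there is a \emph{single} output gadget: two bidders $k,\ell$ who, absent any further interaction, realise a scaled copy of the non-existence construction and can never be simultaneously $\varepsilon$-best-responding; $k$'s subjective prior on the circuit's output bidder is then set so that $k,\ell$ \emph{can} both best-respond if and only if that output bidder plays $s_1$. The real work is the case analysis that pins down, for each gadget and each relevant input strategy, the exact utilities and the $\geq 1/18$ gap to the next-best bid. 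Your proposal defers precisely this step (``parameterised so that\ldots'') and instead spends effort on a non-obstacle.
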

\begin{proof}[Proof sketch]
We begin by showing membership in NP, demonstrating that we can verify any positive certificate in polynomial time. Given a pure strategy profile $\hat{\vec{\beta}}$, it suffices to efficiently verify that strategy $\hat{\beta}_i$ is an $\varepsilon$-best-response for each bidder $i \in N$ and value $v_i \in V_i$. To this end, we can quantify over all, polynomially many in the size of the input, possible values of $i \in N,v_i \in V_i, b\in B$, and compute the corresponding utilities. By \cref{lemma:DFPA-utilities-efficient}, this can be done in polynomial time.

We now turn to the NP-hardness of the problem. We reduce from the well-known NP-complete Circuit Satisfiability problem (\textsc{Circuit-SAT}) \parencite{Garey1979a}. For ease of notation, our reduction will use the bidding space $B=\{ 0,1,2,3 \}$ and the value space (common for all bidders) $V=\{ 0, \frac{9}{4}, 10 \}$. With this setup, we will prove the result for $\varepsilon < 1/18$.  
Dividing everything by the largest entry (in this case $10$), we bring the problem back to the desired domain $[0,1]$, while keeping all comparisons between utilities in the proof intact. Hence, we obtain the result for our version of the problem for $\varepsilon<1/180$.\smallskip

\noindent The proof can be outlined in the following steps:

\begin{enumerate}[leftmargin=*]
    \item[---] We define the set of valid strategies (mapping from values to bids) $\{ s_0, s_1 \}$ that correspond to the boolean values \emph{false} and \emph{true}, respectively.\smallskip

    \item[---] We introduce 3 gadgets (\emph{projection}, \emph{OR}, and \emph{NOT}), which are used to simulate the \emph{OR} and \emph{NOT} gates of the circuit, as well as a \emph{split node}, which is used to copy the value of its input to two outputs. Note that any boolean circuit can be converted in polynomial time to another one that only uses these gates. The \emph{projection} gadget is used in order to correctly map any bidder that plays a strategy different than $s_0$ and $s_1$ back to the set of valid strategies. By introducing bidders with appropriate subjective prior distributions, we establish that all the induced bidders can play $\varepsilon$-best-responses simultaneously if and only if the boolean values specified by their strategies satisfy the definitions of the corresponding gates of the circuit. \smallskip

    \item[---] For the simulation step, we assumed that all input bidders played a valid strategy. To ensure this, we introduce the \emph{input} gadget. This can be seen as introducing an extra bidder $i'$ for each bidder $i$ corresponding to an input of the circuit, with subjective priors resembling a \emph{projection} gate from $i$ to $i'$ and vice versa. We show in our analysis that this guarantees that $i$ and $i'$ can simultaneously play $\varepsilon$-best-responses if and only if they both play the same, valid, strategy. \smallskip

    \item[---] To finish off the construction, it remains to encompass the notion of the satisfiability of the circuit, meaning distinguishing the cases where the output can get the value \emph{true}. To achieve this, we introduce the final gadget of our reduction, the \emph{output} gadget. This introduces two new bidders to the DFPA, $k$ and $\ell$, which initially only have subjective beliefs that put positive probability mass to some value greater than $0$ for each other. We show that they cannot simultaneously play $\varepsilon$-best-responses. Then, we describe $k$'s prior distribution for the bidder $i$ corresponding to the output of the circuit, such that in the resulting DFPA it is possible for all 3 bidders ($i,k,\ell$) to simultaneously play $\varepsilon$-best-responses if and only if $i$ plays strategy $s_1$. Therefore, combining this with the previous steps, the resulting DFPA has an $\varepsilon$-PBNE for $\varepsilon < \frac{1}{18}$ if and only if the circuit is satisfiable.\qedhere
\end{enumerate}
\end{proof}

\section{Mixed Equilibria} 
\label{sec:mixed-equilibria}

In this section, we switch our focus to the more general class of mixed equilibria. At first glance, one might be inclined to believe that since both the value space and the bidding space are discrete, the existence of MBNE follows immediately by Nash's Theorem \parencite{nash1951non}. However, this is not the case, since the DFPA game is a Bayesian game of incomplete information. MBNE in (general) Bayesian games have been extensively discussed in the literature (see, e.g., \parencite{milgrom1985distributional,aumann1961mixed}, \parencite[Chapter 7.2.3]{Jehle2001a}, and \parencite[Chapter 2.8]{myerson1997book}). A common way to prove their existence is via transforming the Bayesian game into an appropriately constructed normal form game, and invoking Nash's Theorem there; see, e.g., \parencite[Theorem 7.3]{Jehle2001a}. 
This normal form game formulation is the \emph{type-agent representation} that we mentioned in our introduction, and will be critical for our discussion on correlated equilibria later, in \cref{sec:correlated-equilibria}. However, it is not needed for any of our results of the present section. Note that these general existence results for mixed equilibria in Bayesian games directly imply the existence of a MBNE in DFPA as well, rendering the problem of finding a MBNE, a total search problem.  

We will also provide the proof of \cref{th:PPAD-completeness-informal}, namely the PPAD-completeness of computing MBNE of the DFPA. En route to this result, we will present an interesting \emph{computational equivalence theorem} between approximate mixed Bayes-Nash equilibria of the DFPA with discrete values and approximate pure Bayes-Nash equilibria of the DFPA with continuous values, the setting studied in \parencite{fghlp2021_sicomp,chen2023complexity}. This equivalence allows us to translate positive and negative results in either direction; importantly, for the purpose of this section, it will help us establish the PPAD-membership of computing a MBNE of the DFPA. In principle, the equivalence theorem could also be used to get a PPAD-hardness result for the same problem, albeit somewhat weak in nature. Instead, we will provide a stronger, stand-alone PPAD-hardness result for computing a MBNE, which does not rely on the equivalence theorem.

\subsection{Further Definitions}

Before we present our equivalence theorem, we need to introduce some necessary terminology for the setting of first-price auctions with continuous priors (and discrete bids), as well as some further definitions that will be useful for the subsequent results. 

\paragraph{Continuous first-price auctions} A continuous first-price auction (CFPA) is defined very similarly to DFPA, they key difference being that the value spaces $V_i$ are continuous intervals of real numbers and the prior distributions $F_{i,j}$ are continuous. The bidding space $B=\{b_1,\ldots,b_m\}$ is still discrete. For the purpose of this paper, it is enough to consider distributions with piecewise constant densities, represented explicitly by the endpoints and the height of the intervals. As we mentioned in the introduction, the representation of strategies in a CFPA is not as straightforward as in the case of a DFPA. For this reason, one needs to focus on \emph{monotone strategies}, meaning that bidders submit higher bids on higher values. Using this, a strategy can be represented as a sequence of \emph{jump points} $\alpha_1,\ldots,\alpha_{m-1}$, where $\alpha_i \in V_i$ is the largest value for which bidder $i$ bids $b_i$ or lower. The notion of a pure Bayes-Nash equilibrium (PBNE) of a CFPA is defined similarly to a DFPA. We refer the reader to \parencite[Sec.~2]{fghlp2021_sicomp} for more details.   \medskip

\noindent It will be useful to define a notion of monotonicity, appropriate for mixed strategies:

\begin{definition}[Monotone mixed strategies in a DFPA]
    \label{def:monotonicty-discrete}
Consider a DFPA with value space $V_i$ for each bidder $i$.
A mixed strategy $\beta_i\in \Delta(B)^{V_i}$ (of bidder $i$) will be called \emph{monotone} if
$$
\max \support{\beta_i(v)} \leq \min \support{\beta_i(v')}
\qquad
\text{for all}\;\; v,v' \in V_i\;\; \text{with}\;\; v < v'. 
$$
A strategy profile will be called monotone, if the strategies of all bidders in it are monotone.
\end{definition}

Next, we present the notion of $\varepsilon$-well-supported MBNE of a DFPA, a more refined notion of equilibrium approximation which we will use in the results of this section. 

\begin{definition}[$\varepsilon$-well-supported mixed Bayes-Nash equilibrium of the DFPA]\label{def:well-supported-bayes-nash-equilibrium}
Let $\varepsilon \geq 0$.
A (mixed) strategy profile $\vec{\beta}=(\beta_1, \ldots, \beta_n)$ is an (interim) $\varepsilon$-well-supported mixed Bayes-Nash equilibrium (MBNE) of the DFPA if for any $i \in N$, any $v_i \in V_i$, and any $b \in \support{\beta_i}$:
\begin{equation}
\label{eq:WS-MBNE-def-condition-full}
u_i(b,\vec{\beta}_{-i};v_i) \geq u_i(\vec{\gamma},\vec{\beta}_{-i};v_i) - \varepsilon \qquad \text{for all}\;\; \vec{\gamma}\in \Delta(B).
\end{equation}
\end{definition}
Informally, in an $\varepsilon$-well-supported MBNE, the support of a strategy only contains bids that approximately (within $\varepsilon$) maximize the bidder's utility.\footnote{\textcite{chen2023complexity} use the term ``$\varepsilon$-well-supported'' to refer to \emph{interim} equilibria, see our \cref{def:approx-mixed-bayes-nash-equilibrium}. To avoid any misconceptions, we provide a short discussion in \cref{app:interim}.} 

Finally, we define a technical notion, which will be useful for us in the following:

\begin{definition}[Interaction degree of a DFPA]
A DFPA has \emph{interaction degree} bounded by $d \in \{1, \dots, n-1\}$, if for any bidder $i \in N$, there exists a set $J \subseteq N \setminus \{i\}$ of bidders with $|J| \leq d$ such that for any bidder $j \in N \setminus (J \cup \{i\})$ we have $0 \in V_j$ and $f_{i,j}(0) = 1$.
\end{definition}

Note that any DFPA has interaction degree bounded by $n-1$. Intuitively, the interaction degree provides a bound on how many bidders are perceived as always having zero value, from the perspective of some other bidder. The following lemma shows that, as long as the interaction degree is constant, $\varepsilon$-well-supported MBNE can be obtained in polynomial time from $\varepsilon'$-approximate MBNE of the DFPA, when both $\varepsilon$ and $\varepsilon'$ are constant.

\begin{lemma}\label{lem:NE2WSNE}
Consider any DFPA with interaction degree bounded by $d \in \{1, \dots, n-1\}$. Then, for any $\varepsilon \in [0,1]$, a (monotone) $\varepsilon$-well-supported MBNE can be obtained in polynomial time given any (monotone) $\varepsilon^2/(8d)$-approximate MBNE. Furthermore, if the auction is iid, then this also holds if we restrict our attention to symmetric equilibria.
\end{lemma}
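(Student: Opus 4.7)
The plan is to use the classical pruning construction --- delete suboptimal bids from each bidder's support and renormalize --- carefully adapted to the Bayesian setting so that the interaction degree $d$ governs the propagation of errors across bidders.

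\textbf{Construction and size of the pruning.} Given the input $\varepsilon^2/(8d)$-approximate MBNE $\vec{\beta}$, set the pruning threshold $\delta := \varepsilon/2$ and, for each bidder $i \in N$ and value $v_i \in V_i$, let $\mu_i(v_i) := \max_{b \in B} u_i(b,\vec{\beta}_{-i};v_i)$. Define $\beta'_i(v_i)$ by deleting from $\support{\beta_i(v_i)}$ every bid $b$ with $u_i(b,\vec{\beta}_{-i};v_i) < \mu_i(v_i) - \delta$, and renormalizing the remaining mass; by \cref{lemma:DFPA-utilities-efficient} this runs in polynomial time. If $p_i(v_i)$ denotes the mass removed at value $v_i$, then $u_i(\beta_i(v_i),\vec{\beta}_{-i};v_i) \leq (1-p_i(v_i))\mu_i(v_i) + p_i(v_i)(\mu_i(v_i)-\delta) = \mu_i(v_i) - p_i(v_i)\delta$, and the approximate-MBNE condition forces $p_i(v_i) \leq (\varepsilon^2/(8d))/\delta = \varepsilon/(4d)$.

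\textbf{Utility perturbation via interaction degree.} Next I would bound the change in any bidder $j$'s utility when $\vec{\beta}_{-j}$ is replaced by $\vec{\beta}'_{-j}$. A hybrid argument modifying one $\beta_i$ at a time, combined with the standard inequality $|\mathbb{E}_P[f] - \mathbb{E}_Q[f]| \leq (\max f - \min f) \cdot \|P-Q\|_1/2$ applied with $P = \beta_i(v_i)$, $Q = \beta'_i(v_i)$ and $f = \tilde{u}_j \in [0,1]$ (valid for bids $b\leq v_j$ under no-overbidding), yields a per-opponent change of at most $\mathbb{E}_{v_i \sim F_{j,i}}[p_i(v_i)] \leq \varepsilon/(4d)$. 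The interaction-degree hypothesis is crucial here: for every $i \notin J_j \cup \{j\}$ we have $f_{j,i}(0) = 1$, and no-overbidding forces $\beta_i(0)$ to be the Dirac mass on $0$, so modifications to $\beta_i$ cannot affect $u_j$. Hence only the at most $d$ opponents in $J_j$ contribute to the hybrid sum, giving a total utility perturbation of at most $d \cdot \varepsilon/(4d) = \varepsilon/4$.

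\textbf{Well-supported verification and preservation.} For any $b \in \support{\beta'_j(v_j)}$ and any pure deviation $b' \in B$ with $b' \leq v_j$ (overbidding deviations give non-positive utility and are trivially dominated by $b$), the two bounds combine to give $u_j(b,\vec{\beta}'_{-j};v_j) \geq \mu_j(v_j) - \delta - \varepsilon/4$ and $u_j(b',\vec{\beta}'_{-j};v_j) \leq \mu_j(v_j) + \varepsilon/4$, so the gap is at most $\delta + \varepsilon/2 = \varepsilon$; by linearity this extends to all $\vec{\gamma} \in \Delta(B)$, establishing $\vec{\beta}'$ as an $\varepsilon$-well-supported MBNE. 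Monotonicity is automatically preserved because $\support{\beta'_i(v_i)} \subseteq \support{\beta_i(v_i)}$, and in the iid case with symmetric $\vec{\beta}$ the pruning operation is identical for all bidders, so symmetry is also preserved. I expect the main obstacle to be the utility-perturbation step: a naive application would sum over all $n-1$ opponents and yield only $\varepsilon' = O(\varepsilon^2/n)$, so one really has to isolate precisely the $|J_j| \leq d$ opponents that can influence $u_j$ in order to obtain the claimed $\varepsilon^2/(8d)$ dependence.
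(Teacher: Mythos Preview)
Your proposal is correct and follows essentially the same route as the paper: prune bids more than $\delta$ below optimal, bound the removed mass via the approximate-MBNE condition, use the interaction degree to bound the resulting utility perturbation across bidders, and combine. Your utility-perturbation step is in fact slightly cleaner than the paper's --- you bound $|u_j - u_j'|$ directly via the $[0,1]$-valued ex-post utility $\tilde u_j$ and total variation, whereas the paper first rewrites the winning probability $H_i$ as a telescoping combination of event probabilities before applying the TV bound --- but the structure, parameters, and final arithmetic coincide exactly.
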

\begin{proof}
Let $\vec{\beta}$ be a $\delta$-approximate MBNE of the auction, for some $\delta > 0$. Consider any $\gamma \geq \delta$, to be fixed later. For any $i \in N$ and any $v_i \in V_i$, let
$$G := \big\{b \in B: u_i(b,\vec{\beta}_{-i};v_i) \geq \max_{b' \in B} u_i(b',\vec{\beta}_{-i};v_i) - \gamma \big\}$$
be the set of bids that yield optimal utility up to $\gamma$. Then, we must have
\begin{equation}\label{eq:wsne-good-bids}
\beta_i(v_i)(B \setminus G) := \sum_{b \in B \setminus G} \beta_i(v_i)(b) < \delta/\gamma
\end{equation}
Indeed, if we had $\beta_i(v_i)(B \setminus G) \geq \delta/\gamma$, then
\begin{equation*}
\begin{split}
u_i(\beta_i(v_i),\vec{\beta}_{-i};v_i) &= \sum_{b \in B \setminus G} \beta_i(v_i)(b) \cdot u_i(b,\vec{\beta}_{-i};v_i) + \sum_{b \in G} \beta_i(v_i)(b) \cdot u_i(b,\vec{\beta}_{-i};v_i)\\
&< \beta_i(v_i)(B \setminus G) \cdot \left(\max_{b' \in B} u_i(b',\vec{\beta}_{-i};v_i) - \gamma \right) + \beta_i(v_i)(G) \cdot \max_{b' \in B} u_i(b',\vec{\beta}_{-i};v_i)\\
&= \max_{b' \in B} u_i(b',\vec{\beta}_{-i};v_i) - \gamma \cdot \beta_i(v_i)(B \setminus G)\\
&\leq \max_{b' \in B} u_i(b',\vec{\beta}_{-i};v_i) - \delta
\end{split}
\end{equation*}
contradicting the fact that $\vec{\beta}$ is a $\delta$-approximate MBNE.

Next, we construct a strategy profile $\vec{\beta}'$ from $\vec{\beta}$. For every $i \in N$ and every $v_i \in V_i$ we proceed as follows. For any $b \in B \setminus G$ we set
$$\beta_i'(v_i)(b) := 0$$
and for any $b \in G$ we set
$$\beta_i'(v_i)(b) := \frac{\beta_i(v_i)(b)}{\sum_{b' \in G} \beta_i(v_i)(b')}$$
where the denominator is strictly positive by \eqref{eq:wsne-good-bids}. Thus, $\beta_i'(v_i)$ is a well-defined distribution over $B$ that only puts positive mass on bids in $G$. Since $\beta_i(v_i)$ is non-overbidding, so is $\beta_i'(v_i)$. Furthermore, we have $\support{\beta_i'(v_i)} \subseteq \support{\beta_i(v_i)}$, and thus $\vec{\beta}'$ is monotone, if $\vec{\beta}$ is monotone. If the auction is iid and $\vec{\beta}$ is symmetric, then $\vec{\beta}'$ will also be symmetric.

Now, we have that for all $i \in N$ and $v_i \in V_i$, the total variation distance between $\beta_i(v_i)$ and $\beta_i'(v_i)$ satisfies
$$\textup{TV}\left(\beta_i(v_i), \beta_i'(v_i)\right) := \frac{1}{2} \|\beta_i(v_i) - \beta_i'(v_i)\|_1 \leq \frac{1}{2} \cdot 2 \cdot \beta_i(v_i)(B \setminus G) < \delta/\gamma$$
by \eqref{eq:wsne-good-bids}. Fix some $i \in N$. Let $D_j$ and $D_j'$ denote the distributions over $B$ obtained by drawing $v_j \in V_j$ according to $F_{i,j}$, and then $b$ according to $\beta_j(v_j)$, respectively $\beta_j'(v_j)$. We have
\begin{equation*}
\begin{split}
\textup{TV}(D_j,D_j') &= \frac{1}{2} \sum_{b \in B} \left| \sum_{v_j \in V_j} f_{i,j}(v_j) \left(\beta_j(v_j)(b) - \beta_j'(v_j)(b)\right) \right|\\
&\leq \sum_{v_j \in V_j} f_{i,j}(v_j) \frac{1}{2} \sum_{b \in B} \left| \beta_j(v_j)(b) - \beta_j'(v_j)(b) \right|\\
&\leq \sum_{v_j \in V_j} f_{i,j}(v_j) \delta/\gamma = \delta/\gamma.
\end{split}
\end{equation*}
Next, define the product distributions $D := \times_{j \in N \setminus \{i\}} D_j$ and $D' := \times_{j \in N \setminus \{i\}} D_j'$ over $B^{n-1}$. Since the auction has interaction degree bounded by $d$, there exists a set $J \subseteq N \setminus \{i\}$ with $|J| \leq d$ such that for all $j \in N \setminus (J \cup \{i\})$ we have $0 \in V_j$ and $f_{i,j}(0) = 1$. By the no-overbidding assumption, it must be that for all $j \in N \setminus (J \cup \{i\})$, $\beta_j(0)(0) = 1 = \beta_j'(0)(0)$, and thus $D_j = D_j'$. As a result
$$\textup{TV}(D,D') \leq \sum_{j \in N \setminus \{i\}} \textup{TV}(D_j,D_j') = \sum_{j \in J} \textup{TV}(D_j,D_j') \leq d \cdot \delta/\gamma.$$
Now, fix some bid $b \in B$. For any $k \in \{0,1,\dots,n-1\}$, let $E_k \subseteq B^{n-1}$ denote the event that exactly $k$ bids are equal to $b$ and the rest are strictly smaller. Recall the definition of the $H$ functions in
\cref{lemma:DFPA-utilities-efficient}.
Then, we can write
\begin{equation*}
\begin{split}
H_i(b,\vec{\beta}_{-i}) &= \sum_{k=0}^{n-1} \frac{1}{k+1} \probability_D[E_k]\\
&= \frac{1}{n} \sum_{k=0}^{n-1} \probability_D[E_k] + \sum_{\ell=1}^{n-1} \left(\frac{1}{\ell} - \frac{1}{\ell+1} \right) \sum_{k=0}^{\ell-1} \probability_D[E_k]\\
&= \frac{1}{n} \probability_{D}\left[\bigcup_{k=0}^{n-1} E_k\right] + \sum_{\ell=1}^{n-1} \left(\frac{1}{\ell} - \frac{1}{\ell+1} \right) \probability_{D}\left[\bigcup_{k=0}^{\ell-1} E_k\right]
\end{split}
\end{equation*}
where we used the fact that the events $E_k$ are all disjoint. We also obtain the analogous expression for $H_i(b,\vec{\beta}_{-i}')$ and $D'$. It follows that
\begin{equation*}
\begin{split}
& |H_i(b,\vec{\beta}_{-i}) - H_i(b,\vec{\beta}_{-i}')|\\
\leq \quad & \frac{1}{n} \left| \probability_{D}\left[\bigcup_{k=0}^{n-1} E_k\right] - \probability_{D'}\left[\bigcup_{k=0}^{n-1} E_k\right] \right| + \sum_{\ell=1}^{n-1} \left(\frac{1}{\ell} - \frac{1}{\ell+1} \right) \left| \probability_{D}\left[\bigcup_{k=0}^{\ell-1} E_k\right] - \probability_{D'}\left[\bigcup_{k=0}^{\ell-1} E_k\right] \right|\\
\leq \quad & \frac{1}{n} d\delta/\gamma + \sum_{\ell=1}^{n-1} \left(\frac{1}{\ell} - \frac{1}{\ell+1} \right) d\delta/\gamma = d\delta/\gamma
\end{split}
\end{equation*}
where we used $\textup{TV}(D,D') \leq d \delta/\gamma$.

As a result, for any $i \in N$, $v_i \in V_i$, and $b \in B$ we have
$$\left|u_i(b,\vec{\beta}_{-i};v_i) - u_i(b,\vec{\beta}_{-i}';v_i)\right| = \left| H_i(b,\vec{\beta}_{-i}) \cdot (v_i-b) - H_i(b,\vec{\beta}_{-i}') \cdot (v_i-b) \right| \leq d\delta/\gamma.$$
Finally, for any $i \in N$, $v_i \in V_i$, and $b \in \support{\beta_i'(v_i)} \subseteq G$ we have
$$u_i(b,\vec{\beta}_{-i};v_i) \geq \max_{b' \in B} u_i(b',\vec{\beta}_{-i};v_i) - \gamma$$
by construction of $G$, and as a result
$$u_i(b,\vec{\beta}_{-i}';v_i) \geq \max_{b' \in B} u_i(b',\vec{\beta}_{-i}';v_i) - \gamma - 2d\delta/\gamma.$$
Thus, $\vec{\beta}'$ is a $(\gamma + 2d\delta/\gamma)$-well supported MBNE. Setting $\gamma := \sqrt{2d\delta}$ and $\delta := \varepsilon^2/(8d)$ yields the result. Note that for $\varepsilon = 0$ the statement trivially holds.
\end{proof}

\subsection{Discrete and Continuous Auctions: A Computational Equivalence}\label{sec:equivalence}

We are now ready to prove our computational equivalence result. This is composed of \cref{lem:DFPA2CFPA,lem:CFPA2DFPA} below.

\begin{lemma}\label{lem:DFPA2CFPA}
Given $\delta \in (0,1)$ and a DFPA, we can construct in polynomial time a CFPA (with the same bidding space) such that for any $\varepsilon \geq 0$, we can transform in polynomial time any $\varepsilon$-PBNE of the CFPA into an $(\varepsilon + \delta)$-well-supported monotone MBNE of the DFPA. Furthermore, this reduction maps IPV (resp.\ iid) auctions to IPV (resp.\ iid) auctions, and symmetric equilibria to symmetric equilibria.
\end{lemma}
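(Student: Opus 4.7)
The plan is to construct the CFPA by \emph{smoothing} each discrete value of the DFPA into a tiny interval of width $w := \delta$ (shrunk further, if necessary, so that the intervals around distinct discrete values are pairwise disjoint, stay inside $[0,1]$, and lie below the smallest positive element of $B$). Concretely, for each bidder $j$ and each $v \in V_j$ let $I_j^v := [v - w, v]$; if $0 \in V_j$, use $I_j^0 := [0, w]$ instead, noting that no-overbidding then forces any monotone CFPA strategy to map all of $I_j^0$ to the bid $0$. Keep the bidders and bidding space $B$ unchanged, and define the continuous prior $F_{i,j}^{\mathrm{CFPA}}$ to have piecewise-constant density $f_{i,j}(v)/w$ on each $I_j^v$ and $0$ elsewhere; this transfers the entire atomic mass $f_{i,j}(v)$ onto its interval and has a polynomial-size description.

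Given an $\varepsilon$-PBNE $\hat{\vec\beta}$ of the CFPA, define the DFPA strategy
\[
\beta_i(v)(b) := \frac{\mathrm{Leb}\bigl(\{x \in I_i^v : \hat\beta_i(x) = b\}\bigr)}{w}
\]
for every bidder $i$, value $v \in V_i$, and bid $b \in B$. Since $\hat\beta_i$ is monotone (a requirement for representing pure CFPA strategies), $\beta_i(v)$ is a genuine probability distribution over $B$ whose support is a contiguous range of bids, and $\vec\beta$ is monotone in the sense of \cref{def:monotonicty-discrete} (the intervals $I_i^v$ are disjoint and ordered in the same way as the underlying values). No-overbidding transfers because $x \leq v$ on $I_i^v$ and $\hat\beta_i(x) \leq x$ by no-overbidding of $\hat\beta_i$; the IPV, iid and symmetric-equilibrium properties are preserved because the construction is applied uniformly across bidders and depends only on each bidder's local data.

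The heart of the argument is the \emph{identity} of winning probabilities, $H_i^{\mathrm{CFPA}}(b, \hat{\vec\beta}_{-i}) = H_i^{\mathrm{DFPA}}(b, \vec\beta_{-i})$. From bidder $i$'s perspective the induced marginal of any opponent $j$'s bid satisfies
\[
\Pr_{x \sim F_{i,j}^{\mathrm{CFPA}}}\bigl[\hat\beta_j(x) = b\bigr] \;=\; \sum_{v \in V_j} f_{i,j}(v)\,\beta_j(v)(b),
\]
matching the DFPA marginal exactly; combined with independence across bidders and uniform tie-breaking (shared by both models), the full winning probability---including the tie-contribution terms captured by the formula in \cref{lemma:DFPA-utilities-efficient}---coincides in the two settings. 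Now for any $b \in \support{\beta_i(v)}$ there must exist $x \in I_i^v$ with $\hat\beta_i(x) = b$, and the CFPA inequality $H(b)(x - b) \geq H(b')(x - b') - \varepsilon$ for every $b' \in B$ yields, after substituting $x$ by $v$ and using $|x - v| \leq w$ together with $|H(b) - H(b')| \leq 1$,
\[
H(b)(v - b) \;\geq\; H(b')(v - b') - \varepsilon - w \;\geq\; H(b')(v - b') - (\varepsilon + \delta),
\]
which is precisely the $(\varepsilon + \delta)$-well-supported MBNE condition at the discrete value $v$.

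The main technical hurdle I anticipate is the winning-probability identity: although the single-opponent bid marginals are easily seen to coincide, one has to verify that the tie-contribution terms---which depend on the probability that pairs, triples, etc.\ of opponents bid the exact same $b$---transfer verbatim from the discrete setting to the continuous one, so that the DFPA tie-breaking formula of \cref{lemma:DFPA-utilities-efficient} continues to apply unchanged. The remaining ingredients (choice of the width $w$, preservation of monotonicity and no-overbidding, and the $O(w)$ Lipschitz bound on utilities in the value argument) are all elementary once this identity is in place.
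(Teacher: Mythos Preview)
Your proposal is correct and follows essentially the same construction and argument as the paper: smooth each atom $v$ into a width-$\delta$ interval $[v-\delta,v]$ (with $[0,\delta]$ for $v=0$) carrying uniform density $f_{i,j}(v)/\delta$, read off the mixed DFPA strategy as the pushforward of the uniform measure on that interval under the CFPA step-function, and then use the $\varepsilon$-PBNE inequality at a point $x\in I_i^v$ with $\hat\beta_i(x)=b$ together with $|x-v|\le\delta$ and $|H(b)-H(b')|\le 1$ to obtain the $(\varepsilon+\delta)$-well-supported bound. The ``technical hurdle'' you flag is not actually a hurdle: since for each opponent $j$ the induced bid marginal is identical in the two models and bids are independent across opponents in both, the \emph{joint} distribution of $\vec b_{-i}$ is the same, so every tie-contribution term in the expression of $H_i$ from \cref{lemma:DFPA-utilities-efficient} coincides automatically---the paper simply records this as $u_i(b,\vec\beta_{-i};v_i)=u_i'(b,\vec\beta_{-i}';v_i)$ without further elaboration.
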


\begin{proof}
Let $\delta \in (0,1)$ be given. Consider a DFPA given by the value spaces $V_1, \dots, V_n$ and the distributions $F_{i,j}$. Without loss of generality, we can assume that $\delta$ satisfies
\begin{equation}\label{eq:wlog-delta}
2\delta \leq \min_{p,q \in V_1 \cup \dots V_n \cup B: p \neq q} |p-q|.
\end{equation}
Indeed, if this is not the case, we can replace $\delta$ by a smaller value in $(0,1)$ that satisfies this inequality.

We construct a CFPA with the same bidding space $B$ and with distributions $F_{i,j}'$. For every $i \neq j$, the piecewise constant density function of the distribution $F_{i,j}'$ is constructed from $F_{i,j}$ as follows.
\begin{itemize}
    \item[---] For every $v_j \in V_j$ with $v_j \neq 0$, we let the pdf have value $f_{i,j}(v_j)/\delta$ in interval $[v_j-\delta,v_j]$.
    \item[---] If $0 \in V_j$, we let the pdf have value $f_{i,j}(0)/\delta$ in interval $[0,\delta]$.
    \item[---] The pdf has value $0$ everywhere else in $[0,1]$.
\end{itemize}
For an illustration, see \cref{fig:disc-cont}.
This pdf is well-defined by \eqref{eq:wlog-delta} (also using the fact that $0 \in B$), and in particular the blocks of value do not overlap. Furthermore, any additional structure of the auction (namely IPV or iid) is kept intact.

Now consider any $\varepsilon$-PBNE $\vec{\beta}'$ of the CFPA. We construct a corresponding strategy profile $\vec{\beta}$ in the original DFPA as follows. For any $j \in N$, $\beta_j: V_j \to \Delta(B)$ is given by:
\begin{itemize}
    \item[---] For every $v_j \in V_j$ with $v_j \neq 0$, let $\beta_j(v_j)$ be the distribution of $\beta_j'(u)$ where $u$ is drawn uniformly at random in $[v_j-\delta,v_j]$. In other words, for all $b \in B$
    $$\beta_j(v_j)(b) = \lambda \big(\{u \in [v_j-\delta,v_j]: \beta_j'(u) = b\}\big)/\delta$$
    where $\lambda$ denotes the Lebesgue measure (in this case just the length of the interval).
    \item[---] If $0 \in V_j$, let $\beta_j(0)$ be the distribution of $\beta_j'(u)$ where $u$ is drawn uniformly at random in $[0,\delta]$.
\end{itemize}
Recall that $\beta_j': [0,1] \to B$ is a non-overbidding non-decreasing step-function strategy. As a result, and also by \eqref{eq:wlog-delta}, the mixed strategy $\beta_j$ will also be non-overbidding. Furthermore, the fact that $\beta_j'$ is non-decreasing immediately yields that $\beta_j$ is monotone. Finally, if $V_i = V_{i'}$ for all $i,i'$ and $\vec{\beta}'$ is symmetric, then so is $\vec{\beta}$.

Next, we note that the construction of $F_{i,j}'$ from $F_{i,j}$ and the construction of $\vec{\beta}$ from $\vec{\beta}'$ have been carefully devised to ensure that for every $i \neq j$ the two following distributions over $B$ coincide
\begin{itemize}
    \item[---] Pick $v_j \in V_j$ according to $F_{i,j}$, and then output $b \in B$ according to $\beta_j(v_j)$.
    \item[---] Pick $v \in [0,1]$ according to $F_{i,j}'$ and output $b = \beta_j'(v)$.
\end{itemize}

As a result, we have that
\begin{equation}\label{eq:discr2cont}
u_i(b,\vec{\beta}_{-i};v_i) = u_i'(b,\vec{\beta}_{-i}';v_i)
\end{equation}
for all $i \in N$, $b \in B$, and $v_i \in V_i$, where $u_i, u_i'$ denote the utilities in the DFPA and CFPA respectively.

It remains to prove that $\vec{\beta}$ is an $(\varepsilon + \delta)$-well-supported monotone MBNE of the DFPA. For this, it suffices to show that for any $i \in N$ and $v_i \in V_i$, we have that all $b \in B$ with $\beta_i(v_i)(b) > 0$ satisfy $u_i(b,\vec{\beta}_{-i};v_i) \geq u_i(b',\vec{\beta}_{-i};v_i) - \varepsilon - \delta$ for all $b' \in B$. Consider any such $i, v_i, b$ and note that by construction of $\beta_i$ from $\beta_i'$ there exists $v \in [0,1]$ with $|v-v_i| \leq \delta$ such that $\beta_i'(v) = b$. Since $\vec{\beta}'$ is an $\varepsilon$-PBNE $\vec{\beta}'$ of the CFPA, it follows that for all $b' \in B$
\begin{alignat*}{2}
&& u_i'(b,\vec{\beta}_{-i}';v) &\geq u_i'(b',\vec{\beta}_{-i}';v) - \varepsilon\\
\implies \quad && H_i(b,\vec{\beta}_{-i}') \cdot (v-b) &\geq H_i(b',\vec{\beta}_{-i}') \cdot (v-b') - \varepsilon\\
\implies \quad && H_i(b,\vec{\beta}_{-i}') \cdot (v_i-b) &\geq H_i(b',\vec{\beta}_{-i}') \cdot (v_i-b') - \varepsilon + \big(H_i(b',\vec{\beta}_{-i}') - H_i(b,\vec{\beta}_{-i}')\big) \cdot (v-v_i)\\
\implies \quad && u_i'(b,\vec{\beta}_{-i}';v_i) &\geq u_i'(b',\vec{\beta}_{-i}';v_i) - \varepsilon - \delta\\
\implies \quad && u_i(b,\vec{\beta}_{-i};v_i) &\geq u_i(b',\vec{\beta}_{-i};v_i) - \varepsilon - \delta
\end{alignat*}
where we used \eqref{eq:discr2cont}, $|v-v_i| \leq \delta$, and the terms $H_i(b,\vec{\beta}_{-i}') \in [0,1]$ which denote the probability for bidder $i$ to win the item when bidding $b$, while the other bidders follow the strategy profile $\vec{\beta}_{-i}'$ in the CFPA. This completes the proof.
\end{proof}

\begin{figure}
  \begin{minipage}[b]{0.4\linewidth}
    \centering
    \resizebox{200pt}{150pt}
    {\begin{tikzpicture}

        % Define the axis
        \draw[->] (0,0) -- (11.5,0);
        \draw[->] (0,0) -- (0,12);
        
        % Draw for v_0
        \draw[line width=0.95mm, blue] (0,0) -- (0, 2);
        \fill[color=blue] (0,2) circle[radius=3pt];
        \node[below] at (0, -0.15) {\huge $v_0$};
        \node[left] at (0, 2) {\huge $f(v_{0})$};
        \draw[olive] (1,0) -- (1, 2 / 0.8);
        \draw[olive] (1, 2 / 0.8) -- (0, 2 / 0.8);
        \node[above] at (0.5, 2 /0.8) {\Large $\delta$};
        \draw[->] (0.65, 2/0.8 + 0.25) -- (1 ,2/0.8 + 0.25);
        \draw[->] (0.4, 2/0.8 + 0.25) -- (0.1 ,2/0.8 + 0.25);
        \draw [olive] (0, 2 / 0.8) -- (0, 2);
        \node[right] at (0.9, 2 / 1.6) {\large $f(v_0)\delta^{-1}$};
        \draw[->] (1.5, 2 / 1.6 + 0.4) -- (1.5, 2 / 0.8 -0.2);
        \draw[->] (1.5, 2 / 1.6 - 0.4) -- (1.5, 0.2);
        \fill[pattern=north west lines, pattern color=olive] (0,0) rectangle (1, 2 / 0.8);
        
        \foreach \x/\y/\z in {4/4/1, 7/8/2, 11/6/3}
        {
        % Define the blue lines
        \draw[line width=0.95mm, blue] (\x,0) -- (\x, \y);
        \fill[color=blue] (\x,\y) circle[radius=3pt];
        \node[below] at (\x, -0.15) {\huge $v_{\z}$};
        \node[left] at (0, \y) {\huge $f(v_{\z})$};
        % Define the green lines
        \draw[olive] (\x-1,0) -- (\x-1, \y / 0.8);
        \draw[olive] (\x-1, \y / 0.8) -- (\x, \y / 0.8);
        \node[above] at (\x-0.5, \y /0.8) {\Large $\delta$};
        \draw[->] (\x-0.35, \y /0.8 + 0.25) -- (\x - 0.1 ,\y /0.8 + 0.25);
        \draw[->] (\x-0.6, \y /0.8 + 0.25) -- (\x - 0.9 ,\y /0.8 + 0.25);
        \node[left] at (\x-1+0.08, \y / 1.6) {\large $f(v_{\z}) \delta^{-1}$};
        \draw[->] (\x -1.5, \y / 1.6 + 0.4) -- (\x -1.5, \y / 0.8 -0.2);
        \draw[->] (\x -1.5, \y / 1.6 - 0.4) -- (\x -1.5, 0.2);
        \draw [olive] (\x, \y / 0.8) -- (\x, \y);
        \fill[pattern=north west lines, pattern color=olive] (\x,0) rectangle (\x -1, \y / 0.8);
        }
        \node[left] at (0,11.8) {\huge $f(v)$};
        \end{tikzpicture}}
    \caption{Discrete $\to$ Continuous}
    \label{fig:disc-cont}
  \end{minipage}
  \qquad
  \begin{minipage}[b]{0.5\linewidth}
    \centering
    \resizebox{200pt}{150pt}{\begin{tikzpicture}

        % Define the axis
        \draw[->] (0,0) -- (12,0);
        \draw[->] (0,0) -- (0,12);
        
        % Define the intervals and their sizes
        \def\intervals{0, 0.5, 1, 1.5, 2, 2.5, 3, 4, 4.5, 5, 6, 7, 7.3, 8, 9, 9.15, 10, 11}
        \def\sizes{{1, 2, 0.5}}
        
        % Draw the sub intervals
        \foreach \x [count=\xi] in \intervals {
            \draw (\x,0.1) -- (\x,-0.1);
        }
        \definecolor{blizzardblue}{rgb}{0.67, 0.9, 0.93}
        
        \draw[color=blue] (0,0) rectangle (2.5,1.5);
        \fill[pattern=north west lines, pattern color=blizzardblue] (0,0) rectangle (2.5,1.5);
        \draw[color=blue] (2.5,0) rectangle (7.3,4);
        \fill[pattern=north west lines, pattern color=blizzardblue] (2.5,0) rectangle (7.3,4);
        \draw[color=blue] (7.3,0) rectangle (9.15,3);
        \fill[pattern=north west lines, pattern color=blizzardblue] (7.3,0) rectangle (9.15,3);
        \draw[color=blue] (9.15,0) rectangle (11,1);
        \fill[pattern=north west lines, pattern color=blizzardblue] (9.15,0) rectangle (11,1);
        
        \node[below] at (0,-0.1) {\huge $v_0$};
        \node[below] at (11,-0.1) {\huge $1$};
        
        \draw[color=red] (0,0) -- (0,0.4);
        \fill[color=red] (0,0.4) circle[radius=2pt];
        \draw[color=red] (0.5,0) -- (0.5,0.4);
        \fill[color=red] (0.5,0.4) circle[radius=2pt];
        \draw[color=red] (1,0) -- (1,0.4);
        \fill[color=red] (1,0.4) circle[radius=2pt];
        \draw[color=red] (1.5,0) -- (1.5,0.4);
        \fill[color=red] (1.5,0.4) circle[radius=2pt];
        \draw[color=red] (2,0) -- (2,0.4);
        \fill[color=red] (2,0.4) circle[radius=2pt];
        
        \draw[color=red] (2.5,0) -- (2.5,1.7);
        \fill[color=red] (2.5,1.7) circle[radius=2pt];
        \draw[color=red] (3,0) -- (3,3.4);
        \fill[color=red] (3,3.4) circle[radius=2pt];
        \draw[color=red] (4,0) -- (4,1.7);
        \fill[color=red] (4,1.7) circle[radius=2pt];
        \draw[color=red] (4.5,0) -- (4.5,1.7);
        \fill[color=red] (4.5,1.7) circle[radius=2pt];
        \draw[color=red] (5,0) -- (5,3.4);
        \fill[color=red] (5,3.4) circle[radius=2pt];
        \draw[color=red] (6,0) -- (6,3.4);
        \fill[color=red] (6,3.4) circle[radius=2pt];
        \draw[color=red] (7,0) -- (7,1.7*0.6);
        \fill[color=red] (7,1.7*0.6) circle[radius=2pt];
        \draw[color=red] (7.3,0) -- (7.3,2.5*0.7);
        \fill[color=red] (7.3,2.5*0.7) circle[radius=2pt];
        \draw[color=red] (8,0) -- (8,2.5);
        \fill[color=red] (8,2.5) circle[radius=2pt];
        \draw[color=red] (9,0) -- (9,2.5*0.15);
        \fill[color=red] (9,2.5*0.15) circle[radius=2pt];
        \draw[color=red] (9.15,0) -- (9.15,0.8*0.85);
        \fill[color=red] (9.15,0.8*0.85) circle[radius=2pt];
        \draw[color=red] (10,0) -- (10,0.8);
        \fill[color=red] (10,0.8) circle[radius=2pt];
        
        \node[above] at (3.5,0) {\Large $\delta$};
        \draw[->] (3.65,0.25) -- (3.9,0.25);
        \draw[->] (3.4,0.25) -- (3.1,0.25);
        
        \draw[->] (0,0) -- (12,0);

        % Labels
        \node[below] at (11.8,-0.1) {\huge $v$};
        \node[left] at (-0.1,11.8) {\huge $f(v)$};
        \end{tikzpicture}}
    \caption{Continuous $\to$ Discrete}
    \label{fig:cont-disc}
  \end{minipage}
\end{figure}

\begin{lemma}\label{lem:CFPA2DFPA}
Given $\delta \in (0,1)$ (in unary) and a CFPA, we can construct in polynomial time a DFPA (with the same bidding space) such that for any $\varepsilon \geq 0$, we can transform in polynomial time any $\varepsilon$-well-supported \textbf{monotone} MBNE of the DFPA into an $(\varepsilon + \delta)$-PBNE of the CFPA. Furthermore, this reduction maps IPV (resp.\ iid) auctions to IPV (resp.\ iid) auctions, and symmetric equilibria to symmetric equilibria.
\end{lemma}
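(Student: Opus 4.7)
The plan is to ``discretize'' the CFPA by collapsing intervals of constant pdf into single points, in a dual fashion to the construction of \cref{lem:DFPA2CFPA}. Concretely, we choose a refinement $0 = p_0 < p_1 < \cdots < p_K = 1$ of $[0,1]$ that (i) contains every endpoint of every piece of every pdf $d_{i,j}$ of the distributions $F_{i,j}'$, and (ii) satisfies $p_k - p_{k-1} \leq \delta$ for every $k$. Since $\delta$ is given in unary and the pdf endpoints are rational with polynomially-sized descriptions, $K$ remains polynomial and the construction runs in polynomial time. We then set the DFPA value space to $V_j := \{p_0, p_1, \ldots, p_{K-1}\}$ for each bidder $j$, and define the discrete prior by $f_{i,j}(p_{k-1}) := \int_{p_{k-1}}^{p_k} d_{i,j}(u)\, du$; this is a valid probability mass function and the IPV/iid structure of the CFPA is clearly preserved. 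Crucially, we use the \emph{left} endpoint of each interval as its representative discrete value, so that the no-overbidding property can be transferred back to the continuous setting in the reverse direction.

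The core object is a bid-preserving translation from DFPA mixed strategies to CFPA pure strategies. Given a monotone mixed strategy $\beta_j$ of the DFPA, we build a pure strategy $\hat\beta_j' : [0,1] \to B$ by partitioning each interval $I_k := [p_{k-1}, p_k)$ further, in order of increasing bid, into consecutive sub-intervals $I_k^b$ of length $\beta_j(p_{k-1})(b) \cdot (p_k - p_{k-1})$ (omitting bids $b$ with $\beta_j(p_{k-1})(b) = 0$), and setting $\hat\beta_j'(u) := b$ for $u \in I_k^b$. By the monotonicity of $\beta_j$ (\cref{def:monotonicty-discrete}), the largest bid used on $I_k$ is at most the smallest bid used on $I_{k+1}$, so $\hat\beta_j'$ is a valid non-decreasing step function; symmetry and IPV/iid structure are likewise preserved. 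No-overbidding transfers as well: every $u \in I_k$ satisfies $u \geq p_{k-1}$, and every bid $b \in \support{\beta_j(p_{k-1})}$ satisfies $b \leq p_{k-1} \leq u$.

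The construction is designed so that, for every $i \neq j$, the distribution over $B$ obtained by first sampling $v_j \sim F_{i,j}$ and then $b \sim \beta_j(v_j)$ coincides exactly with the distribution obtained by first sampling $u \sim F_{i,j}'$ and then outputting $\hat\beta_j'(u)$. Indeed, conditional on $u \in I_k$, $u$ is uniform on $I_k$ (the pdf is constant there), and by construction $\Pr[\hat\beta_j'(u) = b \mid u \in I_k] = \beta_j(p_{k-1})(b)$, while $\Pr[u \in I_k]$ equals $f_{i,j}(p_{k-1})$ by definition. Consequently, letting $H_i(b, \vec\beta_{-i})$ and $H_i'(b, \hat{\vec\beta}_{-i}')$ denote the winning probabilities of bidder $i$ when bidding $b$ in the DFPA and CFPA respectively (in the sense of \cref{lemma:DFPA-utilities-efficient}), we get $H_i(b, \vec\beta_{-i}) = H_i'(b, \hat{\vec\beta}_{-i}')$ for every $b \in B$.

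It remains to verify the $(\varepsilon+\delta)$-PBNE property. Fix $i \in N$ and $v_i \in [0,1]$, let $k$ be such that $v_i \in I_k$, and set $b := \hat\beta_i'(v_i)$; by construction $\beta_i(p_{k-1})(b) > 0$, so the $\varepsilon$-well-supportedness of $\vec\beta$ at the discrete value $p_{k-1}$ yields $u_i(b, \vec\beta_{-i}; p_{k-1}) \geq u_i(b', \vec\beta_{-i}; p_{k-1}) - \varepsilon$ for all $b' \in B$. Using $H_i = H_i'$ to rewrite this in the CFPA, and then switching $p_{k-1}$ for $v_i$, introduces an additional error of at most $|H_i'(b, \hat{\vec\beta}_{-i}') - H_i'(b', \hat{\vec\beta}_{-i}')| \cdot |v_i - p_{k-1}| \leq 1 \cdot \delta = \delta$, giving $u_i'(b, \hat{\vec\beta}_{-i}'; v_i) \geq u_i'(b', \hat{\vec\beta}_{-i}'; v_i) - \varepsilon - \delta$, as required. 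The main subtlety is that the entire argument hinges on $\vec\beta$ being \emph{monotone}: this is exactly what guarantees that the locally-defined step functions on the intervals $I_k$ concatenate into a globally non-decreasing step function, which is in turn required for the strategy to be representable as a sequence of jump points in the sense of \parencite{fghlp2021_sicomp}. Starting from a non-monotone well-supported MBNE the same recipe would fail at precisely this gluing step, which is why the monotonicity hypothesis in the lemma cannot be dropped.
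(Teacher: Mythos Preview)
Your proof is correct and follows essentially the same approach as the paper: the same discretization of $[0,1]$ into intervals of length at most $\delta$ respecting the pdf breakpoints, the same assignment of mass to left endpoints, the same ``spread the mixed strategy out as ordered sub-intervals'' construction of the continuous pure strategy, and the same final error calculation via $H_i = H_i'$ and $|v_i - p_{k-1}| \leq \delta$. Your explicit justification that the two bid distributions coincide (via conditional uniformity on each $I_k$) is a nice touch that the paper only states without elaboration.
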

\begin{proof}
    Let $\delta \in (0,1)$ be given in unary. Consider a CFPA given by continuous distributions $F_{i,j}'$ with piecewise constant density functions. Denote by $S_{i,j}$ the smallest set of points in $[0,1]$ such that $\{0, 1\} \subseteq S_{i,j}$ and the density function of $F_{i,j}'$ is constant between any two adjacent points in $S_{i,j}$. Let $S := \cup_{i,j \in N: i \neq j} S_{i,j}$, and note that $|S|$ is polynomial in the size of the CFPA instance. Now construct a set $V \subset [0,1]$ of polynomial size such that $S \subseteq V$ and any two adjacent points in $V$ are separated by distance at most $\delta$. Since $\delta \in (0,1)$ is given in unary, this is easily achieved, e.g., by taking the union of $S$ and a $\delta$-grid on $[0,1]$.

    We construct a DFPA with the same bidding space $B$ and with value spaces $V_i := V \setminus \{1\}$ for all $i \in N$. For each $i \neq j$, the discrete distribution $F_{i,j}$ over $V_j$ is constructed from $F_{i,j}'$ as follows. For any $v_j \in V_j$, let $v_j^+ \in V$ denote the next largest element in $V$. Note that this is well-defined since $V = V_j \cup \{1\}$. Now we define $F_{i,j}$ to be the distribution that draws $v_j \in V_j$ with probability equal to the probability that $v \in [v_j,v_j^+]$ when $v \in [0,1]$ is drawn according to $F_{i,j}'$. For an illustration of the induced distributions, see \cref{fig:cont-disc}. This indeed defines a valid distribution on $V_j$, since $0 \in V_j$ and $1 \in V$, and so the intervals $[v_j, v_j^+]$, $v_j \in V_j$, form a partition of $[0,1]$ (ignoring endpoints). Note that the additional structure of the auction (namely IPV or iid) is kept intact.
    
    Now consider any $\varepsilon$-well-supported monotone MBNE $\vec{\beta}$ of the DFPA. We construct a corresponding strategy profile $\vec{\beta}'$ in the original CFPA. For any $j \in N$, $\beta_j': [0,1] \to B$ is a step-function constructed as follows. We describe $\beta_j'$ separately in each interval $[v_j,v_j^+]$, $v_j \in V_j$. For each $b \in \support{\beta_j(v_j)}$, there is a segment of height $b$ and length $\beta_j(v_j)(b) \cdot |v_j^+-v_j|$. The segments are ordered inside $[v_j,v_j^+]$ in order of increasing $b$. This yields a non-decreasing step-function inside $[v_j,v_j^+]$. Since the intervals $[v_j,v_j^+]$ form a partition of $[0,1]$, this defines $\beta_j'$ as a step-function\footnote{In the case where $\beta_j'$ is assigned two different values at a common endpoint between two adjacent intervals, we can just pick the smallest one.} over $[0,1]$. Since $\beta_j$ was assumed to be monotone, it follows that $\beta_j'$ is a non-decreasing step-function over all of $[0,1]$. Furthermore, since $\beta_j$ is non-overbidding, we have that $\max \support{\beta_j(v_j)} \leq v_j$ and thus $\max \beta_j'([v_j,v_j^+]) \leq v_j$, which means that $\beta_j'$ is also non-overbidding. Finally, note that if $\vec{\beta}$ is symmetric, then so is $\vec{\beta}'$.
    
    The construction of $F_{i,j}$ from $F_{i,j}'$, and the construction of $\vec{\beta}'$ from $\vec{\beta}$ have been carefully devised to ensure that for every $i \neq j$ the two following distributions over $B$ coincide
    \begin{itemize}
        \item[---] Pick $v_j \in V_j$ according to $F_{i,j}$, and then output $b \in B$ according to $\beta_j(v_j)$.
        \item[---] Pick $v \in [0,1]$ according to $F_{i,j}'$ and output $b = \beta_j'(v)$.
    \end{itemize}
    
    As a result, we have that
    \begin{equation}\label{eq:discr2cont-bis}
    u_i(b,\vec{\beta}_{-i};v_i) = u_i'(b,\vec{\beta}_{-i}';v_i)
    \end{equation}
    for all $i \in N$, $b \in B$, and $v_i \in V_i$, where $u_i, u_i'$ denote the utilities in the DFPA and CFPA respectively.
    
    It remains to prove that $\vec{\beta}'$ is an $(\varepsilon + \delta)$-PBNE of the CFPA. Consider any $i \in N$ and any $v \in [0,1]$, and let $b = \beta_i'(v)$. We wish to show that $u_i'(b,\vec{\beta}_{-i}';v) \geq u_i'(b',\vec{\beta}_{-i}';v) - \varepsilon - \delta$ for all $b' \in B$. By construction of $\beta_i'$ from $\beta_i$, there exists $v_i \in V_i$ such that $v \in [v_i,v_i^+]$ and $b \in \support{\beta_i(v_i)}$. Furthermore, by construction of $V_i$ we also have $|v-v_i| \leq \delta$. Now, since $\vec{\beta}$ is an $\varepsilon$-well-supported MBNE of the DFPA, we have that for all $b' \in B$
    \begin{alignat*}{2}
    && u_i(b,\vec{\beta}_{-i};v_i) &\geq u_i(b',\vec{\beta}_{-i};v_i) - \varepsilon\\
    \implies \quad && u_i'(b,\vec{\beta}_{-i}';v_i) &\geq u_i'(b',\vec{\beta}_{-i}';v_i) - \varepsilon\\
    \implies \quad && H_i(b,\vec{\beta}_{-i}') \cdot (v_i-b) &\geq H_i(b',\vec{\beta}_{-i}') \cdot (v_i-b') - \varepsilon\\
    \implies \quad && H_i(b,\vec{\beta}_{-i}') \cdot (v-b) &\geq H_i(b',\vec{\beta}_{-i}') \cdot (v-b') - \varepsilon + \big(H_i(b',\vec{\beta}_{-i}') - H_i(b,\vec{\beta}_{-i}')\big) \cdot (v_i-v)\\
    \implies \quad && u_i'(b,\vec{\beta}_{-i}';v) &\geq u_i'(b',\vec{\beta}_{-i}';v) - \varepsilon - \delta
    \end{alignat*}
    where we used \eqref{eq:discr2cont-bis}, $|v-v_i| \leq \delta$, and the terms $H_i(b,\vec{\beta}_{-i}') \in [0,1]$ which denote the probability for bidder $i$ to win the item when bidding $b$, while the other bidders follow the strategy profile $\vec{\beta}_{-i}'$ in the CFPA. This completes the proof.
    
\end{proof}

\subsection{PPAD-completeness}
\label{sec:PPAD-completeness-mixed}

Next, we provide the proof of \cref{th:PPAD-completeness-informal}, i.e., the PPAD-completeness of computing approximate MBNE of the DFPA with subjective priors. The PPAD-membership follows directly from our equivalence result and the PPAD-membership of pure PBNE of the CFPA proven by \textcite{fghlp2021_sicomp}. Our result does not only establish that computing $\varepsilon$-approximate equilibria is in PPAD, but that the membership holds even for $\varepsilon$-well-supported MBNE (\cref{def:well-supported-bayes-nash-equilibrium}) which are also monotone (\cref{def:monotonicty-discrete}). The formal statement of our membership lemma is below:

\begin{lemma}\label{lem:PPAD-membership}
The problem of computing a monotone $\varepsilon$-well-supported MBNE of the DFPA with subjective priors is in PPAD.
\end{lemma}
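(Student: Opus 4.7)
The plan is to derive PPAD membership essentially as a corollary of the reduction in \cref{lem:DFPA2CFPA} combined with the PPAD-membership result of \textcite{fghlp2021_sicomp} for approximate pure equilibria of the continuous first-price auction. In particular, there is no need to set up a standalone Brouwer/Sperner fixed-point encoding in the style of \textcite{Daskalakis2006}: the equivalence framework lets us reuse the continuous-setting proof as a black box.

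Concretely, given a DFPA instance with subjective priors together with an approximation parameter $\varepsilon$ (inversely-exponential in the input size, as dictated by the representation), I would first set $\delta := \varepsilon/2$ and invoke \cref{lem:DFPA2CFPA} to construct, in polynomial time, a CFPA $\mathcal{C}$ over the same bidding space and also with subjective priors. That lemma simultaneously supplies a polynomial-time post-processing that maps any $(\varepsilon/2)$-PBNE of $\mathcal{C}$ to a monotone $\varepsilon$-well-supported MBNE of the original DFPA. The second step is to appeal to the result of \textcite{fghlp2021_sicomp}, which puts the problem of computing an $\varepsilon'$-PBNE of a CFPA with subjective priors and discrete bids in PPAD whenever $\varepsilon' = 1/2^{\poly(n)}$. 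Composing the two polynomial-time reductions (the DFPA-to-CFPA construction, and the transformation of the PBNE back to a MBNE) yields a polynomial-time reduction from our problem to a PPAD-complete problem, and since PPAD is closed under polynomial-time reductions, the lemma follows.

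The only subtle point, and the one I expect to be the main obstacle if anywhere, lies inside the invocation of \cref{lem:DFPA2CFPA}: its proof additionally requires $\delta$ to satisfy the spacing condition $2\delta \leq \min_{p \neq q \in V_1\cup\cdots\cup V_n\cup B} |p-q|$, and if our chosen $\delta = \varepsilon/2$ violates it the construction internally shrinks $\delta$ further. Since all inputs are rationals of polynomially-bounded bit-length, this minimum spacing is at least $1/2^{\poly(n)}$, so the effective $\delta$ remains inversely-exponential in the input size and the final approximation error $\varepsilon/2 + \delta$ is still at most $\varepsilon$; moreover, the resulting CFPA has size polynomial in the original DFPA, ensuring that the precision $\varepsilon/2$ at which we invoke \textcite{fghlp2021_sicomp} remains inversely-exponential in the size of $\mathcal{C}$ and thus lies within their membership regime. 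Finally, monotonicity of the resulting MBNE is guaranteed directly by \cref{lem:DFPA2CFPA}, so no further post-processing is required.
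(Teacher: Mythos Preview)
Your proposal is correct and follows essentially the same route as the paper: apply \cref{lem:DFPA2CFPA} to reduce to the CFPA setting and then invoke the PPAD-membership result of \textcite{fghlp2021_sicomp}. The paper's proof is terser (it additionally notes TFNP membership via \cref{lemma:DFPA-utilities-efficient} as a preliminary observation and does not spell out the $\delta=\varepsilon/2$ split or the spacing-condition check), but your extra care on these points is harmless and does not constitute a different approach.
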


\begin{proof}
First, we observe that the problem lies in the class TFNP of total search problems that have efficiently verifiable solutions \parencite{megiddo1991total}. This is because by \cref{lemma:DFPA-utilities-efficient}, we can compute the utility of a given strategy in polynomial time. Then, it suffices to check if a given strategy is utility-maximizing against all possible pure deviations, see also \cref{note:MBNE-def-pure-deviation}. For the PPAD-membership, we apply \cref{lem:DFPA2CFPA}, which reduces the problem of finding a monotone $\varepsilon$-well-supported MBNE of the DFPA to that of finding a (monotone) $\varepsilon$-approximate PBNE of the CFPA. The latter problem is in PPAD by \parencite[Theorem 4.4.]{fghlp2021_sicomp}. 
\end{proof}

We now turn our attention to the PPAD-hardness of the problem. A first observation is that our discrete-continuous equivalence result can already yield \emph{some} PPAD-hardness result for the DFPA: starting from the PPAD-hardness result of \textcite[Theorem 5.1.]{fghlp2021_sicomp}, one could use \cref{lem:CFPA2DFPA} to reduce from the continuous to the discrete version. However, that (coupled with \cref{lem:NE2WSNE}) would result in the PPAD-hardness of computing \emph{monotone} $\varepsilon$-approximate MBNE. Monotonicity is a desirable property for positive results (as it makes the results stronger), but we would ideally like to prove hardness for \emph{any} type of MBNE, monotone or not. Another caveat of the equivalence approach is that the PPAD-hardness result would require instances with relatively large bidding spaces. 

To avoid these shortcomings, we provide a direct PPAD-hardness result for the problem, via a reduction from the recently proposed PPAD-complete problem \textsc{PureCircuit} \parencite{deligkas2022pure}. Our reduction shows computational hardness for $\varepsilon$-approximate equilibria that need not be monotone, and for instances with very small bidding spaces, thus making the hardness result stronger. We remark that the proof of \cref{lem:ppad-hardness} below first establishes PPAD-hardness for $\varepsilon$-well-supported MBNE on instances with small interaction degree, and then invokes \cref{lem:NE2WSNE} to strengthen the result to the PPAD-hardness of $\varepsilon$-approximate MBNE. The proof is included in \cref{app:ppad-hardness}.

\begin{lemma}\label{lem:ppad-hardness}
There exists a constant $\varepsilon > 0$ such that computing an $\varepsilon$-approximate MBNE of the DFPA with subjective priors is PPAD-hard.
\end{lemma}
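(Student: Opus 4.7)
The plan is to reduce from the PPAD-complete problem \pcircuit~\parencite{deligkas2022pure}, whose instances are circuits built from the three-valued gates \NOT, \OR, \AND, and \PURE. The overall strategy is to first establish PPAD-hardness of computing an $\varepsilon$-well-supported MBNE (\cref{def:well-supported-bayes-nash-equilibrium}) of a DFPA whose interaction degree is bounded by a constant $d$, and then to invoke \cref{lem:NE2WSNE} to transfer this hardness to $\varepsilon'$-approximate MBNE, where $\varepsilon' = \varepsilon^2/(8d)$ remains a positive constant because $\varepsilon$ and $d$ are both constants.

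The encoding will use a very small bidding space $B$ (ideally of size two or three) and associate to each wire $x$ of the \pcircuit instance one dedicated bidder $i_x$. The logical value of $x$ is read off from $\beta_{i_x}$: if most of the strategy's mass lies on the low bid we interpret $x=0$, if on the high bid we interpret $x=1$, and intermediate mass corresponds to $\bot$. The key quantitative point is that in any $\varepsilon$-well-supported MBNE, whenever both the low and high bids appear in $\support{\beta_{i_x}}$, their interim utilities must agree up to $\varepsilon$; this indifference condition is what pins down the induced winning probabilities as a function of the ``logical'' outputs of $i_x$'s neighbours.

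For each gate, we design a constant-size gadget consisting of a few auxiliary bidders with carefully chosen subjective priors. The crucial feature of the subjective-priors model is that bidder $i$'s perception of bidder $j$'s bid distribution can be directly engineered by tuning $F_{i,j}$, essentially independently of the distribution that $\beta_j$ actually produces. Combined with the indifference conditions imposed by well-supportedness, this allows us, for example, to force the output bidder of a \NOT gadget to concentrate on the bid opposite to its input's, and to handle \AND and \OR by having the output bidder multiplicatively aggregate the subjective winning probabilities against its two inputs. The \PURE gate is implemented by a pair of output bidders whose coupled priors and indifference constraints force at least one of them out of the intermediate regime whenever the input is in it. Each gadget contributes only $O(1)$ to every bidder's interaction count, so the assembled DFPA has constant interaction degree $d$.

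The main obstacle is the correctness analysis of the gadgets: for every gate one must verify both that (i) any $\varepsilon$-well-supported MBNE of the gadget induces a valid three-valued assignment satisfying the gate's semantics, with enough slack to accommodate the interpretation of intermediate probabilities as $\bot$, and (ii) every valid \pcircuit assignment extends to a global $\varepsilon$-well-supported MBNE of the combined auction. The \PURE gadget is expected to be the most intricate, since its definition is asymmetric and requires breaking symmetry between its two outputs exactly when the input is $\bot$, while remaining consistent when the input is a definite bit. Once the gadgets are in place, PPAD-hardness of $\varepsilon$-well-supported MBNE on bounded-interaction-degree instances follows from PPAD-hardness of \pcircuit, and \cref{lem:NE2WSNE} completes the reduction to $\varepsilon'$-approximate MBNE for a suitable constant $\varepsilon'>0$.
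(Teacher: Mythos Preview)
Your high-level plan matches the paper's approach almost exactly: reduce from \pcircuit, build per-gate gadgets with bounded interaction degree, prove hardness for $\varepsilon$-well-supported MBNE, then invoke \cref{lem:NE2WSNE}. Two small corrections and one caution are worth flagging.

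First, \pcircuit as defined in \parencite{deligkas2022pure} (and as used in the paper) has only \NOT, \AND, and \PURE gates; there is no \OR. Second, your sentence ``bidder $i$'s perception of bidder $j$'s bid distribution can be directly engineered by tuning $F_{i,j}$, essentially independently of the distribution that $\beta_j$ actually produces'' is misleading and could lead you into a dead end. You can only choose the prior over $j$'s \emph{values}; what bids $j$ places on those values is determined by $\beta_j$, which you do not control. The paper's gadgets work precisely because the output bidder's best response depends on the input bidder's actual mixed strategy at its non-zero value, and this is what encodes the logical bit. What subjective priors buy you is the ability to make $i$ ``ignore'' all but a constant number of neighbours (by putting their perceived value at $0$, forcing them to bid $0$ by no-overbidding), which is how bounded interaction degree is achieved.

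Finally, the paper uses four bids, not two or three; in particular the \NOT gadget requires an auxiliary bidder who can bid $b_3$ together with a ``constant'' bidder fixed at $b_2$, so your hope of getting away with a binary bidding space is likely too optimistic. The gadget constructions and their quantitative analysis are the entire content of the proof, and your proposal stops short of supplying them.
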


\section{Symmetric Equilibria in IID Auctions} \label{sec:symmetric}

Motivated by the computational hardness results in the previous sections, we now consider the natural special case of \emph{iid priors}, and the computation of \emph{symmetric mixed equilibria} in such auctions. Interestingly, positive results for this case were not known before, even for continuous auctions. We show that symmetric MBNE always exist in a DFPA and we provide a PTAS for their computation. 

Before we get to this result, we present the following useful lemma. We will make use of the lemma for iid priors, but it holds more generally for subjective priors. The lemma shows that one can, in polynomial time, translate an $\varepsilon$-approximate MBNE of the DFPA in a \emph{shrunk} bidding space (of smaller size), to an $\varepsilon'$-approximate equilibrium of the auction on the original bidding space. A process similar in nature was presented in \parencite[Proof of Thm 1.2, Step 1]{chen2023complexity} for the case of the CFPA in the IPV model, and \emph{ex-ante} \emph{pure} equilibria.\footnote{See \cref{app:interim} for the definition of the notion.} Our proof holds for subjective priors and interim mixed equilibria, and is simpler in nature. 

\begin{lemma}[Bidding Space Shrinkage Lemma]
    \label{lemma:shrinkage}
    Consider a DFPA with bidding space $B$ and let $M$ be a positive integer. We can construct a bidding space $B'\subseteq B$ with cardinality $\card{B'}\leq M$, in time polynomial in $M$ and the size of the input such that any $\varepsilon$-approximate MBNE of the auction restricted to the bidding space $B'$ is a $\left(\varepsilon+\frac{1}{M}\right)$-approximate MBNE in the original auction.
\end{lemma}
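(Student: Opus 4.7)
The plan is to take $B'$ to be a coarse upward cover of $B$: for each $k = 1, \dots, M$, let $b_k^{*}$ be the largest bid in $B$ with $b_k^{*} \leq k/M$ (if any), and let $B'$ collect these bids, with the convention that $0 \in B'$ (which we may assume lies in $B$). After minor bookkeeping this yields $B' \subseteq B$ with $|B'| \leq M$, constructible in polynomial time, enjoying the following \emph{upward-covering property}: for every $b \in B$ there exists $b^{+} \in B'$ with $b \leq b^{+} \leq b + 1/M$. Indeed, taking $k = \lceil b M \rceil$ forces $b_k^{*} \geq b$ and $b_k^{*} \leq k/M \leq b + 1/M$.

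To establish the transfer of approximate equilibria, I would fix any $\varepsilon$-approximate MBNE $\vec{\beta}$ of the restricted auction, a bidder $i$, a value $v_i \in V_i$, and (by \cref{note:MBNE-def-pure-deviation}) an arbitrary pure deviation $b \in B$ with $b \leq v_i$. The key computation compares $u_i(b, \vec{\beta}_{-i}; v_i)$ with $u_i(b^{+}, \vec{\beta}_{-i}; v_i)$ using monotonicity of the winning-probability function $H_i(\cdot, \vec{\beta}_{-i})$ from \cref{lemma:DFPA-utilities-efficient} in the bid argument. Assuming $b^{+} \leq v_i$, so $b^{+}$ is a legal non-overbidding deviation inside $B'$, and using $v_i - b \geq 0$,
\begin{align*}
u_i(b, \vec{\beta}_{-i}; v_i)
&= H_i(b, \vec{\beta}_{-i})(v_i - b) \leq H_i(b^{+}, \vec{\beta}_{-i})(v_i - b) \\
&= u_i(b^{+}, \vec{\beta}_{-i}; v_i) + H_i(b^{+}, \vec{\beta}_{-i})(b^{+} - b) \leq u_i(b^{+}, \vec{\beta}_{-i}; v_i) + 1/M.
\end{align*}
The $\varepsilon$-MBNE condition in the restricted game then delivers $u_i(b, \vec{\beta}_{-i}; v_i) \leq u_i(\beta_i(v_i), \vec{\beta}_{-i}; v_i) + \varepsilon + 1/M$, as desired.

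The main obstacle I anticipate is the boundary case $v_i - 1/M < b \leq v_i$: here every upward cover $b^{+}$ may exceed $v_i$ and thus fail to be a legal non-overbidding deviation in the restricted game, so the $\varepsilon$-MBNE condition cannot be invoked against $b^{+}$. The right way around this is to bypass the equilibrium comparison entirely: the surplus $v_i - b$ is already at most $1/M$, so $u_i(b, \vec{\beta}_{-i}; v_i) < 1/M$, while $u_i(\beta_i(v_i), \vec{\beta}_{-i}; v_i) \geq 0$ by non-overbidding, giving the same $\varepsilon + 1/M$ deviation bound for free. Beyond this, the only other subtlety is ensuring that $0 \in B'$, so that bidders with very small values still have a legal bid in the restricted auction; this is what forces the explicit inclusion of $0$ in the construction.
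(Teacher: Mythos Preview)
Your argument is correct and close in spirit to the paper's, but with one substantive difference in the construction. The paper takes $B'$ to be the \emph{minimum} bid in each interval $[\ell/M,(\ell+1)/M]$ and then argues via a two-sided bound $|u_i(\tilde b,\vec\beta_{-i};v_i)-u_i(b,\vec\beta_{-i};v_i)|\le |b-\tilde b|\max\{H_i(b),H_i(\tilde b)\}$. You instead build an \emph{upward} cover (largest bid below each gridpoint $k/M$) and prove the one-sided estimate $u_i(b,\vec\beta_{-i};v_i)\le u_i(b^{+},\vec\beta_{-i};v_i)+1/M$, which drops out immediately from $H_i(b)\le H_i(b^{+})$. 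Your route is the cleaner of the two: the one-sided bound needs only monotonicity of the winning probability, whereas the paper's two-sided inequality is more delicate in the regime $\tilde b<b$ with opponents concentrating mass at $\tilde b$.

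Two small remarks. First, your boundary case $b^{+}>v_i$ is not actually needed: \cref{def:approx-mixed-bayes-nash-equilibrium} quantifies over \emph{all} deviations in $\Delta(B')$, including overbids, and your displayed chain of inequalities is valid regardless of the sign of $v_i-b^{+}$ (you only use $v_i-b\ge 0$, $H_i(b)\le H_i(b^{+})$, and $H_i(b^{+})\le 1$). So the main case already covers everything; your direct workaround is of course also fine. Second, explicitly adjoining $0$ to $\{b_1^{*},\dots,b_M^{*}\}$ can push $|B'|$ to $M+1$; this is purely cosmetic (run the construction with $M-1$ gridpoints, or note that $b_1^{*}$ may already equal $0$).
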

\begin{proof}
    Given a bidding space $B$ and a positive integer $M$, we construct the restricted bidding space as follows:
    \begin{equation*}
        \tilde{B}= \bigunion_{\ell = 0}^{M-1} \min \left\{ b \in B \fwh{\ell \frac{1}{M} \leq b \leq (\ell + 1) \frac{1}{M}}  \right\},
    \end{equation*}
    where we allow a slight abuse of notation, such that the minimum over an empty set is defined to be the empty set.
    To better understand how this smaller bidding space can approximate the original, it is useful to define the following notion of distance between sets, which is an instantiation of the Hausdorff distance:
    \begin{equation*}
        d(B,B') \coloneqq \max_{b \in B} \min_{b' \in B'} |b-b'| 
    \end{equation*}

    By the definition of $\tilde{B}$, we can see that in our case this is bounded by the size of each interval, hence:
    \begin{equation}
    \label{eq:set-distance-bound}
    d(B,\tilde{B})\leq \frac{1}{M}. 
    \end{equation}

    Let $\mathcal A$ be the original auction (with bidding space $B$) and $\tilde{\mathcal A}$ be the one corresponding
    to the new bidding space $\tilde{B}$. Additionally, let $\tilde{\beta}$ be an $\varepsilon$-approximate MBNE of $\tilde{\mathcal{A}}$. First, we show that for any (mixed) strategy profile $\tilde{\vec{\beta}}$ of $\tilde{B}$, any bidder $i$ with a value $v_i$, and any bid $b\in B$, there has to exist a $\tilde b\in \tilde{B}$ such that: 
    \begin{equation*}
        \label{eq:shrinkage-utility-distance-bound}
        \card{u_i(\tilde{b}, \tilde{\vec{\beta}}_{-i};v_i) - u_i(b; \tilde{\vec{\beta}}_{-i};v_i)} \leq d(B,\tilde{B}).
    \end{equation*}

    To see why this is true, we will bound the left-hand side in the following way:
    \begin{align}
        \label{eq: shrinkage-pure-bound}
        \notag \card{u_i(\tilde{b}, \tilde{\vec{\beta}}_{-i};v_i) - u_i(b; \tilde{\vec{\beta}}_{-i};v_i)} &= \card{(v_i-\tilde{b})H(\tilde{b},\tilde{\vec{\beta}}_{-i}) - (v_i-b) H(b, \tilde{\vec{\beta}}_{-i})} \\ 
        \notag &\leq \card{b-\tilde{b}}\max \{ H(b,\tilde{\vec{\beta}}_{-i}),  H(\tilde{b},\tilde{\vec{\beta}}_{-i})\} \\ 
        &\leq \card{b-\tilde{b}} \leq d(B, \tilde{B})
    \end{align}
    where we have used the definition of the utility function~\eqref{eq:H-function} and we have bounded the function $H$ trivially by $1$ (since it is a probability).

    Next, assume that $\tilde{\vec{\beta}}$ is an $\varepsilon$-approximate MBNE of $\tilde{\mathcal{A}}$. Then, by \cref{note:MBNE-def-pure-deviation}, it should be the case that, for all $i \in N, v_i \in V_i, \tilde{b} \in \tilde{B}$, the following holds:
    \begin{equation}
        \label{eq:shrinkage-MBNE-condition}
        u_i(\tilde{\vec{\beta}};v_i) \geq u_i(\tilde{b}, \tilde{\vec{\beta}}_{-i};v_i) - \varepsilon 
    \end{equation}

    Using the bounds derived in \cref{eq:set-distance-bound,eq: shrinkage-pure-bound}, we can substitute into \eqref{eq:shrinkage-MBNE-condition} to obtain:
    \begin{equation*}
        \label{eq:shrinkage-to-show}
        u_i(\tilde{\vec{\beta}};v_i) \geq u_i(b, \tilde{\vec{\beta}}_{-i};v_i) - \varepsilon - \frac{1}{M}
        \qquad \forall i \in N,\; \forall v_i \in V_i, \forall b \in B,
    \end{equation*}
    and this is precisely (due to \cref{note:MBNE-def-pure-deviation}) the sufficient condition for $\tilde{\vec{\beta}}$ to be an $(\varepsilon + \frac{1}{M})$-approximate MBNE of $\mathcal{A}$.

\end{proof}

We now present our positive results for the DFPA with iid priors and the computation of symmetric MBNE. We start with the following theorem which establishes the existence of symmetric and monotone equilibria for the auction in this case. 

\begin{theorem}
\label{th:existence-symmetric-equilibria-iid}
In every DFPA with iid priors there exists a symmetric and monotone (exact) MBNE.
\end{theorem}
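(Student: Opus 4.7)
The plan is to combine the discrete-continuous equivalence from \cref{lem:DFPA2CFPA} with a compactness-and-continuity argument, producing a sequence of approximate symmetric monotone MBNE of the DFPA that converges to an exact one.

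First, I construct the approximate equilibria. For each positive integer $k$, applying \cref{lem:DFPA2CFPA} with $\delta = 1/k$ yields an iid CFPA (the reduction preserves both the iid structure and symmetric equilibria). This CFPA admits a symmetric monotone exact PBNE via a standard Kakutani fixed-point argument: symmetric monotone step-function strategies form a compact convex polytope parameterized by jump points $0 \leq \alpha_1 \leq \cdots \leq \alpha_{m-1} \leq 1$, and the symmetric best-response correspondence (which, in an iid symmetric auction, depends only on the marginal bid distribution of the opponents) is nonempty, convex-valued, and upper hemicontinuous by the Maximum Theorem. Transferring this PBNE back through \cref{lem:DFPA2CFPA} produces a symmetric monotone $(1/k)$-well-supported MBNE $\beta^{(k)}$ of the original DFPA.

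Next, I extract a limit. The space of symmetric non-overbidding mixed strategies is a closed subset of the compact product $\Delta(B)^V$, hence compact; choose a subsequence $\beta^{(k_\ell)}$ converging componentwise to some symmetric non-overbidding $\beta^*$. Monotonicity carries through to the limit: if $b \in \support{\beta^*(v)}$ and $b' \in \support{\beta^*(v')}$ with $v < v'$, then eventually $b \in \support{\beta^{(k_\ell)}(v)}$ and $b' \in \support{\beta^{(k_\ell)}(v')}$, whence $b \leq b'$ by monotonicity of $\beta^{(k_\ell)}$. Finally, the utility $u_i(b, \vec{\beta}_{-i}; v_i)$ is polynomial in the strategy probabilities (cf.\ \cref{lemma:DFPA-utilities-efficient}), hence continuous in $\vec{\beta}_{-i}$; so for any $b \in \support{\beta^*_i(v_i)}$ the $(1/k_\ell)$-well-supported inequality $u_i(b, \beta^{(k_\ell)}_{-i}; v_i) \geq u_i(b', \beta^{(k_\ell)}_{-i}; v_i) - 1/k_\ell$ (which holds for all sufficiently large $\ell$) passes to the limit, yielding the exact well-supported condition and hence the exact MBNE condition.

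The main obstacle I anticipate is justifying the existence of the symmetric monotone exact PBNE in the iid CFPA with discrete bids; although this is a routine Kakutani argument, the convex-valuedness of the symmetric best-response correspondence over step-function parameterizations must be checked with care, using the fact that best-response jump points are pinned down by indifference conditions and therefore form convex intervals. A secondary subtlety is that supports can only shrink under componentwise convergence, which preserves (rather than threatens) monotonicity and the well-supported property in the limit.
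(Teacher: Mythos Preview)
Your proposal is correct and follows essentially the same approach as the paper: invoke \cref{lem:DFPA2CFPA} to reduce to an iid CFPA, establish existence of a symmetric monotone exact PBNE there via a Kakutani argument (which the paper states as a separate result, \cref{th:existence-symmetric-PBNE-iid-CFPA}, deferring details to \textcite{Athey2001}), transfer back to obtain a sequence of symmetric monotone approximate MBNE of the DFPA, and then pass to the limit via compactness and continuity (which the paper packages as a separate Convergence Lemma, \cref{lemma:convergence-approx-equilibria-DFPA}). Your inlined limiting argument---in particular, showing monotonicity of $\beta^*$ by observing that any $b\in\support{\beta^*(v)}$ eventually lies in $\support{\beta^{(k_\ell)}(v)}$---is a minor but clean variant of the paper's version, which first passes to a subsequence with constant support and then notes that supports can only shrink in the limit.
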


Our proof will go via the variant of the auction with continuous values, again appealing to our computational equivalence result in \cref{sec:equivalence}. In particular, we will first show, employing Kakutani's fixed point theorem \parencite{kakutani1941generalization}, that the CFPA always admits an (exact) monotone PBNE. By \cref{lem:DFPA2CFPA} this will imply the existence of a monotone $\varepsilon$-approximate MBNE for the DFPA. 
Finally, taking the limit of the sequence of approximate equilibria as $\varepsilon \rightarrow 0$, we will obtain the existence of an exact monotone equilibrium. 

We remark that, to the best of our knowledge, general existence results for symmetric equilibria and iid priors have not been proven for either the DFPA or the CFPA. There is a plethora of results for symmetric equilibria of auctions where both the values and the bids are continuous (e.g., see \parencite{riley1981optimal, chawla2013auctions} and \parencite[Section 2.3]{krishna2009auction}), but those do not have implications for our setting. For the case of the CFPA, existence results have only been shown for special cases \parencite{chwe1989discrete}. The proof of \cref{th:existence-symmetric-equilibria-iid} is in \cref{sec:existence-symmetric-MBNE-iid-DFPA}.

We now proceed to present a PTAS (\emph{Polynomial-time approximation scheme}) for computing symmetric $\varepsilon$-approximate MBNE for iid priors; in our setting, this means an algorithm that on input the description of the DFPA and any $\varepsilon>0$, finds an $\varepsilon$-approximate monotone MBNE of the auction in time polynomial in the description of the auction and possibly exponential in $1/\varepsilon$. For any constant $\varepsilon>0$, this is a polynomial time algorithm for the problem. 

Our strategy for constructing this algorithm will be to express the equilibrium computation problem as a system of polynomial inequalities. By known results in the literature (e.g., see \cite{grigor1988solving}), such a system can be solved to any accuracy $\delta >0$ in time polynomial in $1/\log(\delta)$ and $(\kappa d)^{\beta^2}$, where $\kappa$ is the number of polynomials, $d$ is their maximum degree, and $\beta$ is the number of variables of the system. However, if we were to represent our equilibrium computation problem as such a system ``naively'', we would end up with a running time that is exponential in both $|V|$ (the common value space for all the bidders) and $|B|$. To avoid this, we exploit the monotonicity of the equilibrium to come up with a much more succinct representation of the strategies, which we refer to as \emph{support-representation}. This allows us to remove the exponential dependence on $|V|$, but $|B|$ still appears in the exponent of the running time. For that, we invoke \cref{lemma:shrinkage} to obtain a constant-sized bidding space, at the expense of some accuracy in the efficiency of our computed equilibrium. 

\begin{theorem}
    \label{th:symmetric-mixed-iid-algorithm}
    In every DFPA with $m$ bids and $n$ bidders having iid priors over $k$ values, a symmetric $\varepsilon$-approximate mixed equilibrium can be computed in time polynomial in $\log(1/\varepsilon)$ and $(nkm)^{m^2}$.
    \end{theorem}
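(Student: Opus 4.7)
The plan is to formulate the search for a symmetric monotone $\varepsilon$-approximate MBNE as a decision problem over a system of polynomial inequalities, and then apply the Grigor'ev--Vorobjov algorithm, whose running time scales as $\poly(\log(1/\delta))\cdot(\kappa d)^{\beta^2}$ in the number of polynomials $\kappa$, their maximum degree $d$, and the number of variables $\beta$. A naive formulation would use $\Theta(km)$ probability variables, yielding a running time exponential in $km$; the key ingredients are (a) to shrink the bidding space down to $O(1/\varepsilon)$ via \cref{lemma:shrinkage}, and (b) to exploit monotonicity to reduce the variable count from $\Theta(km)$ down to $O(m)$.

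First, apply \cref{lemma:shrinkage} with parameter $M=\lceil 2/\varepsilon\rceil$ to obtain a bidding space $B'\subseteq B$ of size at most $M$; any $(\varepsilon/2)$-approximate MBNE of the auction restricted to $B'$ lifts to an $\varepsilon$-approximate MBNE of the original auction, and by \cref{th:existence-symmetric-equilibria-iid} the restricted iid auction still admits a symmetric monotone exact MBNE, so the search below is guaranteed to succeed. Next, introduce the \emph{support representation}: a symmetric monotone non-overbidding strategy $\beta:V\to\Delta(B')$ is determined by (i) a \emph{support shape}, namely subsets $\{S_v\}_{v\in V}$ of $B'$ satisfying $\max S_v\leq \min S_{v'}$ for $v<v'$ and $\max S_v\leq v$, and (ii) a probability vector on each $S_v$. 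The monotonicity constraint is very rigid: consecutive supports overlap in at most one bid, and any bid that appears in the supports of three or more consecutive values forces all strictly intermediate values to have singleton support at that bid. A short counting argument then shows that $\sum_v(|S_v|-1)\leq m-1$, so after enforcing $\sum_b \beta(v)(b)=1$ only $O(m)$ free probability variables remain; moreover, encoding a shape by the contiguous range of values associated with each bid gives at most $k^{O(m)}$ distinct shapes, which we enumerate.

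For each fixed support shape, the MBNE conditions reduce to a system of polynomial inequalities. By (the proof of) \cref{lemma:DFPA-utilities-efficient}, each utility $u(b,\vec{\beta}_{-i};v)$ is a polynomial of degree $O(n)$ in the probability variables, so the constraints
\[
u(\beta(v),\vec{\beta}_{-i};v)\;\geq\; u(b,\vec{\beta}_{-i};v)-\varepsilon/2 \qquad\forall\,v\in V,\,b\in B',
\]
together with nonnegativity and normalization identities, yield $\kappa=O(km)$ polynomials of degree $d=O(n)$ in $\beta=O(m)$ variables. Running Grigor'ev--Vorobjov on each of the $k^{O(m)}$ systems at accuracy $\delta=2^{-\poly}$ -- fine enough that the Lipschitz continuity of the utilities in the probability variables preserves the approximate-equilibrium property -- costs $\poly(\log(1/\delta))\cdot(\kappa d)^{\beta^2}=\poly(\log(1/\varepsilon))\cdot(nkm)^{O(m^2)}$ per system; by the existence guarantee at least one shape will produce a feasible system, and its solution is the desired symmetric $\varepsilon$-approximate MBNE. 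The main technical obstacle is the combinatorial support-shape analysis -- controlling both the number of shapes and the free-variable count -- whereas the reduction to polynomial systems and the invocation of Grigor'ev--Vorobjov are essentially black-box given those bounds.
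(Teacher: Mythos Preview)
Your overall strategy---encode monotone symmetric strategies by a support shape plus $O(m)$ free probability variables, enumerate the $k^{O(m)}$ shapes, and for each shape solve a polynomial system via Grigor'ev--Vorobjov---is exactly the paper's approach, and your counting of shapes and free variables is correct.

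However, you have misplaced the shrinkage step. \cref{lemma:shrinkage} is \emph{not} used in the proof of this theorem; it is invoked only afterwards, in the PTAS corollary (\cref{th:ptas-symmetric-mixed-iid}), to trade the $(nkm)^{m^2}$ factor for a constant at the price of an exponential dependence on $1/\varepsilon$. If you shrink to $M=\lceil 2/\varepsilon\rceil$ bids first, then the number of variables, the number of shapes, and the degree all refer to $M$, and the resulting running time is $(nkM)^{O(M^2)}=(nk/\varepsilon)^{O(1/\varepsilon^2)}$, which is \emph{not} polynomial in $\log(1/\varepsilon)$ and does not match the stated bound $(nkm)^{m^2}$ in the original $m$. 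Your final displayed equality $\poly(\log(1/\delta))\cdot(\kappa d)^{\beta^2}=\poly(\log(1/\varepsilon))\cdot(nkm)^{O(m^2)}$ silently swaps $M$ back for $m$ and so does not follow. Simply drop the shrinkage paragraph: work in the original bidding space $B$, and the rest of your argument delivers the claimed bound directly, with the $\log(1/\varepsilon)$ dependence arising solely from the precision $\delta$ at which the polynomial system is solved.

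One smaller point: the paper is explicit about the ``rounding back'' step---Grigor'ev--Vorobjov returns a point $\delta$-close to a feasible solution, which need not itself satisfy the simplex and no-overbidding constraints exactly. The paper truncates small entries to zero, renormalizes, and shows that taking $\delta=\Theta(\varepsilon/(mn))$ makes the resulting valid strategy an $\varepsilon$-approximate MBNE. Your appeal to Lipschitz continuity is the right intuition but you should verify that the rounded probabilities form a genuine distribution (and respect no-overbidding) before invoking continuity of the utilities.
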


\begin{proof}
Fix a DFPA with $N=[n]$ bidders, value space $V_i=V=\sset{v_1,v_2,\dots,v_k}$ (indexed by $j\in[k]$) for all bidders $i\in N$, bidding space $B=\sset{b_1,b_2,\dots,b_m}$ (indexed by $\ell\in[m]$), and iid \ priors with probability mass function $\sset{f(v_i)}_{i\in [k]}$.

\paragraph{\emph{Support-representation} of monotone strategies} Let $\beta_i:V\rightarrow \Delta(B)$ be a symmetric (mixed) strategy of some bidder $i$. Then, $\beta_i$ can be represented in the following, more succinct way, which we will call \emph{support-representation}.

We maintain a sequence of indices over the bidding set that keeps track of which values have an ``active'' support on the given bid. Formally, we define a 
\emph{monotone} sequence $\xi:[m]\rightarrow [k]$, such that $\xi(m)=k$, and helper functions
\begin{align*}
    J^{\text{mixed}} &\coloneqq \xi([m])\\
    J^{\text{pure}} &\coloneqq [k]\setminus J^{\text{mixed}}\\
    \ell^{-}(j) & \coloneqq \min\sset{\ell\in[m]\fwh{\xi(j)=\ell}}, &&\text{for}\;\; j\in J^{\text{mixed}}, \\
    \ell^{+}(j) & \coloneqq \min\sset{\max\sset{\ell\in[m]\fwh{\xi(j)=\ell}}+1,m}, &&\text{for}\;\; j\in J^{\text{mixed}},\\
    \ell^{-}(j) = \ell^{+}(j) = \ell(j) & \coloneqq \min\sset{\ell\in[m]\fwh{\xi(j)>\ell}}, &&\text{for}\;\; j\in J^{\text{pure}}, \\
    L(j) & \coloneqq [\ell^{-}(j),\ell^{+}(j)]\inters \N, &&\text{for}\;\; j\in [k].
\end{align*}
The semantics are that strategy $\beta_i$ maps value $v_j$ to a distribution
supported within $\ssets{b_\ell}_{\ell\in L(j)}$. In particular, under
$\beta_i$, bidder $i$ bids deterministically
$b_{\ell(j)}$ when having a value in $\ssets{v_j}_{j\in
J^{\text{pure}}}$.

\begin{example}
As an example of our representation, consider the sequence $\xi=(1,1,1,4,5,5,9)$ for a DFPA auction with $7$ bids $B=\sset{b_1,\dots,b_7}$ and $9$ values $V=\ssets{v_1,\dots,v_9}$. Then, $\xi$ corresponds to a monotone bidding rule $\beta:V\rightarrow\Delta(B)$ such that: 
\begin{itemize}
\item[---] $\beta(v_1)$ is supported within $\ssets{b_1,b_2,b_3,b_4}$
\item[---] $\beta(v_2)=\beta(v_3)= b_4$ deterministically
\item[---] $\beta(v_4)$ is supported within $\ssets{b_4,b_5}$
\item[---] $\beta(v_5)$ is supported within $\ssets{b_5,b_6,b_7}$
\item[---] $\beta(v_6)=\beta(v_7)= \beta(v_8)=\beta(v_9)=b_7$ deterministically.
\end{itemize}
\end{example}

To complement $\xi$, which is the ``structural'' component of our
representation, we also maintain a corresponding vector of probability
variables $\ssets{p_{j,\ell}}_{j\in J^{\text{mixed}},\ell\in L(j)}$, with the
semantics that $p_{j,\ell}$ is the probability that value $v_j$ assigns to
bid $b_{\ell}$ (under $\beta_i$). Notice that, for any given $\xi$, these variables are at most $2\card{B}=2k$ since $\card{J^{\text{mixed}}}=\card{\xi([m])}\leq k$ and $L(j)\inters L(j')\leq 1$ for any $j,j'\in J^{\text{mixed}}$ with $j\neq j'$. 

Finally, observe that if we have the aforementioned support-representation of a monotone strategy, we can easily extend it (in polynomial time) to the canonical representation of mixed strategies by defining the remaining probabilities (which are $0/1$ constants), in the following way
\begin{equation}
    \label{eq:iid-symmetric-probs-constants}
    p_{j,\ell}=
    \begin{cases}
        0, &\text{if}\;\; j\in J^{\text{mixed}} \;\land\; \ell\neq L(j),\\
        1, &\text{if}\;\; j\in J^{\text{pure}} \;\land\; \ell = \ell(j),\\
        0, &\text{if}\;\; j\in J^{\text{pure}} \;\land\; \ell \neq \ell(j).
    \end{cases}
\end{equation}

We now continue with describing our algorithm for computing an approximate symmetric equilibrium of our auction.

\paragraph{Guessing the support-representation structure}
Let $\vec{\beta}=(\beta,\beta,\dots,\beta)$ be an \emph{exact} MBNE of our
auction, where all bidders employ the same
monotone\footnote{See~\cref{def:monotonicty-discrete}.} (mixed) strategy
$\beta$. Recall that such an object is guaranteed to exist, due
to~\cref{th:existence-symmetric-equilibria-iid}.

The bidder strategies $\beta$ of such an equilibrium will have the same support-representation $\xi$. Since $\xi$ is a $m$-length sequence whose elements belong in $[k]$, we can exhaustively go over all such possible structural vectors that our equilibria can have in time $O(k^m)$.

\paragraph{The system of polynomial inequalities}
Once such a structure vector $\xi$ for the individual bidder strategies is fixed, the equilibrium $\vec\beta$ is fully determined by the probabilities $\ssets{p_{j,\ell}}$ of the support-representation. More precisely, $\vec\beta$ is an equilibrium if and only if the following constraints are satisfied:
\begin{itemize}%[leftmargin=*]
    \item[---] (Feasibility of strategies) For each $v_j\in V$, probabilities $\ssets{p_{j,\ell}}$ should give rise to a valid distribution over bids $b_\ell$, i.e.
    \begin{align}
        \label{eq:iid-symmetric-system-constraint-feasibility-probs-1}
        p_{j,\ell} &\geq 0 &&\forall j\in[k],\; \forall \ell\in[m] \\
        \label{eq:iid-symmetric-system-constraint-feasibility-probs-2}
        \sum_{\ell=1}^m p_{j,\ell} &=1 &&\forall j\in [k]
    \end{align}
    
    \item[---] (No overbidding) No bidder should bid higher that her true value with strictly positive probability, under strategy $\beta$:
    \begin{align}
        \label{eq:iid-symmetric-system-constraint-overbidding}
        p_{j,\ell}(v_j-b_\ell) &\geq 0 &&\forall j\in[k],\; \forall \ell\in[m].
    \end{align}

    \item[---] (No improvement) 
    For any possible true value, no bidder should have an incentive to unilaterally deviate from the (possibly randomized) bidding strategy dictated by $\beta$, to any alternative deterministic bid (see~\cref{def:approx-mixed-bayes-nash-equilibrium} for $\varepsilon=0$). As a matter of fact, since  $\vec{\beta}$ is symmetric, it is enough to consider condition~\eqref{eq:MBNE-def-condition-full} from the perspective of a single (arbitrary) bidder $i$. So, letting $u(\ell,j)\coloneqq u_i(b_\ell,\vec{\beta}_{-i};v_j)$, the equilibrium condition can be written as 
    \begin{align}
        \label{eq:iid-symmetric-system-constraint-equilibrium}
    \sum_{\ell'=1}^m p_{j,\ell'} u(\ell',j) &\geq u(\ell,j) &&\forall j\in[k],\; \forall \ell\in[m],
    \end{align}
    where, by using~\cref{eq:H-function,eq:def-utility-function-g,eq:def-utility-function-G,eq:inefficient-T} and the fact that the bidders are iid, we can express
    \begin{equation}
        \label{eq:utilities-expression-wrt-g-iid-symmetrc}
    u(\ell,j) = (v_j-b_\ell)\sum_{r=0}^{n-1}\frac{1}{r+1}\binom{n-1}{r}g_{\ell}^r G^{n-r-1}_{\ell},
    \end{equation}
    with 
    \begin{align}
        \label{eq:function-g-system-iid}
        g_{\ell} &= \sum_{j=1}^k p_{j,\ell} f(v_j)   &&\forall \ell\in[m]
    \end{align}
    denoting the probability that any given bidder $i'\neq i$ bids exactly $b_\ell$ (see~\eqref{eq:def-utility-function-g}) under profile $\vec{\beta}_{-i}$, and 
    \begin{align}
        \label{eq:function-G-system-iid}
    G_{\ell} = \sum_{\ell'=1}^{\ell-1} g_{\ell'} &&\forall \ell\in[m]
    \end{align}
    the probability that she bids strictly less than $b_\ell$ (see~\eqref{eq:def-utility-function-G}).
    Observe that in~\eqref{eq:utilities-expression-wrt-g-iid-symmetrc} the utility $u(\ell,j)$ is expressed as a polynomial, of degree at most $n-1$, over the variables $\sset{g_\ell,G_\ell}_{\ell \in [m]}$.
\end{itemize}

Let (P) denote the feasibility program, over variables $\ssets{p_{j,\ell},g_\ell,G_\ell}$, defined by putting together constraints \eqref{eq:iid-symmetric-system-constraint-feasibility-probs-1}, \eqref{eq:iid-symmetric-system-constraint-feasibility-probs-2}, \eqref{eq:iid-symmetric-system-constraint-overbidding}, \eqref{eq:iid-symmetric-system-constraint-equilibrium} via replacing the utilities by their expression in~\eqref{eq:utilities-expression-wrt-g-iid-symmetrc}, \eqref{eq:function-g-system-iid}, and~\eqref{eq:function-G-system-iid}. Then, any solution to (P) gives rise, via the support-representation structure $\xi$, to a symmetric, monotone, exact MBNE of our auction. 

Now observe that (P) is a system of $O(km)$ polynomial inequalities, of maximum degree $n$, over $O(m)$ variables; for the latter, recall that only (at most) $2m$ of the probability variables $p_{j,\ell}$ are ``real'' variables in (P), namely $\ssets{p_{j,\ell}}_{j\in J^{\text{mixed}},\ell\in L(j)}$, since all others are constants given by~\eqref{eq:iid-symmetric-probs-constants}. Therefore, an approximate solution to (P) can be found in time polynomial in $n$, $k$ and exponential in $m$. 
More precisely, let $\ssets{\tilde p_{j,\ell},\tilde g_\ell, \tilde G_\ell}$ be a set of values for the variables of (P) that are $\delta$-near, with respect to the maximum norm, to a feasible solution $\ssets{p_{j,\ell}^\ast,\tilde g_\ell^\ast, \tilde G_\ell^\ast}$ of (P). Such a solution can be found in time polynomial in $\log(1/\delta)$ and $(nkm)^{m^2}$ --- using, e.g., the results of \textcite[Remark, p. 38]{grigor1988solving}.

\paragraph{Rounding back the solution} We will now show how we can transform, in polynomial time, the probabilities $\tilde{\vec{p}}=\ssets{\tilde p_{j,\ell}}$ that we derived by (approximately) solving system (P), to another set of probabilities, let's denote them by $\hat{\vec{p}}=\ssets{\hat p_{j,\ell}}$, so that they give rise to an \emph{approximate} (mixed) equilibrium of our auction. That is, $\hat{\vec{p}}$ should satisfy constraints \eqref{eq:iid-symmetric-system-constraint-feasibility-probs-1}, \eqref{eq:iid-symmetric-system-constraint-feasibility-probs-2} and \eqref{eq:iid-symmetric-system-constraint-overbidding}, but we will allow~\eqref{eq:iid-symmetric-system-constraint-equilibrium} to be relaxed to 
\begin{align}
    \label{eq:iid-symmetric-system-constraint-equilibrium-approximate}
\sum_{\ell'=1}^m \hat p_{j,\ell'} u(\ell',j) &\geq u(\ell,j) - \varepsilon.
\end{align}

To that end, we define 
\begin{equation}
    \label{eq:idd-algorithm-rounding-normalization}
    \hat{p}_{j,\ell}:=\frac{T(\tilde{p}_{j,\ell})}{\sum_{\ell' = 1}^m T(\tilde{p}_{j,\ell'})},
\end{equation}
where operator $T$ is the following:
\begin{equation}
    \label{eq:idd-algorithm-rounding-truncation}
    T(x)= 
    \begin{cases}
        0, & \text{if}\;\; x \leq \delta, \\
        1, & \text{if}\;\; x \geq 1, \\
        x, & \text{otherwise}.
    \end{cases}
\end{equation}

From the fact that $T(x)\geq 0$ and the normalization in~\eqref{eq:idd-algorithm-rounding-normalization}, it is immediately apparent that our new probabilities $\hat{\vec{p}}$ indeed satisfy constraints \eqref{eq:iid-symmetric-system-constraint-feasibility-probs-1} and \eqref{eq:iid-symmetric-system-constraint-feasibility-probs-2}.

Additionally, consider how the probabilities $\vec{p}^*=\ssets{p^*_{j,\ell}}$ of
the \emph{exact} solution of (P) behave, depending on the value of
$(v_j-b_\ell)$ in \eqref{eq:iid-symmetric-system-constraint-overbidding}. If
$v_j-b_\ell \geq 0$ then, by the fact that all elements of $\hat{p}$ are
non-negative, \eqref{eq:iid-symmetric-system-constraint-overbidding} will be
satisfied. Otherwise, if $v_j-b_\ell \geq 0$  it should be the case that
$p^*_{j,\ell}=0$, and therefore it must be that $\tilde{p}_{j,\ell}\leq \delta$,
since $\norm{\vec{p}^*-\tilde{\vec{p}}}_{\infty} \leq \delta$. Therefore, due
to~\eqref{eq:idd-algorithm-rounding-truncation} it must be that
$\tilde{p}_{j,\ell}=0$, and
\eqref{eq:iid-symmetric-system-constraint-overbidding} is satisfied in this case
as well.

Finally, we need to establish~\eqref{eq:iid-symmetric-system-constraint-equilibrium-approximate}.
For a value $v_j\in V$ and bid $b_\ell\in B$ we will denote
\begin{equation}
    \label{eq:iid-symmetric-algorithm-rounding-5}
u_{j,\ell}(\vec{p}) = (v_j - b_\ell) \sum_{r=0}^{n-1}\frac{1}{r+1}\binom{n-1}{r} \left[g_\ell(\vec p)\right]^r \left[G_\ell(\vec p)\right]^{n-r-1},
\end{equation}
where
$$
g_{\ell}(\vec{p}) \coloneqq \sum_{j=1}^k p_{j,\ell} f(v_{j}),\quad\text{for}\; \ell\in[m]
\qquad\text{and}\qquad
G_{\ell}(\vec{p}) \coloneq \sum_{\ell'=1}^{\ell-1} g_{\ell}(\vec p).
$$
We will show that, for every $\varepsilon >0$, we can choose a $\delta = \Theta\left(\frac{\varepsilon}{mn}\right)$ such that  
\begin{align}
    \label{eq:iid-symmetric-algorithm-rounding-2}
\card{u_{j,\ell}(\hat{\vec{p}})- u_{j,\ell}(\vec{p}^*)} &\leq \frac{\varepsilon}{2}
&&\forall j\in[k],\; \forall  \ell\in[m].
\end{align}
Note that~\eqref{eq:iid-symmetric-algorithm-rounding-2} is enough in order to establish the desired 
validity of~\eqref{eq:iid-symmetric-system-constraint-equilibrium-approximate},
since $u(\vec{p})$ is simply a rewriting of the utility $u(\ell,j)$
(see~\eqref{eq:utilities-expression-wrt-g-iid-symmetrc}) directly as a function
of the probability variables $\vec{p}=\ssets{p_{j',\ell'}}$, and $\vec{p}^*$ is
a solution of (P) and thus
satisfies~\eqref{eq:iid-symmetric-system-constraint-equilibrium}.

We first bound the distance between $\hat{\vec{p}}$ and $\vec{p}^*$ in the following lemma:
\begin{lemma}
    \label{lemma:iid-symmetric-algorithm-rounding-1}
    If $\delta \leq \frac{1}{3m}$, then $\norm{\hat{\vec{p}}-\vec{p}^*}_\infty \leq 4\delta$. 
\end{lemma}
\begin{proof}
Since $p^*_{j,\ell}\in [0,1]$ for all $j\in[k]$, $\ell\in[m]$, and $\norm{\tilde{\vec{p}}-\vec{p}^*}_\infty \leq \delta$, we can deduce that
\begin{align*}
    -\delta &\leq \tilde{p}_{j,\ell} \leq 1+\delta
    && \forall  j\in[k],\; \forall \ell\in[m].
\end{align*}
Therefore, by the definition of operator $T$ in~\eqref{eq:idd-algorithm-rounding-truncation} we get that 
\begin{align}
    \label{eq:iid-symmetric-algorithm-rounding-helper-1}
\card{T(\tilde{p}_{j,\ell})-\tilde{p}_{j,\ell}} \leq \delta && \forall  j\in[k],\; \forall \ell\in[m].
\end{align}
Therefore
\begin{align}
    \card{\sum_{\ell=1}^m T(\tilde{p}_{j,\ell}) - 1} &= \card{\sum_{\ell=1}^m T(\tilde{p}_{j,\ell}) - \sum_{\ell=1}^m {p}^*_{j,\ell}}\notag\\
    &\leq \sum_{\ell=1}^m \card{ T(\tilde{p}_{j,\ell}) -  \tilde{p}_{j,\ell}}\notag\\
    &\leq \sum_{\ell=1}^m  \card{ T(\tilde{p}_{j,\ell}) -  \tilde{p}_{j,\ell}} + \sum_{\ell=1}^m \card{ p^*_{j,\ell} -  \tilde{p}_{j,\ell}} \notag\\
    &\leq m\cdot \delta + m\cdot \delta = 2m\delta \label{eq:iid-symmetric-algorithm-rounding-helper-2}
\end{align}

Now, for all $j\in[k]$ and $\ell\in[m]$ we can bound:
\begin{align}
    \card{\hat{p}_{j,\ell}- \tilde{p}_{j,\ell}}
    &\overset{\eqref{eq:idd-algorithm-rounding-normalization}}{=} \card{\frac{T(\tilde{p}_{j,\ell})}{\sum_{\ell' = 1}^m T(\tilde{p}_{j,\ell'})} - \tilde{p}_{j,\ell}}\\
    &\leq \max\sset{\frac{1}{\card{\sum_{\ell' = 1}^m T(\tilde{p}_{j,\ell'})}},1} \card{T(\tilde{p}_{j,\ell})-\tilde{p}_{j,\ell}}\\
    & \overset{\eqref{eq:iid-symmetric-algorithm-rounding-helper-1},\eqref{eq:iid-symmetric-algorithm-rounding-helper-2}}{\leq} \max\sset{\frac{1}{\card{1-2m\delta}},1} \delta\\
    &\leq 3\delta,
    \label{eq:iid-symmetric-algorithm-rounding-helper-3}
\end{align}
where for the last inequality we have used the fact that $0<\delta\leq \frac{1}{3m}$, and for the first one the fact that $\card{\lambda x -\mu y}\leq \max\ssets{\card{\lambda},\card{\mu}}\card{x-y}$ for all reals $\lambda,\mu, x, y$.

Finally, we can derive that:
$$
\norm{\hat{\vec{p}}-\vec{p}^*}_\infty\leq \norm{\tilde{\vec{p}}-\hat{\vec{p}}}_\infty + \norm{\tilde{\vec{p}}-\vec{p}^*}_\infty \overset{\eqref{eq:iid-symmetric-algorithm-rounding-helper-3}}{\leq} 3\delta + \delta =4\delta.
$$
\end{proof}

We are now ready to bound the distance of the utilities under the two profiles $\hat{\vec{p}}$ and $\vec{p}^*$ in~\eqref{eq:iid-symmetric-algorithm-rounding-2}.
Using that $\norm{\hat{\vec{p}}-\vec{p}^*}_\infty \leq 4\delta$ from~\cref{lemma:iid-symmetric-algorithm-rounding-1}, we can now provide the bounds:
\begin{align}
    \card{g(\hat{\vec{p}}) - g(\vec{p}^*)} &= \card{\sum_{j'=1}^k \hat p_{j',\ell} f(v_{j'}) - \sum_{j'=1}^k  p^*_{j',\ell} f(v_{j'})} 
    \leq \sum_{j'=1}^k \card{\hat p_{j',\ell} - p^*_{j',\ell}} f(v_{j'}) \leq 4\delta,\\
    \intertext{since $\sum_{j'=1}^kf(v_{j'})=1$, and so}
    \card{G(\hat{\vec{p}}) - G(\vec{p}^*)} & \leq \sum_{\ell'=1}^{\ell -1}\card{g(\hat{\vec{p}}) - g(\vec{p}^*)} 
    \leq 4\delta \card{\ell-1} \leq 4\delta m. \label{eq:iid-symmetric-system-constraint-feasibility-probs-6}
\end{align}
Also, note that for $\vec{p}\in\ssets{\hat{\vec{p}},\vec{p}^*}$, it is 
$$
G_\ell(\vec{p})=\sum_{\ell'=1}^{\ell-1} g_l(\vec{p}) \leq \sum_{\ell'=1}^{m} \sum_{j=1}^k p_{j,\ell'} f(v_j) = \sum_{j=1}^k\left(\sum_{\ell'=1}^{m}p_{j,\ell'}\right)f(v_j) = \sum_{j=1}^k f(v_j)=1,
$$
since both $\hat{\vec{p}}$ and $\vec{p}^*$ satisfy \eqref{eq:iid-symmetric-system-constraint-feasibility-probs-2}.

Next, we rewrite the utilities in~\eqref{eq:iid-symmetric-algorithm-rounding-5} as
\begin{align*}
    u_{j,\ell}(\vec{p}) &= \frac{v_j - b_\ell}{n \cdot g_\ell(\vec p)} \sum_{r=0}^{n-1}\binom{n}{r+1} \left[g_\ell(\vec p)\right]^{r+1} \left[G_\ell(\vec p)\right]^{n-r-1}\\
    &= \frac{v_j - b_\ell}{n \cdot g_\ell(\vec p)} \sum_{r=1}^{n}\binom{n}{r} \left[g_\ell(\vec p)\right]^{r} \left[G_\ell(\vec p)\right]^{n-r}\\
    &= \frac{v_j - b_\ell}{n \cdot g_\ell(\vec p)} \left[(G_{\ell}(\vec p)+g_{\ell}(\vec p))^n - G_{\ell}^n(\vec p)\right]\\
    &= \frac{v_j - b_\ell}{n \cdot g_\ell(\vec p)} \left[G^n_{\ell+1}(\vec p) - G^n_{\ell}(\vec p)\right]\\
    &= \frac{v_j - b_\ell}{n}\sum_{r=0}^{n-1}G^r_{\ell+1}(\vec p)G^{n-1-r}_{\ell}(\vec p).
\end{align*}
Using this, we can finally bound the distance of the utilities by
\begin{align*}
    \card{u_{j,\ell}(\hat{\vec{p}})- u_{j,\ell}(\vec{p}^*)} 
    &\leq \frac{1}{n}\sum_{r=0}^{n-1}\card{G^r_{\ell+1}(\hat{\vec p})G^{n-1-r}_{\ell}(\hat{\vec p})-G^r_{\ell+1}(\vec p^*)G^{n-1-r}_{\ell}(\vec p^*)}\\
    &\leq \max_{r=0,1,\dots,n-1}\card{G^r_{\ell+1}(\hat{\vec p})G^{n-1-r}_{\ell}(\hat{\vec p})-G^r_{\ell+1}(\vec p^*)G^{n-1-r}_{\ell}(\vec p^*)}\\
    &\leq \max_{r=0,1,\dots,n-1}\card{G^{r}_{\ell}(\hat{\vec p})-G^{r}_{\ell}({\vec p}^*)}\\
    &\leq \max_{r=0,1,\dots,n-1}r\card{G_{\ell}(\hat{\vec p})-G_{\ell}({\vec p}^*)}\\
    & \leq 4\delta m n,
\end{align*}
where the third and fourth inequalities hold due to the fact that $G_{\ell}(\hat{\vec p}),G_{\ell}(\vec p^*)\in[0,1]$ for all $\ell\in[m]$, and the last one due to~\eqref{eq:iid-symmetric-system-constraint-feasibility-probs-6}. 
Taking $\delta = \frac{\varepsilon}{8mn}$ we can indeed satisfy~\eqref{eq:iid-symmetric-algorithm-rounding-2}.
\end{proof}

We can now state our PTAS result.

\begin{theorem}
\label{th:ptas-symmetric-mixed-iid}
In every DFPA with iid priors, and any constant $\varepsilon>0$, a symmetric and monotone $\varepsilon$-approximate mixed equilibrium can be computed in polynomial time.
\end{theorem}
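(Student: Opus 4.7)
The plan is to combine the Bidding Space Shrinkage Lemma (\cref{lemma:shrinkage}) with the algorithm from \cref{th:symmetric-mixed-iid-algorithm}. The role of the shrinkage is to collapse the bidding space down to a size that depends only on $\varepsilon$, so that when we feed the resulting auction into \cref{th:symmetric-mixed-iid-algorithm}, the exponent $m^2$ in the running time becomes a constant (in terms of the dependence on $n$ and $k$).

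Concretely, given the input DFPA with iid priors and any constant $\varepsilon > 0$, I would first set $M := \lceil 2/\varepsilon \rceil$ and invoke \cref{lemma:shrinkage} to obtain, in polynomial time, a subset $B' \subseteq B$ with $|B'| \leq M$. The shrunk auction still has iid priors (the lemma only restricts the bidding space, not the value distributions), so \cref{th:existence-symmetric-equilibria-iid} guarantees the existence of a symmetric monotone exact MBNE for it, and hence in particular of symmetric monotone approximate MBNE. I would then apply \cref{th:symmetric-mixed-iid-algorithm} to the shrunk auction with approximation parameter $\varepsilon/2$, producing a symmetric monotone $(\varepsilon/2)$-approximate MBNE in time polynomial in $\log(2/\varepsilon)$ and $(nkM)^{M^2}$, which, for any constant $\varepsilon$, is polynomial in the size of the input.

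Finally, the strategy returned by \cref{th:symmetric-mixed-iid-algorithm} is supported on $B'$, hence is also a valid (mixed, monotone, symmetric) strategy in the original bidding space $B$. By the shrinkage lemma, it is an $(\varepsilon/2 + 1/M)$-approximate MBNE of the original auction, and since $1/M \leq \varepsilon/2$ by our choice of $M$, this yields the desired $\varepsilon$-approximate symmetric monotone MBNE.

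The only subtlety I anticipate is verifying that the reduction preserves all the properties we care about at once: iid structure (preserved by \cref{lemma:shrinkage} since value distributions are untouched), symmetry and monotonicity (preserved because \cref{th:symmetric-mixed-iid-algorithm} outputs symmetric monotone strategies, which remain symmetric and monotone when viewed as strategies over the larger space $B$), and the approximation guarantee (which is additive and handled directly by the shrinkage lemma). None of these is an obstacle of substance; the proof is essentially a short, clean combination of the two preceding results.
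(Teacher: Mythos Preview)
Your proposal is correct and follows essentially the same approach as the paper: fix $M=\lceil 2/\varepsilon\rceil$, apply the shrinkage lemma to get $|B'|\leq M$, run the algorithm of \cref{th:symmetric-mixed-iid-algorithm} on the shrunk iid auction with target $\varepsilon/2$, and use the additive guarantee of \cref{lemma:shrinkage} to conclude. The paper's proof is exactly this combination, with the same choice of $M$ and the same $\varepsilon/2+\varepsilon/2$ bookkeeping.
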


\begin{proof}
Fix an iid DFPA auction $\mathcal{A}$ with $n$ bidders, $k$ values, and bidding space $B$. Let $\varepsilon>0$ be an arbitrary constant. Choose integer $M\coloneqq \lceil 2/\varepsilon \rceil$, so that $\frac{1}{M} \leq \frac{\varepsilon}{2}$. Next, we construct a subspace $B'\subseteq B$ of our original bidding space $B$, with constant cardinality $\card{B'}=M$, as dictated in the proof of the shrinkage~\cref{lemma:shrinkage}. Notice that $B'$ can be constructed in polynomial time (in the description of $\mathcal A$).

Now, let $\mathcal A'$ denote the DFPA that results from $\mathcal A$ if we simply replace bidding space $B$ of $\mathcal A$ by $B'$ (everything else, namely bidders, value space and belief distributions, remaining the same). Obviously, $\mathcal A'$ is still an iid auction and thus, if we define $\varepsilon'\coloneqq \varepsilon/2$, we can deploy the algorithm presented in~\cref{th:symmetric-mixed-iid-algorithm} in order to compute a symmetric and monotone $\varepsilon'$-approximate MBNE of $\mathcal A'$, let's denote it by $\vec{\beta}^*$, in time polynomial in $\log(1/\varepsilon')=O(\log(1/\varepsilon))$ and in $(nkM)^M=\poly(n,k)$, since $M$ is a constant.
Due to~\cref{lemma:shrinkage}, again, we know that $\vec{\beta}^*$ is guaranteed to be an $(\varepsilon'+\frac{1}{M})$-approximate MBNE of the original auction $\mathcal A$. Our proof is concluded by noting that $\varepsilon'+\frac{1}{M} \leq \frac{\varepsilon}{2}+\frac{\varepsilon}{2}=\varepsilon$.  
\end{proof}

\section{Correlated Equilibria} \label{sec:correlated-equilibria}

Another well-studied solution concept in Game Theory is that of Correlated Equilibria (CE), first introduced by Aumann in his seminal paper \cite{aumann74}.
Intuitively, these capture the idea of each player choosing her action after being given a recommendation from a ``mediator'', ensuring that at a CE no player has an incentive to deviate from their recommended strategy. This is a broader solution concept which in several cases has proven to be amenable to polynomial-time algorithms, e.g., see \cite{Papadimitriou2008a,jiang-correlated}.
Given our PPAD-completeness result for computing MBNE in \cref{sec:PPAD-completeness-mixed}, it is natural to explore CE of the DFPA, in a quest for potential tractability results.  

Before we do that however, we have to come up with an appropriate definition: CE are typically defined in complete information, normal form games, whereas the DFPA is a Bayesian game of incomplete information. The appropriate extension of the notion of CE to Bayesian games has been discussed extensively in the literature \cite{aumann1987correlated, forges-5defs-ce, forges-revisited, forges2023correlated,bergemann2013robust,bergemann2016bayes}, and several natural notions have been proposed. In this paper, we will adopt the standard notion that defines a CE of the original DFPA by means of the induced \emph{type-agent} representation, adopted also by works in computer science, e.g., see \parencite{hartline2015no,ahunbay2024uniqueness}. 
This representation is a generic way of interpreting any Bayesian game as a traditional, strategic-form game; see, \cite[pp.~73--74,127--128]{myerson1997book} and~\cite[Definition~7.11]{Jehle2001a}. Historically, the representation has its roots directly in the seminal paper of~\cite[p.~1814]{harsanyi1967games}, where it is referred to as the \emph{Selten} or \emph{posterior-lottery} model. 
Formally, the type-agent representation is defined as follows:

\begin{definition}[Type-agent representation of the DFPA] \label{def: type-agent}
Fix a DFPA $\mathcal A$ with: bidders $N=\ssets{1,2,\dots,n}$; bidding set
$B=\sset{b_1,b_2,\dots,b_m}$; value sets
$V_i=\ssets{v_{i,1},v_{i,2},\dots,v_{i,k_i}}$, for $i\in N$; and value
distributions $F_{i,j}$ with probability density functions $$\{f_{i,j}(v_{j,1}),
f_{i,j}(v_{j,2}), \dots, f_{i,j}(v_{j,k_j})\},\qquad \text{for}\;\; i,j\in N,\; i\neq j,$$
given explicitly in the input. 

Let $\mathcal G$ be the normal form game with: players $\tilde{N}=
\sset{(i,v)\fwh{i\in N, v\in V_i}}$; for all players $(i,v)\in \tilde{N}$,
the same set of (pure) strategies $S_{(i,v)}=B$; and, for all outcomes
$\vec{s}\in B^{\tilde{N}}$ and players $(i,v)\in \tilde{N}$, a payoff 
\begin{equation}
    \label{eq:correspondence_payoffs_auction_normal-form-game}
\pi_{(i,v)}(\vec s) = u_i(\vec{s}(i,v),\hat{\vec{\beta}}_{-i};v),
\end{equation}
where $u_i$'s are the standard DFPA utilities (as given
in~\eqref{eq:DFPA-utility-interim-mixed}) and $\hat{\vec{\beta}}_{-i}$ are
\emph{pure} bidding strategies of $\mathcal{A}$ given by
\begin{equation}
    \label{eq:correspondence_equilibria_auction_normal-form-game}
\hat{\beta}_{j}(v_j) = \vec{s}(j,v_j),
\end{equation}
for all $j\in N\setminus{i}$ and $v_j\in V_j$. 
\end{definition}

Notice that \eqref{eq:correspondence_equilibria_auction_normal-form-game}
gives rise to a natural one-to-one correspondence between pure (and, by
extension, mixed as well) equilibria of auction $\mathcal A$ and the
normal-form game $\mathcal{G}$. Furthermore,
from~\eqref{eq:correspondence_payoffs_auction_normal-form-game} it is
apparent that this correspondence preserves the utility/payoffs between
games. In particular, this means that there is a one-to-one correspondence
between $\varepsilon$-MBNE of $\mathcal A$ and $\varepsilon$-approximate
(additive) mixed Nash equilibria of $\mathcal G$.
In fact, this has been the common approach for showing existence of equilibria of Bayesian games in general, by applying Nash's theorem to the type-agent game, which is indeed in normal form.

Given \cref{def: type-agent}, we can now define the correlated equilibria of the DFPA to be precisely the correlated equilibria of the corresponding type-agent game.

\begin{definition}[Correlated Equilibrium of a DFPA]
   Given a DFPA $\mathcal{A}$ and the corresponding type-agent representation game $\mathcal{G}$, a correlated equilibrium is a distribution $\sigma$ on the set of outcomes $B^{\tilde{N}}$ such that for every player $\tilde{n} \in \tilde{N}$, strategy $s_{\tilde{i}} \in S_{\tilde{n}}$, and every deviation $s_{\tilde{n}}' \in S_{\tilde{n}}$,

\begin{equation*}
    \expect[\vec{s} \sim \sigma]{\pi_{\tilde{n}}(\vec{s}) \fwh{s_{\tilde{n}}}} \geq \expect[\vec{s} \sim \sigma]{\pi_{\tilde{n}}(s_{\tilde{n}}',\vec{s}_{-\tilde{n}}) \fwh{s_{\tilde{n}}}}
\end{equation*}
\end{definition}

One of the most celebrated results on correlated equilibria is due to \cite{Papadimitriou2008a}, as well as follow-up work in \cite{jiang-correlated}, and it states the following:

\begin{theorem}[\cite{Papadimitriou2008a}] \label{thm:papadimitriou-roughgarden}
   In any succinct game of polynomial type for which the players' utilities can be computed in polynomial time, one can compute a correlated equilibrium in polynomial time. 
\end{theorem}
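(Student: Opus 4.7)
The plan is to adapt the \emph{ellipsoid against hope} method of Papadimitriou and Roughgarden. The starting point is the natural primal LP for correlated equilibria: for every outcome $\vec s$ introduce a variable $\sigma(\vec s) \geq 0$, impose the normalization $\sum_{\vec s} \sigma(\vec s) = 1$, and for each player $\tilde n$ and pair of pure strategies $(s_{\tilde n}, s'_{\tilde n})$ require the incentive inequality
\[
\sum_{\vec s_{-\tilde n}} \sigma(s_{\tilde n}, \vec s_{-\tilde n})\bigl[\pi_{\tilde n}(s'_{\tilde n}, \vec s_{-\tilde n}) - \pi_{\tilde n}(s_{\tilde n}, \vec s_{-\tilde n})\bigr] \leq 0.
\]
Because the game is succinct of polynomial type, this LP has exponentially many variables but only polynomially many constraints. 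Crucially, applying Nash's theorem to the underlying finite normal-form game produces a mixed Nash equilibrium, which is in particular a correlated equilibrium, so the primal is guaranteed to be feasible.

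The second step is to pass to the dual. The dual has polynomially many variables $y_{\tilde n, s_{\tilde n}, s'_{\tilde n}} \geq 0$, one per incentive constraint, and exponentially many constraints, one per outcome $\vec s$. The plan is to run the ellipsoid method on the dual, where the small number of variables makes the method efficient, provided one can supply a polynomial-time separation oracle against the exponential constraint family.

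The main obstacle is precisely this separation oracle: given a candidate $\vec y$, one must either find an outcome $\vec s$ whose dual constraint is violated, or certify that all constraints are satisfied. Since the induced linear functional over outcomes is generally NP-hard to maximize, a direct attack is hopeless. The resolution is to relax the search to \emph{product distributions} over outcomes. The induced objective is multilinear in the marginals, so its maximum over the product-distribution polytope is the min-max value of an auxiliary two-player zero-sum game, which can be computed by a polynomial-size LP; this is exactly the point where the hypothesis of polynomial-time expected-utility evaluation is used, to make the coefficients of that LP computable. A strictly positive min-max value produces a product distribution whose support contains a pure outcome that violates the dual constraint (found by de-randomizing marginal by marginal), whereas a nonpositive value certifies dual feasibility at $\vec y$.

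Finally, the algorithm extracts a correlated equilibrium from the execution trace. Let $\mathcal S$ be the polynomial-size pool of outcomes returned by the separation oracle across all iterations. By standard ellipsoid-against-hope reasoning, the restriction of the primal LP to variables supported on $\mathcal S$ is already feasible, and any feasible solution is a correlated equilibrium of the original game; this restricted LP has polynomial size and is solved in polynomial time by any standard LP algorithm. Put together, the approach reduces CE computation in a succinct game to a sequence of polynomial-size LPs whose coefficients are polynomial-time computable from the utility oracle, which is the desired conclusion.
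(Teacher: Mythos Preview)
The paper does not prove this theorem; it is quoted as a known result from \textcite{Papadimitriou2008a} (with the follow-up fix in \textcite{jiang-correlated}) and is invoked as a black box in the subsequent corollary. There is therefore no ``paper's own proof'' to compare against.

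Your sketch is a reasonable high-level outline of the cited argument, but a couple of details are off. First, the separation step in Papadimitriou--Roughgarden does not solve a two-player zero-sum LP; rather, for each candidate dual vector $\vec y$ one builds, for every player, a stochastic matrix from the $y$-weights and takes its stationary distribution, which yields a product distribution that violates the dual (this is where the expected-utility oracle is used). Second, in the original paper the oracle returns that \emph{product distribution} as a mixed separating hyperplane, not a pure outcome; the derandomization to a single outcome per iteration (and the resolution of the associated precision issues in the final primal extraction) is precisely the contribution of Jiang--Leyton-Brown. So your plan is closer to the combined PR+JLB argument than to PR alone, and the ``min-max of a zero-sum game'' framing of the oracle should be replaced by the stationary-distribution construction.
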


Informally, succinct games are games that can be represented without explicitly listing the payoffs of the players on all possible outcomes (that, in general, may be exponentially many), but rather via some alternative, efficient way. We refer the reader to \cite{Papadimitriou2008a} for a formal definition of the class.

Having established these important notions and results, the main theorem of this section follows rather straightforwardly, by applying \cref{lemma:DFPA-utilities-efficient}.

\begin{theorem}
    A correlated equilibrium of a DFPA can be computed in polynomial time.
\end{theorem}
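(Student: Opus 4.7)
The plan is to apply the Papadimitriou--Roughgarden \cref{thm:papadimitriou-roughgarden} directly to the type-agent representation $\mathcal{G}$ of the given DFPA $\mathcal{A}$. Since correlated equilibria of $\mathcal{A}$ are \emph{defined} as correlated equilibria of $\mathcal{G}$, it is enough to produce one in $\mathcal{G}$ in polynomial time.

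First, I would verify that $\mathcal{G}$ is a succinct game of polynomial type. The set of players is $\tilde{N} = \{(i,v) : i \in N, v \in V_i\}$, so $|\tilde{N}| = \sum_{i \in N} |V_i|$, which is polynomial in the size of the input description of $\mathcal{A}$. Each player in $\mathcal{G}$ has the bidding space $B$ as her (pure) strategy set, which is again polynomial. Hence, $\mathcal{G}$ has a polynomial number of players, each with a polynomial number of strategies; this is exactly the definition of ``polynomial type''. Succinctness is immediate from the fact that $\mathcal{G}$ is not given explicitly but rather implicitly via the polynomial-size description of $\mathcal{A}$ (the sets $V_i, B$ and the pmfs $f_{i,j}$), from which any payoff entry can be reconstructed.

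Next, I would argue that, given any pure profile $\vec{s} \in B^{\tilde{N}}$ and any player $(i,v) \in \tilde{N}$, the payoff $\pi_{(i,v)}(\vec{s})$ can be computed in polynomial time. By~\eqref{eq:correspondence_payoffs_auction_normal-form-game}--\eqref{eq:correspondence_equilibria_auction_normal-form-game}, this payoff equals the interim utility $u_i(\vec{s}(i,v), \hat{\vec{\beta}}_{-i}; v)$ of bidder $i$ in $\mathcal{A}$ with value $v$, when the other bidders use the pure strategies $\hat{\beta}_j(v_j) = \vec{s}(j,v_j)$ read off directly from $\vec{s}$. Extracting these $\hat{\vec{\beta}}_{-i}$ from $\vec{s}$ takes polynomial time, and then \cref{lemma:DFPA-utilities-efficient} guarantees that $u_i(\hat{\beta}_i(v), \hat{\vec{\beta}}_{-i}; v)$ can be evaluated in polynomial time (the lemma is stated for mixed profiles, which trivially includes pure ones).

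With these two ingredients in hand, \cref{thm:papadimitriou-roughgarden} applies and yields a correlated equilibrium $\sigma$ of $\mathcal{G}$ in polynomial time. By \cref{def: type-agent} and the definition of a correlated equilibrium of a DFPA, $\sigma$ is also a correlated equilibrium of $\mathcal{A}$, completing the proof. The only real content is the succinctness/polynomial-type bookkeeping and the appeal to \cref{lemma:DFPA-utilities-efficient}; there is no genuine obstacle, which is precisely the contrast the authors highlight with the hardness results for (mixed) Bayes--Nash equilibria.
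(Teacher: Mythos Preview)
Your proposal is correct and follows essentially the same route as the paper: verify that the type-agent game $\mathcal{G}$ has polynomially many players and strategies, invoke \cref{lemma:DFPA-utilities-efficient} for efficient payoff computation, and apply \cref{thm:papadimitriou-roughgarden}. One small remark: the paper's proof additionally notes that \cref{lemma:DFPA-utilities-efficient} gives polynomial-time computation of \emph{expected} payoffs under \emph{mixed} profiles in $\mathcal{G}$ (not just pure ones), which is what the Papadimitriou--Roughgarden machinery actually requires; you implicitly have this since you observe the lemma is stated for mixed profiles, but it would be worth making the point explicit rather than only checking pure $\vec{s}$.
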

\begin{proof}
    Given \cref{thm:papadimitriou-roughgarden}, it suffices to show that (1) the induced game $\mathcal{G}$ is succinct and (2) we can compute the (expected) utilities of the players in polynomial time.
    First, observe that the number of players and (pure) strategies of $\mathcal
    G$ are polynomial in the description size $\card{\mathcal{A}}$ of the original DFPA and, furthermore, due to~\eqref{eq:correspondence_payoffs_auction_normal-form-game} and \cref{lemma:DFPA-utilities-efficient}, the payoff functions $\pi_{(i,v)}$ are computable in polynomial (with respect to $\card{\mathcal{A}}$) time.
    Therefore, the class of games $\mathcal G$ that could arise through our reduction are succinctly representable (of \emph{polynomial type}).
    Additionally, \cref{lemma:DFPA-utilities-efficient} guarantees that the (expected) payoffs of $\mathcal G$ on \emph{mixed} strategy profiles are also polynomial-time computable, hence, the second condition is also satisfied.
\end{proof}

\section{Conclusion and Future Work}
In this paper, we considered the computational complexity of equilibrium computation in first-price auctions with discrete values and discrete bids. We established computational completeness results for deciding the existence of pure equilibria and for finding mixed equilibria in general. We also provided positive results for the natural case of iid bidders and symmetric equilibria, as well as for correlated equilibria. At the heart of several of our results lies a novel equivalence between mixed equilibria of the discrete auction and pure equilibria of the continuous auction. We believe that this result may prove to be quite useful in the future, as it allows one to choose which of the two settings to focus on, and translate results automatically to the other. 

Perhaps the most interesting, and seemingly rather challenging avenue for future work is to study the computational complexity of equilibrium computation in the discrete auction in the IPV setting. Another interesting question is to explore alternative notions of correlated equilibria defined in the literature for Bayesian games (e.g., see \cite{forges-5defs-ce,forges-revisited,forges2023correlated}) and explore the computational relation between them in the context of first-price auctions. Finally, one could explore mixed equilibria for auctions with continuous values, after coming up with an appropriate representation of the bidders' strategies in this setting.  

\section*{Acknowledgements} We are very grateful to Diogo Poças for useful ideas that served as a starting point for the NP-hardness proof.

\clearpage
\appendix

\addcontentsline{toc}{section}{Appendix}
\section*{Appendix}

\section{A Brief Discussion on Equilibrium Notions}\label{app:interim}

In the literature of the computational complexity of strategic games, the term ``$\varepsilon$-well-supported equilibrium'' has been used to refer to $\varepsilon$-approximate mixed equilibria in which every pure strategy in the support of the equilibrium strategy is utility-maximising, exactly like in \cref{def:well-supported-bayes-nash-equilibrium}. Recently, \textcite{chen2023complexity} ambiguously used the term ``$\varepsilon$-well-supported'' to refer to \emph{pure} \emph{interim} Bayes-Nash equilibria. An interim equilibrium is one as defined in our \cref{def:approx-mixed-bayes-nash-equilibrium}, in which a player chooses her strategy given her value $v_i$, which is known to her. In the terminology of auction theory, one could also define an alternative equilibrium notion called \emph{ex-ante} Bayes-Nash equilibrium. In such an equilibrium, each bidder also has some uncertainty about her own value, which she draws from a distribution $F_{i,i}$. We provide the formal definition below. 

\begin{definition}[$\varepsilon$-approximate \emph{ex-ante} mixed Bayes-Nash equilibrium of the DFPA]\label{def:ex-ante-mbne}
Let $\varepsilon \geq 0$.
A (mixed) strategy profile $\vec{\beta}=(\beta_1, \ldots, \beta_n)$ is an (interim) $\varepsilon$-approximate mixed Bayes-Nash equilibrium (MBNE) of the DFPA if for any bidder $i \in N$, 
\begin{equation*}
\label{eq:ex-ante-MBNE-def-condition-full}
\expect[\vec v_{i}\sim F_{i,i}]{u_i(\beta_i(v_i),\vec{\beta}_{-i};v_i)} \geq \expect[\vec v_{i}\sim F_{i,i}]{u_i(\vec{\gamma},\vec{\beta}_{-i};v_i)} - \varepsilon \qquad \text{for all}\;\; \vec{\gamma}\in \Delta(B).
\end{equation*}
\end{definition}

By \cref{def:well-supported-bayes-nash-equilibrium} and \cref{def:approx-mixed-bayes-nash-equilibrium}, it follows that an $\varepsilon$-well-supported equilibrium is also an $\varepsilon$-approximate Bayes-Nash equilibrium. Additionally, from \cref{def:ex-ante-mbne} above and \cref{def:approx-mixed-bayes-nash-equilibrium}, it follows that an interim Bayes-Nash equilibrium is also an ex-ante Bayes-Nash equilibrium. This implies that positive results are stronger for interim equilibria. In particular, the PTAS of \textcite{chen2023complexity} for the IPV setting, which is stated for ex-ante equilibria, does not translate to our setting via our computational equivalence result. 

We remark that we find the notion of interim equilibria more natural, and more in line with the standard definitions (e.g., see \parencite{krishna2009auction,myerson1997book}), and advocate the use of terminology which is consistent with the auction literature, e.g., see \parencite{maskin1985auction,maskin2000equilibrium,lebrun1996existence,Athey2001}.

\section{Non-Existence of Approximate Pure Equilibria}
\label{sec:existence_discrete_negative}

In this section we prove~\cref{th:tight_existence_approx}. We start by establishing a technical lemma that would be useful in our main proof:
\begin{lemma}
    \label{lemma:matching_existence_helper_1}
    For any integer $M\geq 10$,
    \[
    \frac{q}{2}M 
    \leq q (M-2)
    \leq \frac{q+1}{2}(M-3)
    \leq M-4
    \leq q(M-1)
    \leq \frac{q+1}{2}(M-2)
    \leq M-3
    \leq \frac{q+1}{2}(M-1),
    \]
    where 
    \[
    q = \frac{M-3}{M-1}-\frac{2}{3M}.
    \]
    \end{lemma}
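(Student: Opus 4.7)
The statement is a chain of seven inequalities, each being a concrete numerical comparison in $M$ once $q$ is expanded, so the plan is simply to verify each link of the chain by routine algebra. I will first note that for $M\geq 10$ we have $q>0$ (since $\frac{M-3}{M-1}\geq \frac{7}{9}>\frac{2}{30}\geq \frac{2}{3M}$), which makes the first inequality $\tfrac{q}{2}M\leq q(M-2)$ trivial as it reduces to $0\leq q(M-4)$.

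For each of the remaining six inequalities, my plan is to clear denominators and isolate $q$, reducing each to an inequality of the form $q\leq \frac{a}{b}$ or $q\geq \frac{a}{b}$ for explicit small expressions in $M$. For example, $q(M-2)\leq \tfrac{q+1}{2}(M-3)$ simplifies to $q(M-1)\leq M-3$, i.e.\ $q\leq \frac{M-3}{M-1}$, which holds by the very definition of $q$. Similarly, $M-4\leq q(M-1)$ becomes $\frac{2}{3M}\leq \frac{1}{M-1}$ after subtracting $\frac{M-4}{M-1}$ from the target form of $q$, i.e.\ $2(M-1)\leq 3M$; the fifth and seventh inequalities collapse analogously to linear statements in $M$ that hold for every $M\geq 0$. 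The third and sixth inequalities are the mildly tighter ones: the third rearranges to $q\leq \frac{M-5}{M-3}$, which after plugging in the definition of $q$ becomes $\frac{4}{(M-1)(M-3)}\leq \frac{2}{3M}$, i.e.\ $M^2-10M+3\geq 0$; the sixth rearranges to $q\leq \frac{M-4}{M-2}$, which becomes $\frac{2}{(M-1)(M-2)}\leq \frac{2}{3M}$, i.e.\ $M^2-6M+2\geq 0$. Both hold at $M=10$ (values $3$ and $42$) and their discriminants ensure they remain true for all larger $M$, so in particular for every integer $M\geq 10$.

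The proof will just be a bulleted verification of the seven inequalities in order, each written as a one-line algebraic rearrangement followed by the final polynomial check in $M$. There is no genuine obstacle—the only mild subtlety is that the third and sixth inequalities are the binding ones and are the reason the hypothesis $M\geq 10$ appears in the statement; the other five would hold under weaker assumptions (roughly $M\geq 5$). I will flag this at the end to explain the tightness of the $M\geq 10$ hypothesis.
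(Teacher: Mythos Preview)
Your proposal is correct and follows essentially the same approach as the paper: both reduce each link of the chain to an inequality of the form $q\leq c(M)$ or $q\geq c(M)$ and then verify it using the explicit definition of $q$. The only cosmetic difference is that the paper first packages the two tight bounds as a sandwich $1-\tfrac{3}{M-1}\leq q\leq 1-\tfrac{2}{M-3}$ (the upper half being exactly your condition $M^2-10M+3\geq 0$) and reuses it, whereas you plug in $q$ separately for each inequality; the computations are identical.
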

    \begin{proof}
    First, we show that, for $M\geq 10$, the following holds:
    \begin{equation}
    \label{eq:matching_existence_helper}
    1-\frac{3}{M-1}\leq q \leq 1-\frac{2}{M-3}.
    \end{equation}
    Indeed, we can see that:
    $$
    1-\frac{3}{M-1}\leq q 
    \quad\ifif\quad 1-\frac{3}{M-1}\leq \frac{M-3}{M-1}-\frac{2}{3M} 
    \quad\ifif\quad M>1
    $$
    and
    $$
    q \leq 1-\frac{2}{M-3}
    \quad\ifif\quad \frac{M-3}{M-1}-\frac{2}{3M} \leq 1-\frac{2}{M-3}
    \quad\ifif\quad M^2-10 M+3\geq 0,
    $$
    which holds for $M\geq \sqrt{22}+5\approx 9.690$.
    
    To prove the main chain of inequalities in the statement of our lemma, again by performing basic algebraic operations, we get:
    \[
    \frac{q}{2}M 
    \leq q (M-2)
    \quad\ifif\quad
    M \geq 4;
    \]
    \[
    q (M-2)
    \leq \frac{q+1}{2}(M-3)
    \quad\ifif\quad
    q\leq \frac{M-3}{M-1}
    \]
    which holds since $q=\frac{M-3}{M-1}-\frac{2}{3M}$;
    \[
    \frac{q+1}{2}(M-3)
    \leq M-4
    \quad\ifif\quad
    q\leq 1-\frac{2}{M-3}
    \]
    which holds due to~\eqref{eq:matching_existence_helper};
    \[
    M-4
    \leq q(M-1)
    \quad\ifif\quad
    q\geq 1-\frac{3}{M-1}
    \]
    which holds due to~\eqref{eq:matching_existence_helper};
    \[
    \leq q(M-1)
    \leq \frac{q+1}{2}(M-2)
    \quad\ifif\quad
    q\leq 1-\frac{2}{M}
    \]
    which holds since, from~\eqref{eq:matching_existence_helper}, we have that
    $q\leq 1-\frac{2}{M-3}<1-\frac{2}{M}$;
    \[
    \frac{q+1}{2}(M-2)
    \leq M-3
    \quad\ifif\quad
    q\leq 1-\frac{2}{M-2}
    \]
    which holds since, from~\eqref{eq:matching_existence_helper}, we have that
    $q\leq 1-\frac{2}{M-3}<1-\frac{2}{M-2}$;
    \[
    M-3
    \leq \frac{q+1}{2}(M-1)
    \quad\ifif\quad
    q\geq 1-\frac{4}{M-1}
    \]
    which holds since, from~\eqref{eq:matching_existence_helper}, we have that
    $q\leq 1-\frac{3}{M-1}>1-\frac{4}{M-1}$.
    \end{proof}

We are now ready to proceed with the proof of our theorem:
\begin{proof}[Proof of~\cref{th:tight_existence_approx}]
Let $M\coloneqq 12$ and fix any $\varepsilon$ such that
\begin{equation}
\label{eq:tight_nonexistence_distance_helper_11}
0 \leq \varepsilon <\frac{1}{3M}-\frac{2}{M^2}=\frac{1}{72}.
\end{equation}

Consider a DFPA with two bidders with value spaces $V_1=V_2=\ssets{0,1}$, bidding space $B=\sset{0,\frac{1}{M},\frac{2}{M},\dots, 1}$, and identical prior distributions with probability mass function
\begin{equation}
    \label{eq:nonexistence_pure_distros}
f(v)=
\begin{cases}
    \frac{M-3}{M-1}-\frac{2}{3M}\eqqcolon q, &\text{if}\;\; v=0,\\
    1-q, &\text{if}\;\; v=1.
\end{cases}
\end{equation}
Notice that this distribution is well-defined, since $q\in (0,1)$ for $M\geq 10$ (see~\eqref{eq:matching_existence_helper} in the proof of~\cref{lemma:matching_existence_helper_1}).

First, we will show that no profile can be an $\varepsilon$-approximate equilibrium, unless $\beta_1(0)=\beta_2(0)=0$. Assume (without loss) that $\beta_1(0)\geq \beta_2(0)$, and let $\beta_1(0)=b\geq \frac{1}{M}$. We will prove that bidder $1$ can get an $\varepsilon$-improvement by deviating to $\beta_1(0)=0$. Indeed, the probability that bidder $1$ gets the item by reporting $b$ is at least
\[
\frac{1}{2}\prob{\beta_2(v_2) \leq b} \geq \frac{1}{2}\prob{v_2=0} = \frac{q}{2}
\]
so, keeping everything else fixed and conditioning on the true value of bidder $1$ being $v_1=0$, the improvement in the utility is:
\begin{align*}
u_1\left(\beta_1(0)=0\right) - u_1\left(\beta_1(0)=b\right)
	&= 0 - \prob{\text{bidder $1$ wins given report $b$}} \cdot (0-b)\\
 	& \geq \frac{q}{2}b \geq \frac{q}{2}\frac{1}{M}\\
 	&= \left[\frac{M-3}{2(M-1)}-\frac{1}{3M}\right]\frac{1}{M}, \qquad \text{by substituting $q$ from~\eqref{eq:nonexistence_pure_distros}},\\
 	&>\left[\frac{M-3}{3M}-\frac{1}{3M}\right]\frac{1}{M}\\
 	&=\frac{1}{3M}-\frac{4}{3}\frac{1}{M^2}>\frac{1}{3M}-2\frac{1}{M^2}=\varepsilon.
\end{align*}

Thus, from now on we fix $\beta_1(0)=\beta_2(0)=0$, and we use the shorthands $\beta_1=\beta_1(1)$, $\beta_2=\beta_2(1)$. We will now show that, for any bidder $i=1,2$:
\begin{itemize}
	\item[(a)] if $\beta_{-i}\in\sset{0,\frac{1}{M},\frac{2}{M}}$, then
	\[ 
		u_i\left(\beta_i=\beta_{-i}+\frac{1}{M};v_i=1\right)-u_i(\beta_i=b_i;v_i=1)>\varepsilon 
		\qquad\text{for all}\;\; b_i\neq \beta_{-i}+\frac{1}{M}
	\]
	\item[(b)] if $\beta_{-i}\in\sset{\frac{3}{M},\frac{4}{M},\dots,1}$, then
	\[ 
		u_i\left(\beta_i=\frac{1}{M};v_i=1\right)-u_i(\beta_i=b_i;v_i=1)>\varepsilon 
		\qquad\text{for all}\;\; b_i\neq \frac{1}{M}.
	\]
\end{itemize}
This is enough to establish the nonexistence of an $\varepsilon$-approximate equilibrium, since it shows that for \emph{any} choice of $\beta_1,\beta_2\in\sset{0,\frac{1}{M},\frac{2}{M},\dots,1}$ there will always exist an $\varepsilon$-improving deviation.
To simplify the notation, we will drop the conditioning on $v_i=1$ for the rest of the proof, and also denote
$$
\Delta(z)= \Delta(z,\beta_{-i})=u_i(\beta_i=z;v_i=1)-\max_{b_i\neq z}u_i(\beta_i=b_i;v_i=1),
$$
so that cases (a) and (b) above can be expressed simply as
\begin{itemize}
	\item[(a)] $\Delta\left(\beta_{-i}+\frac{1}{M}\right)>\varepsilon$ for $\beta_{-i}\in\sset{0,\frac{1}{M},\frac{2}{M}}$ 
	\item[(b)] $\Delta\left(\frac{1}{M}\right)>\varepsilon$ for $\beta_{-i}\in\sset{\frac{3}{M},\frac{4}{M},\dots,1}$.
\end{itemize}
First, observe that
$$
u_i(\beta_i=b_i)= 1-b_i\qquad \text{for all}\;\; b_i\geq \beta_{-i}+\frac{1}{M}
$$
Next, for the first case (a), we perform a case analysis depending on $\beta_{-i}=0,\frac{1}{M},\frac{2}{M}$:
\begin{itemize}
	\item[---] if $\beta_{-i}=0$, then $u_i(\beta_i=0)=\frac{1}{2}(1-0)=\frac{1}{2}$ and so
	\begin{align*}
	\Delta\left(\frac{1}{M}\right)
		= 1-\frac{1}{M}-\max\sset{\frac{1}{2},1-\frac{2}{M}}
		= \min\sset{\frac{1}{2}-\frac{1}{M},\frac{1}{M}}=\frac{1}{M}>\varepsilon.
	\end{align*}
	\item[---] if $\beta_{-i}=\frac{1}{M}$, then 
	$$u_i(\beta_i=0)=\frac{1}{2}q(1-0)=q\frac{1}{2}$$ 
	and 
	$$u_i\left(\beta_i=\frac{1}{M}\right)=\left[q+\frac{1}{2}(1-q)\right]\left(1-\frac{1}{M}\right)=\frac{q+1}{2}\left(1-\frac{1}{M}\right).$$
	So
	\begin{align*}
	\Delta\left(\frac{2}{M}\right)
	&= 1-\frac{2}{M}-\frac{1}{M}\max\sset{q\frac{M}{2},\frac{q+1}{2}(M-1),M-3}\\
	&= 1-\frac{2}{M}-\frac{1}{M}\frac{q+1}{2}(M-1), &&\text{by~\cref{lemma:matching_existence_helper_1}},\\
	&=\frac{1}{3M} - \frac{1}{3}\frac{1}{M^2}, &&\text{by substituting $q$ from~\eqref{eq:nonexistence_pure_distros}},\\
	&>\frac{1}{3M} - 2\frac{1}{M^2}=\varepsilon.
	\end{align*}
	\item[---] if $\beta_{-i}=\frac{2}{M}$, then 
	$$u_i(\beta_i=0)=\frac{1}{2}q(1-0)=q\frac{1}{2},$$ 
	$$
	u_i\left(\beta_i=\frac{1}{M}\right)=q\left(1-\frac{1}{M}\right)
	$$
	and 
	$$u_i\left(\beta_i=\frac{2}{M}\right)=\frac{q+1}{2}\left(1-\frac{2}{M}\right).$$
	So
	\begin{align*}
	\Delta\left(\frac{3}{M}\right)
	& = 1-\frac{3}{M}-\frac{1}{M}\max\sset{q\frac{M}{2},q(M-1),\frac{q+1}{2}(M-2),M-4}\\
	& = 1-\frac{3}{M} - \frac{1}{M}\frac{q+1}{2}(M-2), &&\text{by~\cref{lemma:matching_existence_helper_1}},\\
	& = \frac{M^2-6 M+2}{3 (M-1) M^2}, &&\text{by substituting $q$ from~\eqref{eq:nonexistence_pure_distros}},\\
	&> \frac{M^2-6 M}{3 M^3}=\frac{1}{3M}-\frac{2}{M^2}
	 = \varepsilon.
	\end{align*}
\end{itemize}

For the second case (b), it is
$$u_i(\beta_i=0)=\frac{1}{2}q(1-0)=q\frac{1}{2};$$ 
$$
u_i(\beta_i=b_i)=q(1-b_i)\qquad\text{for all}\;\; b_i=\frac{1}{M},\dots,\beta_{-i}-\frac{1}{M};
$$
$$u_i(\beta_i=\beta_{-i})=\frac{q+1}{2}(1-\beta_{-i}).$$
Therefore,
\begin{align*}
\Delta\left(\frac{1}{M}\right) &=
	 q\left(1-\frac{1}{M}\right)-\frac{1}{M}\max\sset{q\frac{M}{2},q(M-2),\frac{q+1}{2}(M-M\beta_{-i}), M-M\beta_{-i}-1}\\
	 &\geq q\left(1-\frac{1}{M}\right)-\frac{1}{M}\max\sset{q\frac{M}{2},q(M-2),\frac{q+1}{2}(M-3), M-4}, &&\text{since}\;\; \beta_{-i}\geq \frac{3}{M},\\
	 &= q\left(1-\frac{1}{M}\right)-\frac{1}{M}(M-4), &&\text{by~\cref{lemma:matching_existence_helper_1}},\\
	 &= \frac{1}{3M}+\frac{2}{3M^2}> \frac{1}{3M}-\frac{2}{M^2}= \varepsilon,
\end{align*}
where the penultimate equality follows by substituting $q$ from~\eqref{eq:nonexistence_pure_distros}.
\end{proof}

\section{NP-Hardness of Deciding Approximate Equilibria} \label{app:np-hardness}

In this section, we provide the full proof for the NP-hardness of the problem. In our reduction, we consider an instance of the problem with bidding space $B=\{0,1,2,3\}$ and a common value space for all bidders $V=\left\{0,\frac{9}{4},10\right\}$. Notice that in this case $V$ and $B$ do not lie inside $[0,1]$ as we originally defined. However, it is trivial to see, by the definition of the utility, that if we divide all elements of both sets by the maximum ($10$ in this case), then they are all in $[0,1]$ as required, and, additionally, the comparisons between the corresponding utilities in our proof still hold, as they have all been multiplied by $1/10$. This means that the value of $\varepsilon$ that we will compute for the hardness of approximation will also have to be multiplied by $1/10$ at the end of the proof.

Let a DFPA instance with $B=\{0,1,2,3\}$ and a common value space for all bidders, $V=\left\{0,\frac{9}{4},10\right\}$. In this context, the bidding strategy of a bidder $i$ can be efficiently represented by one of the $4^3=64$ functions of type $V\rightarrow B$. We construct a reduction from the Circuit Satisfiability problem (\textsc{Circuit-SAT}), which is NP-complete \parencite{Garey1979a}.

\begin{definition}
    Given a Boolean circuit $C$, \textsc{Circuit-SAT} is the problem of deciding whether there is an assignment to its inputs that makes its output \emph{true}.
\end{definition}

Our construction has 3 basic parts. Given an instance of \textsc{Circuit-SAT}, we transform it to an instance of the problem of calculating a PBNE in a DFPA as follows: 

\begin{itemize}

    \item[---] We map \emph{false} and \emph{true} to two specific strategies, which we will denote as $s_0$ and $s_1$.     
    
    \item[---] For each logic gate in the circuit, we introduce new bidders (along with their subjective prior distributions) to the DFPA, such that the best-response of the bidder corresponding to the gate's output given the gate's input bidder(s) reflects the type of gate (with respect to the mapping of \emph{false} and \emph{true} to $s_0$ and $s_1$). This ensures that a PBNE can only exist if the gates are correctly simulated.
    
    \item[---] We introduce two extra bidders such that the conditions of a PBNE are not satisfied between them unless the bidder corresponding to the output of the circuit represents the value \emph{true}.
    
\end{itemize}

\addtocontents{toc}{\protect\setcounter{tocdepth}{1}}
\subsection{Simulating Circuit Gates} \label{sec:simulation}
\addtocontents{toc}{\protect\setcounter{tocdepth}{2}}
We know that the NAND gate is functionally complete, i.e. all possible boolean functions can be represented by circuits using only NAND gates. Since we can construct a NAND from an OR and a NOT gate, we can transform (in polynomial time) any circuit of an instance of \textsc{Circuit-SAT} to one that contains only OR and NOT gates. Therefore, it suffices to reduce from these new instances, which are also hard to solve.
To simulate these in our reduction, we will introduce 2 corresponding gadgets, which will be used in combination with an extra gadget we define, the \emph{projection} gadget.
The \emph{projection} gadget has one input and one output bidder, and it is used for appropriately mapping any strategies different from $s_0$ and $s_1$ that might arise as outputs of the OR and NOT gadgets back to either $s_0$ or $s_1$. The values of \emph{false} and \emph{true} will be encoded by a bidder's strategy as follows:

\begin{itemize}
    \item \emph{false} is encoded by bidding strategies such that $\hat{\beta}(0)=0$, $\hat{\beta}\left(\frac{9}{4}\right)=1$, and $\hat{\beta}(10)=2$;
    \item \emph{true} is encoded by bidding strategies such that $\hat{\beta}(0)=0$, $\hat{\beta}\left(\frac{9}{4}\right)=2$, and $\hat{\beta}(10)=3$.
\end{itemize}

To simplify notation, we denote these strategies as $s_0=(0,1,2)$ for \emph{false} and $s_1=(0,2,3)$ for \emph{true}.

\subsubsection{Projection gadget}\label{sec:projection}
We begin by designing the \emph{projection} gadget, as we will need it to simulate the OR and NOT gates. Suppose bidder $i$ encodes a boolean value via her strategy $\hat{\beta}_i$. Consider a bidder $j$ whose subjective prior is given as follows: $F_{ji}$ is the discrete distribution that assumes value $0$ with probability $6/100$ and value $9/4$ with probability $94/100$; for any other $i'\neq i,j$, $F_{ji'}$ is the discrete distribution that assumes value $0$ with probability $1$. We will now perform a case analysis to calculate $j$'s best-response to $i$'s strategies. Whenever $i$ plays $s_0$ or $s_1$, $j$ will have to copy her strategy in order to best-respond. Additionally, all other strategies would result in $j$'s utility being at least $\frac{1}{18}$ worse. We also provide $j$'s best-response to strategies other than $s_0$ and $s_1$, as they will be required for the proof of \cref{lem:inpgate}. As $F_{ji}$ has probability mass only on $0$ and $9/4$, we can ignore $i$'s strategy when having value $10$; we will denote this by $x$ below. To begin with, $j$'s best-response must satisfy $\hat{\beta}_j(0)=0$ due to the no-overbidding assumption. In what follows, we allow a slight abuse of notation where $u_j(b;v)$ denotes the expected utility that bidder $j$ receives (from her point of view) when having value $v$ and choosing to bid $b$ - this is in contrast to the usual definition where we also include the other bidders' strategies, however here we explicitly list their strategies in the description of each different case. We consider the following ones:

\begin{enumerate}

    \item Suppose bidder $i$ plays the strategy $(0,0,x)$. We can compute the winning probabilities and utilities for bidder $j$ when bidding $0$, $1$, $2$, or $3$, and when having value $9/4$ or $10$:
    
    \begingroup
    \renewcommand{\arraystretch}{1.2}
    \begin{center}\begin{tabular}{c|cccc}
        $b$ & $0$ & $1$ & $2$ & $3$\\
        \hline
        $H_j(b)$ & $\frac{1}{n}$ & $1$ & $1$ & $1$ \\
        $u_j(b;9/4)$ & $\frac{9}{4n}$ & $\frac{5}{4}$ & $\frac{1}{4}$ & $-$\\
        $u_j(b;10)$ & $\frac{10}{n}$ & $9$ & $8$ & $7$
    \end{tabular}\end{center}
    \endgroup
    
    Since $n\geq 2$, we have the bounds $u_j(0;9/4)\leq\frac{9}{8}$, $u_j(0;10)\leq 5$. Comparing these values, we conclude that it is optimal for bidder $j$ to bid $\hat{\beta}_j(9/4)=1$, $\hat{\beta}_j(10)=1$. Also, we can see that there is no other $\varepsilon$-best-response for $\varepsilon<\frac{1}{18}$.

    \item Suppose bidder $i$ plays some strategy of the form $(0,1,x)$. We can compute the winning probabilities and utilities for bidder $j$ when bidding $0$, $1$, $2$, or $3$, and when having value $9/4$ or $10$:
    
    \begingroup
    \renewcommand{\arraystretch}{1.2}
    \begin{center}\begin{tabular}{c|cccc}
        $b$ & $0$ & $1$ & $2$ & $3$\\
        \hline
        $H_j(b)$ & $\frac{3}{50n}$ & $\frac{53}{100}$ & $1$ & $1$ \\
        $u_j(b;9/4)$ & $\frac{27}{200n}$ & $\frac{53}{80}$ & $\frac{1}{4}$ & $-$\\
        $u_j(b;10)$ & $\frac{3}{5n}$ & $\frac{477}{80}$ & $8$ & $7$
    \end{tabular}\end{center}
    \endgroup
    
    Since $n\geq 2$, we have the bounds $u_j(0;9/4)\leq\frac{27}{400}$, $u_j(0;10)\leq\frac{3}{10}$. Comparing these values, we conclude that it is optimal for bidder $j$ to bid $\hat{\beta}_j(9/4)=1$, $\hat{\beta}_j(10)=2$. Also, we can see that there is no other $\varepsilon$-best-response for $\varepsilon<\frac{1}{18}$. 
    
    \item Suppose bidder $i$ plays some strategy of the form $(0,2,x)$. We can compute the winning probabilities and utilities for bidder $j$ when bidding $0$, $1$, $2$, or $3$, and when having value $9/4$ or $10$:
    
    \begingroup
    \renewcommand{\arraystretch}{1.2}
    \begin{center}\begin{tabular}{c|cccc}
        $b$ & $0$ & $1$ & $2$ & $3$\\
        \hline
        $H_j(b)$ & $\frac{3}{50n}$ & $\frac{3}{50}$ & $\frac{53}{100}$ & $1$ \\
        $u_j(b;9/4)$ & $\frac{27}{200n}$ & $\frac{3}{40}$ & $\frac{53}{400}$ & $-$\\
        $u_j(b;10)$ & $\frac{3}{5n}$ & $\frac{27}{50}$ & $\frac{106}{25}$ & $7$
    \end{tabular}\end{center}
    \endgroup
    
    Since $n\geq 2$, we have the bounds $u_j(0;9/4)\leq\frac{27}{400}$, $u_j(0;10)\leq\frac{3}{10}$. Comparing these values, we conclude that it is optimal for bidder $j$ to bid $\hat{\beta}_j(9/4)=2$, $\hat{\beta}_j(10)=3$. Also, we can see that there is no other $\varepsilon$-best-response for $\varepsilon<\frac{1}{18}$.
    
    \item Suppose bidder $i$ plays the strategy $(0,3,x)$. We can compute the winning probabilities and utilities for bidder $j$ when bidding $0$, $1$, $2$, or $3$, and when having value $9/4$ or $10$:
    
    \begingroup
    \renewcommand{\arraystretch}{1.2}
    \begin{center}\begin{tabular}{c|cccc}
        $b$ & $0$ & $1$ & $2$ & $3$\\
        \hline
        $H_j(b)$ & $\frac{3}{50n}$ & $\frac{3}{50}$ & $\frac{3}{50}$ & $\frac{53}{100}$ \\
        $u_j(b;9/4)$ & $\frac{27}{200n}$ & $\frac{3}{40}$ & $\frac{3}{200}$ & $-$\\
        $u_j(b;10)$ & $\frac{3}{5n}$ & $\frac{27}{50}$ & $\frac{24}{50}$ & $\frac{371}{100}$
    \end{tabular}\end{center}
    \endgroup
    
    Since $n\geq 2$, we have the bounds $u_j(0;9/4)\leq\frac{27}{400}$, $u_j(0;10)\leq\frac{3}{10}$. Comparing these values, we conclude that it is optimal for bidder $j$ to bid $\hat{\beta}_j(9/4)=1$, $\hat{\beta}_j(10)=3$. Also, we can see that there is no other $\varepsilon$-best-response for $\varepsilon<\frac{1}{18}$.

\end{enumerate}

From the above case analysis, we obtain the following lemma:

\begin{lemma}\label{lem:projgate}
    If the bidder $i$ corresponding to the input of a \emph{projection} gadget plays either $s_0$ or $s_1$, then, in all $\varepsilon$-PBNE of the resulting DFPA (where $\varepsilon \in [0,\frac{1}{18})$), bidder $j$ (corresponding to the gadget's output) matches $i$'s strategy.
    Additionally, if bidder $i$ plays some strategy of the form $(0,1,x)$ or $(0,2,x)$, bidder $j$'s unique $\varepsilon$-best-response for $\varepsilon \in [0,\frac{1}{18})$ is $s_0$ and $s_1$ respectively.
\end{lemma}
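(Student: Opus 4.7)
The plan is to argue by a direct, case-by-case computation of bidder $j$'s interim utility function, exploiting the fact that $j$'s subjective beliefs concentrate all mass on the bidders other than $i$ having value $0$. First I would observe that, under the prescribed priors $F_{j,i'}$ for $i'\neq i,j$ (which place mass $1$ on value $0$), the no-overbidding assumption forces every such bidder $i'$ to bid $0$ at value $0$, so from $j$'s perspective only $i$'s realized bid is stochastic. Since $F_{j,i}$ places mass only on $\{0,9/4\}$, only the first two coordinates of $i$'s strategy matter, justifying the shorthand $x$ for $i$'s bid at value $10$ throughout. A second application of no-overbidding pins $\hat\beta_j(0)=0$, reducing the question to determining $j$'s optimal bid at values $9/4$ and $10$.

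Next, no-overbidding on $i$ itself restricts $i$'s strategies to the form $(0,b,x)$ with $b\in\{0,1,2,3\}$, so I would carry out the same routine for each of the four choices of $b$: compute the winning probability $H_j(b')$ for each $b'\in\{0,1,2,3\}$ from the tie-breaking rule and $F_{j,i}$, then form the utilities $u_j(b';9/4)$ and $u_j(b';10)$. These are exactly the four tables already displayed above the lemma, so at this step I would simply present them as the computational content of the proof. To keep the bid-$0$ entries under control uniformly in the number of bidders, I use $n\geq 2$ to upper-bound $u_j(0;9/4)$ and $u_j(0;10)$ by the stated constants.

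With the tables in hand, one reads off the unique maximizer in each row. In case $b=1$ the optimal pair is $(\hat\beta_j(9/4),\hat\beta_j(10))=(1,2)=s_0$, and in case $b=2$ the optimum is $(2,3)=s_1$; these two statements are precisely the ``additionally'' clause of the lemma. The first clause then follows immediately, because $s_0=(0,1,2)$ is of the form $(0,1,x)$ (so $j$ must play $s_0$) and $s_1=(0,2,3)$ is of the form $(0,2,x)$ (so $j$ must play $s_1$). To upgrade ``best-response'' to ``unique $\varepsilon$-best-response for $\varepsilon<1/18$'', I would compare the best entry with the second-best entry in each row of each of the four tables, and verify that the gap exceeds $1/18$ in every case.

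There is no genuine obstacle here: the proof is essentially a mechanical case analysis, and the main point to flag is calibration rather than difficulty. The prior weights $6/100$ and $94/100$ on $F_{j,i}$, together with the value $9/4$, are tuned precisely so that in every row of every table the margin between the optimal bid and its nearest competitor strictly exceeds $1/18$; this is the only place where the threshold $1/18$ appears, and it is what determines the approximation window stated in the lemma.
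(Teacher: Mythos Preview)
Your proposal is correct and follows exactly the paper's approach: the lemma is stated immediately after the four case-analysis tables, and the paper's ``proof'' is literally the sentence ``From the above case analysis, we obtain the following lemma.'' One small slip worth noting is that no-overbidding at value $9/4$ actually rules out $b=3$ (since $3>9/4$), so strictly speaking only the cases $b\in\{0,1,2\}$ can arise for $i$; the paper includes the $(0,3,x)$ table only because it is reused later in the input-gadget analysis (\cref{lem:inpgate}), not because it is needed here.
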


Using the construction of the projection gadget, we can similarly define a \emph{split} gadget. Let $i$ be the bidder corresponding to the input of the split, and $j,k$ be the bidders corresponding to its outputs. We can then define the subjective priors of $j$ and $k$ as if each of them was the output of a projection gadget from $i$. It follows directly from our analysis above that in all $\varepsilon$-PBNE of the resulting DFPA, the output bidders of a split in the circuit have to match the strategy of the input bidder. Additionally, as these splits are used in order to repeat a value in multiple parts of the circuit, it is important to note that the number of bidders introduced remains polynomial on the number of gates of the circuit, since all outputs of a split will end up being inputs to some gate.

\subsubsection{OR gadget}
Next, we show how to simulate an OR gate. Suppose bidders $i$, $j$ encode boolean values via their strategies $\hat{\beta}_i$ and $\hat{\beta}_j$. Consider a bidder $k$ whose subjective prior is given as follows: $F_{ki}$ and $F_{kj}$ are discrete distributions that assume value $0$ with probability $6/100$ and value $9/4$ with probability $94/100$; for any other $i'\neq i,j$, $F_{ki'}$ is the discrete distribution that assumes value $0$ with probability $1$. Again, we assume that bidders other than $k$ adopt one of the two strategies $(0,1,2)$ or $(0,2,3)$. In particular, they bid $0$ at value $0$, and either $1$ or $2$ at value $9/4$; for our analysis in this part, what they bid at value $10$ does not make a difference, as $k$'s subjective prior puts 0 probability mass at value $10$. We can show that bidder $k$'s best-response when having value $9/4$ is to copy the highest among the strategies of bidders $i$ and $j$. Then, we can add an extra \emph{projection} gadget, mapping $k$'s strategy which will be of the form either $(0,1,x)$ or $(0,2,x')$ back to $s_0$ and $s_1$ respectively. Once again, note that $k$'s best-response must satisfy $\hat{\beta}_k(0)=0$, due to the no-overbidding assumption. In all the following tables we omit the row corresponding to $u_k(b;10)$, as this will be mapped back to one of $s_0, s_1$ with the use of the \emph{projection} gadget.

\begin{enumerate}

    \item Suppose bidders $i$ and $j$ both play the strategy $(0,1,2)$. We can compute the winning probabilities and utilities for bidder $k$ when bidding $0$, $1$, $2$, or $3$, and when having value $9/4$ or $10$:
    
    \begingroup
    \renewcommand{\arraystretch}{1.2}
    \begin{center}\begin{tabular}{c|cccc}
        $b$ & $0$ & $1$ & $2$ & $3$\\
        \hline
        $H_k(b)$ & $\frac{9}{2500n}$ & $\frac{2659}{7500}$ & $1$ & $1$ \\
        $u_k(b;9/4)$ & $\frac{81}{10000n}$ & $\frac{2659}{6000}$ & $\frac{1}{4}$ & $-$
    \end{tabular}\end{center}
    \endgroup
    
    Since $n\geq 3$, we have the bound $u_k(0;9/4)\leq\frac{27}{10000}$. Comparing these values, we conclude that it is optimal for bidder $j$ to bid $\hat{\beta}_k(9/4)=1$. Also, we can see that there is no other $\varepsilon$-best-response for $\varepsilon<\frac{1}{18}$.
    
    \item Suppose bidder $i$ plays the strategy $(0,1,2)$ and bidder $j$ plays the strategy $(0,2,3)$ (the same analysis holds if we swap the roles of $i$ and $j$). We can compute the winning probabilities and utilities for bidder $k$ when bidding $0$, $1$, $2$, or $3$, and when having value $9/4$ or $10$:
    
    \begingroup
    \renewcommand{\arraystretch}{1.2}
    \begin{center}\begin{tabular}{c|cccc}
        $b$ & $0$ & $1$ & $2$ & $3$\\
        \hline
        $H_k(b)$ & $\frac{9}{2500n}$ & $\frac{159}{5000}$ & $\frac{53}{100}$ & $1$ \\
        $u_k(b;9/4)$ & $\frac{81}{10000n}$ & $\frac{159}{4000}$ & $\frac{53}{400}$ & $-$
    \end{tabular}\end{center}
    \endgroup
    
    Since $n\geq 3$, we have the bound $u_k(0;9/4)\leq\frac{27}{10000}$. Comparing these values, we conclude that it is optimal for bidder $j$ to bid $\hat{\beta}_k(9/4)=2$. Also, we can see that there is no other $\varepsilon$-best-response for $\varepsilon<\frac{1}{18}$.
    
    \item Suppose bidders $i$ and $j$ both play the strategy $(0,2,3)$. We can compute the winning probabilities and utilities for bidder $k$ when bidding $0$, $1$, $2$, or $3$, and when having value $9/4$ or $10$:
    
    \begingroup
    \renewcommand{\arraystretch}{1.2}
    \begin{center}\begin{tabular}{c|cccc}
        $b$ & $0$ & $1$ & $2$ & $3$\\
        \hline
        $H_k(b)$ & $\frac{9}{2500n}$ & $\frac{9}{2500}$ & $\frac{2659}{7500}$ & $1$ \\
        $u_k(b;9/4)$ & $\frac{81}{10000n}$ & $\frac{9}{2000}$ & $\frac{2659}{30000}$ & $-$
    \end{tabular}\end{center}
    \endgroup
    
    Since $n\geq 3$, we have the bound $u_k(0;9/4)\leq\frac{27}{10000}$. Comparing these values, we conclude that it is optimal for bidder $j$ to bid $\hat{\beta}_k(9/4)=2$. Also, we can see that there is no other $\varepsilon$-best-response for $\varepsilon<\frac{1}{18}$.
    
\end{enumerate}

Using the construction of the OR gadget followed by a \emph{projection} gadget, we get the following lemma: 

\begin{lemma}\label{lem:orgate}

    Let $i$ and $j$ be the bidders corresponding to the inputs of an OR gate and assume each of them plays either $s_0$ or $s_1$. Additionally, let $k$ be the bidder introduced in the OR gadget, which we pass as an input to a \emph{projection} gadget with corresponding output bidder $k'$. Then, the only $\varepsilon$-best-response of bidder $k'$ (where $\varepsilon \in [0,\frac{1}{18})$) is $s_1$, if and only if at least one of $i$ and $j$ plays $s_1$, else it is $s_0$.
    
\end{lemma}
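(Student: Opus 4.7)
The plan is to assemble the conclusion by chaining the three case analyses already carried out for bidder $k$ with the projection gadget lemma (Lemma~\ref{lem:projgate}), so no new calculations are needed beyond what appears in the excerpt. The key observation enabling this reduction is that $k$'s subjective priors assign positive mass only to values $0$ and $9/4$ for the two input bidders $i,j$, and value $0$ with probability $1$ for every other bidder in the auction. By the no-overbidding assumption, every other bidder must bid $0$ from $k$'s perspective, so the winning probabilities $H_k(b)$ computed in cases (1)--(3) are genuinely unaffected by the rest of the strategy profile.

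First, I would translate the three cases of the OR gadget computation into statements about $k$'s strategy in any $\varepsilon$-PBNE with $\varepsilon < 1/18$. When both $i,j$ play $s_0 = (0,1,2)$, case (1) forces $\hat{\beta}_k(9/4) = 1$, so $k$'s strategy has the form $(0,1,x)$ for some $x \in B$. When at least one of $i,j$ plays $s_1 = (0,2,3)$, cases (2) and (3) force $\hat{\beta}_k(9/4) = 2$, so $k$'s strategy has the form $(0,2,x')$. In both situations the best-response is unique up to $\varepsilon$, because the second-best utility is strictly more than $1/18$ below the best, as verified in the tables.

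Second, I would apply the second part of Lemma~\ref{lem:projgate} to the projection gadget whose input is $k$ and whose output is $k'$. That lemma says that whenever $k$ plays a strategy of the form $(0,1,x)$, the unique $\varepsilon$-best-response of $k'$ is $s_0$, and whenever $k$ plays a strategy of the form $(0,2,x')$, the unique $\varepsilon$-best-response of $k'$ is $s_1$. Combining with the previous paragraph yields exactly the claim: $k'$ plays $s_1$ in any $\varepsilon$-PBNE iff at least one of $i,j$ plays $s_1$, and $k'$ plays $s_0$ otherwise.

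There is no real obstacle in this argument—it is a direct composition of the analyses that are already done. The only thing to verify carefully is that the projection gadget appended to $k$ does not create feedback that could destabilize $k$'s computed best-response. This is automatic: the projection gadget's bidder $k'$ has subjective prior supported on $k$'s values, and so does not appear in $k$'s subjective prior at all; hence $k'$'s bid is perceived by $k$ as $0$ almost surely (by no-overbidding on value $0$), leaving the tables in cases (1)--(3) intact.
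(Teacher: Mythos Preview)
Your proposal is correct and follows essentially the same approach as the paper: the paper's proof is precisely the three-case utility table analysis for bidder $k$ (yielding $\hat\beta_k(9/4)\in\{1,2\}$ according to whether at least one input is $s_1$), followed by an invocation of the projection gadget (Lemma~\ref{lem:projgate}) on $k$ to obtain $k'$'s strategy. Your added remark that appending the projection bidder $k'$ cannot alter $k$'s computed utilities (because $k$'s subjective prior puts full mass on $0$ for $k'$) is a helpful sanity check that the paper leaves implicit.
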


The above lemma implies that the OR gate is correctly simulated at all PBNE.

\subsubsection{Negation gadget}
To conclude this section, we show how to simulate a negation gate. Suppose bidder $i$ encodes a boolean value via her strategy $\hat{\beta}_i$. Consider a bidder $j$ whose subjective prior is given as follows: $F_{ji}$ is the discrete distribution that assumes value $0$ with probability $6/100$ and value $10$ with probability $94/100$; for any other $i'\neq i,j$, $F_{ji'}$ is the discrete distribution that assumes value $0$ with probability $1$. Again, assuming that every bidder other than $j$ adopts one of the two strategies $(0,1,2)$ or $(0,2,3)$, we can show the best-response of bidder $j$, when having value $9/4$, is to negate the strategy of bidder $i$. Afterwards, similarly to what we did for the OR gate, we add a \emph{projection} gadget, mapping the strategy of $j$ (which will be either of the form $(0,1,x)$ or $(0,2,x)$) back to $s_0$ and $s_1$ respectively. Similarly to the description of the OR gadget, it must be the case that $j$'s best-response satisfies $\hat{\beta}_j(0)=0$, due to the no-overbidding assumption. Once again, we omit the row of the table corresponding to $u_j(b;10)$.

\begin{enumerate}

    \item Suppose bidder $i$ plays the strategy $(0,1,2)$. We can compute the winning probabilities and utilities for bidder $j$ when bidding $0$, $1$, $2$, or $3$, and when having a value of $9/4$ or $10$:

    \begingroup
    \renewcommand{\arraystretch}{1.2}
    \begin{center}\begin{tabular}{c|cccc}
        $b$ & $0$ & $1$ & $2$ & $3$\\
        \hline
        $H_j(b)$ & $\frac{3}{50n}$ & $\frac{3}{50}$ & $\frac{53}{100}$ & $1$ \\
        $u_j(b;9/4)$ & $\frac{27}{200n}$ & $\frac{3}{40}$ & $\frac{53}{400}$ & $-$
    \end{tabular}\end{center}
    \endgroup
    
    Since $n\geq 2$, we have the bound $u_j(0;9/4)\leq\frac{27}{400}$. Comparing these values, we conclude that it is optimal for bidder $j$ to bid $\hat{\beta}_j(9/4)=2$. Also, we can see that there is no other $\varepsilon$-best-response for $\varepsilon<\frac{1}{18}$.
    
    \item Suppose bidder $i$ plays the strategy $(0,2,3)$. We can compute the winning probabilities and utilities for bidder $j$ when bidding $0$, $1$, $2$, or $3$, and when having a value of $9/4$ or $10$:
    
    \begingroup
    \renewcommand{\arraystretch}{1.2}
    \begin{center}\begin{tabular}{c|cccc}
        $b$ & $0$ & $1$ & $2$ & $3$\\
        \hline
        $H_j(b)$ & $\frac{3}{50n}$ & $\frac{3}{50}$ & $\frac{3}{50}$ & $\frac{53}{100}$ \\
        $u_j(b;9/4)$ & $\frac{27}{200n}$ & $\frac{3}{40}$ & $\frac{3}{200}$ & $-$
    \end{tabular}\end{center}
    \endgroup

     Since $n\geq 2$, we have the bound $u_j(0;9/4)\leq\frac{27}{400}$. Comparing these values, we conclude that it is optimal for bidder $j$ to bid $\hat{\beta}_j(9/4)=1$. Also, we can see that there is no other $\varepsilon$-best-response for $\varepsilon<\frac{1}{18}$.
     
\end{enumerate}

Let $j'$ be the bidder corresponding to the output of a \emph{projection} gadget, which takes $j$ as input. Again, assuming that every bidder other than $j, j'$ adopts one of the two strategies $(0,1,2)$ or $(0,2,3)$, we can show the best-response of bidder $j'$ is to negate the strategy of bidder $i$:

\begin{lemma}\label{lem:notgate}
    If the bidder $i$ corresponding to the input of a negation gadget plays either $s_0$ or $s_1$, then, in all $\varepsilon$-PBNE of the resulting DFPA (where $\varepsilon \in [0,\frac{1}{18})$), bidder $j'$ (corresponding to the output of the \emph{projection} gadget added after the negation gate) plays the opposite of $i$'s strategy.
\end{lemma}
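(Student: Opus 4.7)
The plan is to combine the two case-analysis tables that immediately precede the lemma with the already-established Lemma~\ref{lem:projgate} on the projection gadget. First, I would note that $F_{ji}$ is the only non-trivial subjective prior of bidder $j$ (every other $F_{ji'}$ places full mass on value $0$), so the utility computations in the tables depend only on bidder $i$'s strategy and are valid regardless of how the negation gadget is embedded in a larger circuit. In particular, the strategies chosen by bidders outside $\{i,j,j'\}$ do not affect $j$'s incentives.

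Next, I would fix a putative $\varepsilon$-PBNE with $\varepsilon \in [0, 1/18)$ and determine $j$'s strategy from $i$'s. No-overbidding forces $\hat{\beta}_j(0) = 0$. The first table then shows that when $i$ plays $s_0=(0,1,2)$, the unique $\varepsilon$-best-response at value $9/4$ is $\hat{\beta}_j(9/4)=2$, while the second table shows that when $i$ plays $s_1=(0,2,3)$, the unique $\varepsilon$-best-response at value $9/4$ is $\hat{\beta}_j(9/4)=1$. Since $F_{ji}$ puts no mass on value $10$, the bid $\hat{\beta}_j(10)$ is left unconstrained by these tables; so $j$'s strategy necessarily has the form $(0,2,x)$ in the first case and $(0,1,x)$ in the second, for some $x \in B$.

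Finally, I would invoke Lemma~\ref{lem:projgate} applied to the projection gadget from $j$ to $j'$. That lemma states precisely that whenever the input bidder's strategy has the form $(0,2,x)$, the unique $\varepsilon$-best-response of the output is $s_1$, and whenever it has the form $(0,1,x)$, the unique $\varepsilon$-best-response is $s_0$. Matching these two cases with the previous step yields that $j'$ plays $s_1$ exactly when $i$ plays $s_0$ and vice versa, which is the desired boolean negation.

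I do not anticipate a real obstacle: the projection lemma was deliberately stated to cover exactly the ``partial'' strategies $(0,1,x)$ and $(0,2,x)$ produced by the case analysis for $j$, and both lemmas share the same approximation threshold $1/18$. The only checks that need to be made carefully are (i) that the uniqueness of the $\varepsilon$-best-response at value $9/4$ in $j$'s tables indeed rules out any other bid at that value, and (ii) that Lemma~\ref{lem:projgate} does not require knowledge of the bid $x$ at value $10$ from $j$ --- both of which follow directly from the way those earlier results are stated.
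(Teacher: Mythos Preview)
Your approach is correct and mirrors the paper's proof exactly: determine $\hat\beta_j(9/4)$ from the two case-analysis tables, observe that $\hat\beta_j(10)$ need not be pinned down, and then invoke Lemma~\ref{lem:projgate} on the strategy $(0,\hat\beta_j(9/4),x)$. One small slip to fix: in the negation gadget $F_{ji}$ actually places mass on values $0$ and $10$ (not on $0$ and $9/4$), so your stated justification for why $\hat\beta_j(10)$ is unconstrained is wrong --- the correct reason is that the \emph{projection} gadget's prior $F_{j'j}$ is the one that puts no mass on $10$, so Lemma~\ref{lem:projgate} depends only on the input bidder's bid at value $9/4$.
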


\addtocontents{toc}{\protect\setcounter{tocdepth}{1}}
\subsection{Input gadget}\label{sec:inputgadget}
\addtocontents{toc}{\protect\setcounter{tocdepth}{2}}
We can see (and trivially prove by structural induction on the circuit) that, given that the bidders corresponding to the inputs of the circuit play either $s_0$ or $s_1$, all bidders in the DFPA are playing $\varepsilon$-best-responses if and only if the gates are correctly simulated. However, it remains to specify the subjective priors for these input bidders. To achieve this normalized behaviour, we can introduce a special gadget, similar to the \emph{projection} gadget, for each input to the circuit. Essentially, we add two bidders projecting to each other; our analysis will show that in all $\varepsilon$-PBNE they should both be playing either $s_0$ or $s_1$.

Suppose bidder $i$ encodes a boolean value via her strategy $\hat{\beta}_i$. Consider a bidder $j$, whose subjective prior is given as follows: $F_{ji}$ is the discrete distribution that assumes value $0$ with probability $6/100$ and value $9/4$ with probability $94/100$; for any other $i'\neq i,j$, $F_{ji'}$ is the discrete distribution that assumes value $0$ with probability $1$. At the same time, $i$'s subjective prior is given by: $F_{ij}$ is the discrete distribution that assumes value $0$ with probability $6/100$ and value $9/4$ with probability $94/100$; for any other $j'\neq i,j$, $F_{ij'}$ is the discrete distribution that assumes value $0$ with probability $1$. 

Using the analysis from \cref{sec:projection}, we can list all the best-responses of $j$ in a potential PBNE:

\begin{center}\begin{tabular}{c|c}
        $i$'s strategy & $j$'s best-response \\
        \hline
        $(0,0,x)$ & $(0,1,1)$ \\
        $(0,1,x)$ & $(0,1,2)$ \\
        $(0,2,x)$ & $(0,2,3)$ \\
        $(0,3,x)$ & $(0,1,3)$
\end{tabular}
\captionof{table}{Bidder $j$'s best-responses given $i$'s strategies}
\end{center}

Also, from previous sections, we know that there are no other strategies that result in utility within $1/18$ of that of the best-responses. It is evident that, due to the symmetric nature of the input gadget, the best-responses of $i$ to the strategies of $j$ are symmetrical to the ones in the above table. Therefore, in all PBNE $i$ and $j$ would have to play the same strategy, which would be either $s_0$ or $s_1$, thus yielding the intended behaviour of the circuit's input gates.

\begin{lemma}\label{lem:inpgate}
    If $i$ and $j$ are the bidders corresponding to some input of the circuit (as described in the construction of the input gadget), then, in all $\varepsilon$-PBNE for $\varepsilon\in[0,1/18)$, $i$ and $j$ both play the same strategy, which can be either $s_0$ or $s_1$.
\end{lemma}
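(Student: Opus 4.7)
The plan is to leverage the symmetry between bidders $i$ and $j$ together with the uniqueness of the best-response correspondence established in the projection-gadget analysis of Section 5.1.1, reducing the problem to a finite case check. First I would observe that $F_{j,i}$ places all its mass on $\{0, 9/4\}$, while $F_{j,i'}$ places all its mass on $\{0\}$ for every $i' \ne i,j$. Combined with no-overbidding (which forces every bidder to bid $0$ whenever her value is $0$), this means that bidder $j$'s interim utility at any fixed bid depends on the global strategy profile \emph{only} through $\beta_i(9/4)$; the values $\beta_i(10)$ and $\beta_{i'}(\cdot)$ for $i' \ne i,j$ are invisible from $j$'s perspective. By construction, $F_{i,j}$ has the analogous property, so $i$'s best-response depends only on $\beta_j(9/4)$.

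Next I would invoke the case analysis of Section 5.1.1 verbatim: since $F_{j,i}$ is precisely the prior used there, and uniqueness of the $\varepsilon$-best-response for $\varepsilon < 1/18$ was established in that section, I obtain the best-response table
\[
\beta_i(9/4) = 0 \;\Rightarrow\; \beta_j = (0,1,1), \qquad
\beta_i(9/4) = 1 \;\Rightarrow\; \beta_j = s_0, \qquad
\beta_i(9/4) = 2 \;\Rightarrow\; \beta_j = s_1,
\]
with the value $3$ at $v=9/4$ excluded by no-overbidding. By symmetry the same correspondence determines $i$'s unique $\varepsilon$-best-response as a function of $\beta_j(9/4)$.

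To conclude, I would enumerate the fixed points of the composed best-response map. If $\beta_i(9/4) = 0$, then $\beta_j = (0,1,1)$, so $\beta_j(9/4) = 1$, which in turn forces $\beta_i = s_0$ and hence $\beta_i(9/4) = 1$, contradicting the starting assumption. If $\beta_i(9/4) = 1$, then $\beta_j = s_0$, whence $\beta_j(9/4) = 1$ and $\beta_i = s_0$: consistent. If $\beta_i(9/4) = 2$, then $\beta_j = s_1$, whence $\beta_j(9/4) = 2$ and $\beta_i = s_1$: also consistent. These are the only surviving configurations, giving exactly $\beta_i = \beta_j \in \{s_0, s_1\}$.

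The only real obstacle is the first step: justifying that $j$'s best-response is truly insensitive to $\beta_i(10)$ and to the strategies of all bidders outside the gadget, so that the projection-gadget analysis of Section 5.1.1 can be imported without modification despite the surrounding circuit. This is immediate from the product structure of~\eqref{eq:DFPA-utility-interim-mixed} and the degeneracy of $F_{j,\cdot}$ on the remaining bidders, but it must be stated explicitly before the enumeration above becomes a mechanical verification.
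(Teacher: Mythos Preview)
Your proposal is correct and follows essentially the same approach as the paper: invoke the projection-gadget best-response table from Section~5.1.1, appeal to the symmetry of $F_{i,j}$ and $F_{j,i}$, and check which pairs of strategies are mutual $\varepsilon$-best-responses. Your explicit exclusion of $\beta_i(9/4)=3$ via no-overbidding and your spelled-out fixed-point enumeration are slightly tidier than the paper's terse table-and-symmetry argument, but the underlying logic is identical.
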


Combining \crefrange{lem:projgate}{lem:inpgate}, we deduce the following:

\begin{corollary}\label{cor:gate-simulation}
    Using the above mapping from a circuit to a DFPA, all induced bidders are simultaneously $\varepsilon$-best-responding (for $\varepsilon\in[0,1/18)$) to each other if and only if the values that their strategies map to satisfy the gates of the circuit.
\end{corollary}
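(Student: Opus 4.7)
The plan is to derive \cref{cor:gate-simulation} by structural induction on the circuit, traversing its DAG in topological order from inputs to output, and invoking the gadget lemmas already proved in \cref{sec:simulation,sec:inputgadget} at each level. The whole corollary is really just a bookkeeping argument that glues those local characterizations together; no new computation on utilities is needed.

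For the base case, I would invoke \cref{lem:inpgate}: for each circuit input, the two mutually-projecting bidders of the input gadget are simultaneously $\varepsilon$-best-responding iff they both play the same strategy in $\{s_0,s_1\}$. This pins down a boolean value at each input wire of the circuit. For the inductive step, I would process the gates (and splits) in topological order, assuming that the bidders encoding a gate's inputs already play strategies in $\{s_0,s_1\}$ that encode the intended booleans. For a split node the two outputs are each built as a projection gadget from the same input, so \cref{lem:projgate} gives that in any $\varepsilon$-PBNE both output bidders uniquely best-respond by copying the input strategy, correctly duplicating the value. For an OR gate followed by its appended projection, \cref{lem:orgate} characterizes the unique $\varepsilon$-best-response of the output bidder as $s_1$ iff at least one of the two input bidders plays $s_1$. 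For a NOT gate followed by its projection, \cref{lem:notgate} gives negation. Crucially, each of these lemmas yields an iff because, for $\varepsilon<1/18$, the prescribed strategy is the \emph{unique} $\varepsilon$-best-response and the gap to every other strategy strictly exceeds $\varepsilon$.

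The two directions of the corollary then follow directly. For the ``only if'' direction, assume every induced bidder is $\varepsilon$-best-responding; \cref{lem:inpgate} forces the input-gadget bidders into $\{s_0,s_1\}$ and the inductive step forces every gate's output bidder to encode the correct logical value, so the gate definitions are satisfied. Conversely, if the strategies already encode an assignment satisfying every gate, then for each bidder in turn (input, split, OR-projection, NOT-projection) the corresponding lemma confirms her strategy is the unique $\varepsilon$-best-response against the others, so the whole profile is $\varepsilon$-best-responding.

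The only subtlety—rather than a real obstacle—is to check that the local analyses of the gadgets transfer unchanged to the global DFPA. This holds because in each gadget construction the subjective priors of the gadget's bidders are concentrated on value $0$ for all ``external'' bidders; by no-overbidding those external bidders contribute bid $0$ in the relevant utility computations, so the winning-probability tables computed in \cref{sec:simulation,sec:inputgadget} remain valid when embedded in the full auction. Once this is observed, the induction composes the per-gadget iff statements into the iff claimed in \cref{cor:gate-simulation}.
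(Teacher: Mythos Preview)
Your proposal is correct and follows essentially the same approach as the paper, which derives the corollary simply by ``combining \cref{lem:projgate}--\cref{lem:inpgate}'' via the structural induction it had already flagged as trivial at the start of \cref{sec:inputgadget}. In fact you spell out more detail than the paper does, including the observation that the gadgets' subjective priors concentrate on value $0$ for external bidders so the local utility tables remain valid globally.
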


\addtocontents{toc}{\protect\setcounter{tocdepth}{1}}
\subsection{Output Gadget} \label{sec:output-gadget}
\addtocontents{toc}{\protect\setcounter{tocdepth}{2}}
Thus far, our construction ensures that at an $\varepsilon$-PBNE it is necessary that all the gates are correctly simulated. As we are reducing from the \textsc{Circuit-SAT} problem, we want an $\varepsilon$-PBNE to exist only in cases where the output of the circuit can take \emph{true} as a value. This leads us to the last gadget of our construction - the \emph{output} gadget. This introduces two new bidders to the auction, $k$ and $\ell$, which (given their subjective priors, which assume value $0$ with probability $1$ for every bidder other than each other) cannot play best-responses simultaneously. Then, we also add $k$'s prior distribution for bidder $i$ (the bidder corresponding to the output of the circuit), such that now the bidders can simultaneously play $\varepsilon$-best-responses if and only if bidder $i$'s strategy represents the value \emph{true}.

Next, we provide the details of the construction of the output gadget. We introduce the two new bidders $k$ and $\ell$ with the following prior distributions: $F_{k\ell}$ and $F_{\ell k}$ are discrete distributions that assume value $0$ with probability $8/11$ and value $10$ with probability $3/11$. Bidder $k$'s prior for bidder $i$ (corresponding to the circuit's output) is $F_{ki}$, the discrete distribution that assumes value $0$ with probability $1/2$ and value $9/4$ with probability $1/2$. For any other $k'\neq k,\ell$ and $\ell' \neq k,\ell,i$, $F_{\ell k'}$ and $F_{k\ell'}$ are discrete distributions that assume value $0$ with probability $1$.

We will begin our analysis by calculating bidder $\ell$'s best-responses depending on $k$'s strategies. Note that, since $F_{\ell k}$ only has mass at $0$ and $10$, it doesn't matter what $\hat{\beta}_k(9/4)$ is for strategy $\hat{\beta}_k$, hence we will denote this as $x$ in our analysis. As our aim is to compare how $k$ and $\ell$ best-respond to each other, and the subjective prior of each puts positive probability only on $0$ and $10$, we will only analyse what $u_{\ell}(b;10)$ will be. We begin by noting that all of $\ell$'s best-responses must satisfy $\hat{\beta}_{\ell}(0)=0$, due to the no-overbidding assumption. Then:

\begin{enumerate}

    \item Suppose bidder $k$ plays some strategy of the form $(0,x,0)$. We can compute the winning probabilities and utilities for bidder $\ell$ when bidding $0$, $1$, $2$, or $3$ and having value $10$:
    
    \begingroup
    \renewcommand{\arraystretch}{1.2}
    \begin{center}\begin{tabular}{c|cccc}
        $b$ & $0$ & $1$ & $2$ & $3$\\
        \hline
        $H_{\ell}(b)$ & $\frac{1}{n}$ & $1$ & $1$ & $1$ \\
        $u_{\ell}(b;10)$ & $\frac{10}{n}$ & $9$ & $8$ & $7$
    \end{tabular}\end{center}
    \endgroup
    
    Since $n\geq 2$, we have the bound $u_{\ell}(0;10)\leq 5$. Comparing the values in the table, we conclude that it is optimal for bidder $\ell$ to bid $\hat{\beta}_{\ell}(10)=1$. Also, we can see that there is no other $\varepsilon$-best-response for $\varepsilon<\frac{1}{18}$.
    
    \item Suppose bidder $k$ plays some strategy of the form $(0,x,1)$. We can compute the winning probabilities and utilities for bidder $\ell$ when bidding $0$, $1$, $2$, or $3$ and having value $10$:
    
    \begingroup
    \renewcommand{\arraystretch}{1.2}
    \begin{center}\begin{tabular}{c|cccc}
        $b$ & $0$ & $1$ & $2$ & $3$\\
        \hline
        $H_{\ell}(b)$ & $\frac{8}{11n}$ & $\frac{19}{22}$ & $1$ & $1$ \\
        $u_{\ell}(b;10)$ & $\frac{80}{11n}$ & $\frac{171}{22}$ & $8$ & $7$
    \end{tabular}\end{center}
    \endgroup
    
    Since $n\geq 2$, we have the bound $u_{\ell}(0;10)\leq \frac{40}{11}$. Comparing the values in the table, we conclude that it is optimal for bidder $\ell$ to bid $\hat{\beta}_{\ell}(10)=2$. Also, we can see that there is no other $\varepsilon$-best-response for $\varepsilon<\frac{1}{18}$.

    \item Suppose bidder $k$ plays some strategy of the form $(0,x,2)$. We can compute the winning probabilities and utilities for bidder $\ell$ when bidding $0$, $1$, $2$, or $3$ and having value $10$:
    
    \begingroup
    \renewcommand{\arraystretch}{1.2}
    \begin{center}\begin{tabular}{c|cccc}
        $b$ & $0$ & $1$ & $2$ & $3$\\
        \hline
        $H_{\ell}(b)$ & $\frac{8}{11n}$ & $\frac{8}{11}$ & $\frac{19}{22}$ & $1$ \\
        $u_{\ell}(b;10)$ & $\frac{80}{11n}$ & $\frac{72}{11}$ & $\frac{76}{11}$ & $7$
    \end{tabular}\end{center}
    \endgroup
    
    Since $n\geq 2$, we have the bound $u_{\ell}(0;10)\leq \frac{40}{11}$. Comparing these values, we conclude that it is optimal for bidder $l$ to bid $\hat{\beta}_l(9/4)=1$, $\hat{\beta}_l(10)=3$. Also, we can see that there is no other $\varepsilon$-best-response for $\varepsilon<\frac{1}{18}$.

    \item Suppose bidder $k$ plays some strategy of the form $(0,x,3)$. We can compute the winning probabilities and utilities for bidder $\ell$ when bidding $0$, $1$, $2$, or $3$ and having value $10$:
    
    \begingroup
    \renewcommand{\arraystretch}{1.2}
    \begin{center}\begin{tabular}{c|cccc}
        $b$ & $0$ & $1$ & $2$ & $3$\\
        \hline
        $H_{\ell}(b)$ & $\frac{8}{11n}$ & $\frac{8}{11}$ & $\frac{8}{11}$ & $\frac{19}{22}$ \\
        $u_{\ell}(b;10)$ & $\frac{80}{11n}$ & $\frac{72}{11}$ & $\frac{64}{11}$ & $\frac{133}{22}$
    \end{tabular}\end{center}
    \endgroup
    
    Since $n\geq 2$, we have the bound $u_{\ell}(0;10)\leq \frac{40}{11}$. Comparing these values, we conclude that it is optimal for bidder $\ell$ to bid $\hat{\beta}_{\ell}(10)=1$. Also, we can see that there is no other $\varepsilon$-best-response for $\varepsilon<\frac{1}{18}$.

\end{enumerate}

We can note that $k$'s strategy at an equilibrium (from $\ell$'s point of view) can be represented by just $\hat{\beta}_k(10)$, since $\hat{\beta}_k(0)=0$ at all PBNE due to the no-overbidding assumption. We can do the same thing for $\ell$, since $k$'s prior for her has mass at a point other than $0$, and that is at $10$, so we get the following table representing $\ell$'s best-responses to $k$:

\begin{center}\begin{tabular}{c|cccc}
    $\hat{\beta}_k(10)$ & $0$ & $1$ & $2$ & $3$\\
    \hline
    $\hat{\beta}_{\ell}(10)$ & $1$ & $2$ & $3$ & $1$ 
\end{tabular}
\captionof{table}{Bidder $\ell$'s best-responses to bidder $k$'s strategies}
\end{center}

We will now look at $k$'s best-response. Since $k$ has a non-zero prior for both $i$ and $\ell$, her best-response will depend on two other bidders, resulting in more cases. We begin our analysis with the ones where $i$ plays $(0,1,2)$. Our goal is to show that none of them can lead to a PBNE of the DFPA. It is also safe to assume that, at all PBNE, $i$ plays either $(0,1,2)$ or $(0,2,3)$, using the analysis of the simulation of the circuit's gates, as $i$ corresponds to the output of the circuit. Start by fixing $i$'s strategy to $s_0=(0,1,2)$. By the no-overbidding assumption, $k$'s best-response must satisfy $\hat{\beta}_k(0)=0$. Then, we consider the following cases:

\begin{enumerate}

    \item Suppose bidder $\ell$ plays some strategy of the form $(0,x,0)$. We can compute the winning probabilities and utilities for bidder $k$ when bidding $0$, $1$, $2$, or $3$ and having value $10$:
    
    \begingroup
    \renewcommand{\arraystretch}{1.2}
    \begin{center}\begin{tabular}{c|cccc}
        $b$ & $0$ & $1$ & $2$ & $3$\\
        \hline
        $H_k(b)$ & $\frac{1}{2n}$ & $\frac{3}{4}$ & $1$ & $1$ \\
        $u_k(b;10)$ & $\frac{5}{n}$ & $\frac{27}{4}$ & $8$ & $7$
    \end{tabular}\end{center}
    \endgroup
    
    Since $n\geq 3$, we have the bound $u_k(0;10)\leq \frac{5}{3}$. Comparing the values in the table, we conclude that it is optimal for bidder $k$ to bid $\hat{\beta}_k(10)=2$. Also, we can see that there is no other $\varepsilon$-best-response for $\varepsilon<\frac{1}{18}$.

    \item Suppose bidder $\ell$ plays some strategy of the form $(0,x,1)$. We can compute the winning probabilities and utilities for bidder $k$ when bidding $0$, $1$, $2$, or $3$ and having value $10$:
    
    \begingroup
    \renewcommand{\arraystretch}{1.2}
    \begin{center}\begin{tabular}{c|cccc}
        $b$ & $0$ & $1$ & $2$ & $3$\\
        \hline
        $H_k(b)$ & $\frac{4}{11n}$ & $\frac{29}{44}$ & $1$ & $1$ \\
        $u_k(b;10)$ & $\frac{40}{11n}$ & $\frac{261}{44}$ & $8$ & $7$
    \end{tabular}\end{center}
    \endgroup
    
    Since $n\geq 3$, we have the bound $u_k(0;10)\leq \frac{40}{33}$. Comparing the values in the table, we conclude that it is optimal for bidder $k$ to bid $\hat{\beta}_k(10)=2$. Also, we can see that there is no other $\varepsilon$-best-response for $\varepsilon<\frac{1}{18}$.

    \item Suppose bidder $\ell$ plays some strategy of the form $(0,x,2)$. We can compute the winning probabilities and utilities for bidder $k$ when bidding $0$, $1$, $2$, or $3$ and having value $10$:
    
    \begingroup
    \renewcommand{\arraystretch}{1.2}
    \begin{center}\begin{tabular}{c|cccc}
        $b$ & $0$ & $1$ & $2$ & $3$\\
        \hline
        $H_k(b)$ & $\frac{4}{11n}$ & $\frac{6}{11}$ & $\frac{19}{22}$ & $1$ \\
        $u_k(b;10)$ & $\frac{40}{11n}$ & $\frac{72}{11}$ & $\frac{76}{11}$ & $7$
    \end{tabular}\end{center}
    \endgroup
    
    Since $n\geq 3$, we have the bound $u_k(0;10)\leq \frac{40}{33}$. Comparing the values in the table, we conclude that it is optimal for bidder $k$ to bid $\hat{\beta}_k(10)=3$. Also, we can see that there is no other $\varepsilon$-best-response for $\varepsilon<\frac{1}{18}$.

    \item Suppose bidder $\ell$ plays some strategy of the form $(0,x,3)$. We can compute the winning probabilities and utilities for bidder $k$ when bidding $0$, $1$, $2$, or $3$ and having value $10$:
    
    \begingroup
    \renewcommand{\arraystretch}{1.2}
    \begin{center}\begin{tabular}{c|cccc}
        $b$ & $0$ & $1$ & $2$ & $3$\\
        \hline
        $H_k(b)$ & $\frac{4}{11n}$ & $\frac{6}{11}$ & $\frac{8}{11}$ & $\frac{19}{22}$ \\
        $u_k(b;10)$ & $\frac{40}{11n}$ & $\frac{54}{11}$ & $\frac{64}{11}$ & $\frac{133}{22}$
    \end{tabular}\end{center}
    \endgroup
    
    Since $n\geq 3$, we have the bound $u_k(0;10)\leq \frac{65}{33}$. Comparing the values in the table, we conclude that it is optimal for bidder $k$ to bid $\hat{\beta}_k(10)=3$. Also, we can see that there is no other $\varepsilon$-best-response for $\varepsilon<\frac{1}{18}$.

\end{enumerate}

Hence, $k$'s unique ($\varepsilon$-) best-responses for $\varepsilon<\frac{1}{18}$ to $\ell$'s strategies when $i$ plays $(0,1,2)$ are the ones demonstrated in the following table:

\begin{center}\begin{tabular}{c|cccc}
    $\hat{\beta}_{\ell}(10)$ & $0$ & $1$ & $2$ & $3$\\
    \hline
    $\hat{\beta}_k(10)$ & $2$ & $2$ & $3$ & $3$ 
\end{tabular}
\captionof{table}{Bidder $k$'s best-responses given $\ell$'s strategies when $i$ plays $(0,1,2)$}
\end{center} 

Comparing this table with the one describing $\ell$'s best-responses, we can easily see that there is no equilibrium when $i$ plays $(0,1,2)$. We now examine the case where $i$ plays $(0,2,3)$. It suffices to show that there is some strategy of $\ell$ for which $k$'s best-response leads to both playing best-responses simultaneously. To this end, we consider the case where $\ell$ plays some strategy $(0,x,1)$. Then, $k$'s utility for each possible bid is the following:

\begingroup
\renewcommand{\arraystretch}{1.2}
\begin{center}\begin{tabular}{c|cccc}
    $b$ & $0$ & $1$ & $2$ & $3$\\
    \hline
    $H_k(b)$ & $\frac{4}{11n}$ & $\frac{19}{44}$ & $\frac{3}{4}$ & $1$ \\
    $u_k(b;10)$ & $\frac{40}{11n}$ & $\frac{171}{44}$ & $6$ & $7$
\end{tabular}\end{center}
\endgroup

Since $n\geq 3$, we have the bound $u_k(0;10)\leq \frac{40}{33}$. Comparing the values in the table, we conclude that it is optimal for bidder $k$ to bid $\hat{\beta}_k(10)=3$. Also, we can see that there is no other $\varepsilon$-best-response for $\varepsilon<\frac{1}{18}$. Consequently, there is a strategy profile with $k$ and $\ell$ simultaneously playing best-responses, where $\ell$ plays some strategy $(0,x,1)$ and $k$ plays some strategy $(0,x',3)$. This concludes our proof of the following lemma:

\begin{lemma}\label{lem:outputgate}
    Given the construction of the output gadget, agents $k$ and $\ell$ can simultaneously $\varepsilon$-best-respond to every agent (for $\varepsilon\in[0,1/18)$) if and only if the bidder corresponding to the output of the circuit plays the strategy $s_1=(0,2,3)$.
\end{lemma}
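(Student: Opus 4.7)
The plan is to leverage the fact that the priors in the output gadget are very sparse: outside of $\{k,\ell\}$ (and $i$ from $k$'s perspective), $F_{k\cdot}$ and $F_{\ell \cdot}$ put full mass on value $0$, so by no-overbidding those bidders must submit $0$ and their bids do not enter the utility calculations of $k$ and $\ell$. Moreover $F_{\ell k}$ is supported on $\{0,10\}$ and $F_{ki}$ on $\{0,9/4\}$, so from $\ell$'s perspective only $\hat{\beta}_k(10)$ matters, and from $k$'s perspective only $\hat{\beta}_\ell(10)$ and $\hat{\beta}_i(9/4)$ matter. By \cref{cor:gate-simulation} bidder $i$ must play $s_0$ or $s_1$ at any candidate $\varepsilon$-PBNE, so the whole question reduces to a finite check on two scalars conditional on $i \in \{s_0,s_1\}$.

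First I would compute $\ell$'s $\varepsilon$-best-response table, enumerating $\hat{\beta}_k(10) \in \{0,1,2,3\}$ and evaluating $u_\ell(b;10)=(10-b)H_\ell(b)$ for $b\in\{0,1,2,3\}$. The mass-$8/11$ at $0$ and mass-$3/11$ at $10$ in $F_{\ell k}$, together with the $n\geq 2$ bound that makes $u_\ell(0;10)$ small, pin down the unique optimal bid with an $1/18$ gap in each case, yielding the map $(0,1,2,3)\mapsto(1,2,3,1)$.

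Next I would compute $k$'s $\varepsilon$-best-response table twice, once for each value of $i$'s strategy. When $i=s_0=(0,1,2)$, enumerating $\hat{\beta}_\ell(10)\in\{0,1,2,3\}$ and writing out $H_k$ (here the joint distribution of $i$ and $\ell$'s bids enters, using the product $F_{ki}\otimes F_{k\ell}$ and handling ties with the uniform rule) yields the map $(0,1,2,3)\mapsto(2,2,3,3)$, again with an $1/18$ margin once $n\geq 3$ kills $u_k(0;10)$. Cross-checking against $\ell$'s table shows no common fixed point: starting from any entry one cycles without closing. Hence no $\varepsilon$-PBNE exists when $i=s_0$. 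For the converse direction, when $i=s_1=(0,2,3)$ it suffices to exhibit one simultaneous best-response pair: fixing $\hat{\beta}_\ell(10)=1$ and computing $k$'s utilities gives $\hat{\beta}_k(10)=3$ uniquely within $1/18$, and $\ell$'s table confirms that $\hat{\beta}_k(10)=3$ elicits $\hat{\beta}_\ell(10)=1$, closing the loop.

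The main obstacle is not conceptual but purely a bookkeeping one: each of the eight utility tables requires careful accounting of winning probabilities under uniform tie-breaking when $k$'s bid coincides with the deterministic bid of $i$ (say at value $9/4$ under $s_0$ or $s_1$), and the $1/18$ margin must be verified uniformly to upgrade the exact-equilibrium conclusions to $\varepsilon$-approximate ones for all $\varepsilon\in[0,1/18)$. Modulo this arithmetic, the structural argument (enumerate four-by-four, read off the lack/existence of a fixed point) is the entire proof.
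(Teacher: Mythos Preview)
Your proposal is correct and follows essentially the same approach as the paper: reduce to a finite check on $\hat{\beta}_k(10)$ and $\hat{\beta}_\ell(10)$ using the sparse priors, compute the $\ell$-to-$k$ best-response map $(0,1,2,3)\mapsto(1,2,3,1)$ and the $k$-to-$\ell$ map $(0,1,2,3)\mapsto(2,2,3,3)$ when $i=s_0$ to rule out a fixed point, and then exhibit the pair $(\hat{\beta}_\ell(10),\hat{\beta}_k(10))=(1,3)$ when $i=s_1$. The paper carries out exactly these utility tables with the same $n\geq 2$ and $n\geq 3$ bounds and the same $1/18$ margins, so your outline matches the paper's proof up to the bookkeeping you flag.
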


Combining the results of \crefrange{sec:simulation}{sec:output-gadget}, we complete the proof of the NP-hardness result. Note that this will hold for $1/10$ of the value of $\varepsilon$ we used in the reduction, due to the scaling argument that we mentioned in the beginning of \cref{app:np-hardness}.
We will demonstrate that there is a satisfying assignment to the circuit's inputs if and only if there is a PBNE in the resulting DFPA.

 Assume the circuit is satisfiable and let $\vec{\alpha}$ be a satisfying assignment to it inputs. Then, using \cref{cor:gate-simulation} and \cref{lem:outputgate}, the resulting DFPA has an exact PBNE where for each input $\alpha_i$ to the circuit, the corresponding bidders $i$ and $j$ (as described in \cref{sec:inputgadget}) both play strategy $s_0$ if ${\alpha_i}=\text{false}$, and $s_1$ otherwise. 
 
 If on the other hand the circuit is unsatisfiable, then the output bidder would always play $s_0$ in order to best-respond to the previous bidders while satisfying the circuit's gates, so there can be no $\varepsilon$-PBNE in the resulting DFPA, due to \cref{lem:outputgate}. This concludes our proof of NP-hardness in the auction we introduced for $\varepsilon<\frac{1}{18}$, thus yielding NP-hardness for $\varepsilon<\frac{1}{180}$ in the original auction.

\section{PPAD-Hardness of Finding Mixed Equilibria} \label{app:ppad-hardness}

In this section, we prove the following result:

\begin{theorem}
For any $\varepsilon < 1/36$, it is \ppad-hard to compute an $\varepsilon$-well-supported MBNE in a DFPA, even when the auction has interaction degree bounded by $d = 2$.
\end{theorem}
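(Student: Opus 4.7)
The plan is to reduce from the \pcircuit problem of \textcite{deligkas2022pure}, which is \ppad-complete even when every gate has constant in- and out-degree and even for a constant robustness parameter. The reduction will produce an instance of the DFPA whose solutions encode assignments of the \pcircuit variables to values in $\{0,1,\bot\}$. I will first produce the \ppad-hardness of computing $\varepsilon$-well-supported MBNE for some constant $\varepsilon$, on instances whose interaction degree is bounded by $d=2$; then invoking \cref{lem:NE2WSNE} with this constant $d$ converts the $\varepsilon^2/(8d)$-approximate MBNE problem to $\varepsilon$-well-supported MBNE in polynomial time, upgrading the hardness to approximate MBNE for a (smaller) constant $\varepsilon$.

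The core of the construction is an encoding step. For each variable/node $v$ of the \pcircuit instance I introduce a ``variable bidder'' $B_v$ with a two-point value space (say $V_v = \{0, \tfrac{1}{2}\}$) and a small common bidding space $B$ containing two distinguished bids $b_0 < b_1$. The intended semantics is that in any $\varepsilon$-well-supported MBNE, the strategy $\beta_v$ at the high value encodes the variable's assignment: $\beta_v$ concentrated on $b_0$ means ``$0$'', concentrated on $b_1$ means ``$1$'', and any nontrivial mixture means ``$\bot$''. For each \pcircuit gate I attach a constant-size gadget of auxiliary bidders whose subjective priors place positive mass only on the bidders participating in that gate, and assign mass $1$ to value $0$ for all other bidders; this is precisely what keeps the interaction degree bounded by $d=2$ (since \pcircuit gates have at most two distinct neighbouring variables on each side). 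The numerical parameters of the priors are chosen so that the gadget's equilibrium condition forces the encoded output of the gate to agree, up to the robustness slack $\varepsilon$, with the \pcircuit semantics of the gate applied to the encoded inputs: e.g., the \NOT gadget will make a bidder strictly prefer $b_1$ (resp.\ $b_0$) whenever the input bidder places sufficient mass on $b_0$ (resp.\ $b_1$), and allow mixing only when the input is itself mixed; the \OR, \AND and \PURE gadgets are built analogously.

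Given these gadgets, a standard gate-by-gate argument will show that from any $\varepsilon$-well-supported MBNE (for a sufficiently small constant $\varepsilon$) one can read off, in polynomial time, an assignment of the \pcircuit variables that satisfies every gate up to the robustness parameter of the \pcircuit instance; conversely, any satisfying assignment of the \pcircuit instance can be realized as a well-supported MBNE by letting each variable bidder play the canonical pure/mixed strategy corresponding to her assigned label. The reduction is polynomial, so \ppad-hardness of $\varepsilon$-well-supported MBNE for this constant $\varepsilon$ follows; \cref{lem:NE2WSNE} then yields \ppad-hardness of $\varepsilon'$-approximate MBNE for the constant $\varepsilon' := \sqrt{8d\,\varepsilon}$.

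The main obstacle, and where most of the technical work will go, is the numerical design of the four gate gadgets. One must simultaneously: (i) enforce the intended symbolic behaviour under any well-supported response; (ii) tolerate all three values $0$, $1$, $\bot$ on the inputs without introducing spurious equilibria; (iii) respect both the no-overbidding constraint and the tie-breaking rule, which produce nonlinear $H$-functions as in \eqref{eq:H-function} even for very small gadgets; and (iv) keep the slack window wide enough that a single constant $\varepsilon$ works uniformly across all gadgets. The \PURE gadget is the most delicate, since it must forbid the joint assignment where both outputs are $\bot$ while leaving the other three combinations feasible; I expect to encode this by two coupled auxiliary bidders whose subjective priors on the \PURE input and on each other are calibrated so that the unique $\varepsilon$-well-supported responses rule out exactly the forbidden configuration.
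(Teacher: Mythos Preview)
Your high-level plan matches the paper: reduce from \pcircuit, one bidder per variable, gate gadgets whose subjective priors are concentrated on at most two other bidders (hence $d=2$), and read off the assignment from the support of each variable bidder's mixed strategy. A few points where the paper's execution diverges from your sketch, and where your plan as stated would run into trouble:

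\begin{itemize}
\item[---] \pcircuit has gates \NOT, \AND, \PURE; there is no \OR gate to implement.
\item[---] The ``converse'' direction you mention (every satisfying assignment is realised by some well-supported MBNE) is not needed for a \ppad reduction between total search problems: you only need to extract a \pcircuit solution from any $\varepsilon$-well-supported MBNE.
\item[---] Your proposed \PURE gadget (two coupled auxiliary bidders ruling out the $(\bot,\bot)$ configuration) is more intricate than necessary and, as stated, it is not clear it can be made to work within $d=2$. The paper's trick is much simpler: the two output bidders look only at the single input bidder, but are given \emph{different} high values ($9/16$ and $11/16$). This shifts the indifference threshold so that one output strictly prefers $b_1$ whenever the input's $b_1$-mass is $\geq 1/2$, while the other strictly prefers $b_2$ whenever that mass is $\leq 1/2$; hence at least one output is always pure. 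No coupling is used.
\item[---] Encoding via the two lowest bids $b_0,b_1$ is fragile: every bidder with value $0$ already bids $b_0$, so you need a separate argument that variable bidders never put mass on $b_0$ at their high value. The paper uses four bids $\{0,1/4,1/2,3/4\}$ and encodes via $b_1$ versus $b_2$; a preliminary ``validity'' step shows $b_0$ is never an $\varepsilon$-best response for any variable bidder, which then makes the gate analyses clean.
\item[---] The \NOT gate is the one that actually needs auxiliary machinery: the paper introduces both an extra bidder $\vvar'$ and a globally shared ``constant'' bidder (itself enforced by two further bidders) that is guaranteed to bid $b_2$. Your sketch should anticipate this; a direct one-bidder \NOT gadget does not seem to give a uniform $1/36$ gap.
\end{itemize}

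Finally, in your last line the direction of \cref{lem:NE2WSNE} is inverted: hardness of $\varepsilon$-well-supported MBNE transfers to hardness of $\varepsilon^2/(8d)$-approximate MBNE, not $\sqrt{8d\varepsilon}$.
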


By \cref{lem:NE2WSNE} we also obtain the following.

\begin{corollary}
For any $\varepsilon < 1/(3^4 \cdot 2^8)$, it is \ppad-hard to compute an $\varepsilon$-approximate MBNE in a DFPA.
\end{corollary}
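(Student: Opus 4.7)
The plan is to reduce from \pcircuit, which Deligkas et al.\ showed to be PPAD-complete even when only \NOT, \OR, and \PURIFY\ gates are used. Recall that each wire in a \pcircuit\ instance carries a value in $\{0, 1, \garbo\}$, and the task is to find a gate-consistent assignment. Our reduction will produce a DFPA with a small fixed bidding space (for concreteness, $B = \{0, 1/3, 2/3, 1\}$ suffices) such that $\varepsilon$-well-supported MBNEs correspond to valid \pcircuit\ assignments. For each wire $w$, I introduce a single ``wire bidder'' $i_w$ with value space $V_{i_w} = \{0, 1\}$. By the no-overbidding assumption, $i_w$ must bid $0$ on value $0$, so her mixed strategy is completely described by the distribution she uses at value $1$. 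I encode the wire value via the probability $p_w$ that $i_w$ places on the highest admissible bid: $w = 0$ when $p_w$ is small (say, at most $1/3$), $w = 1$ when $p_w$ is large (at least $2/3$), and $w = \garbo$ in the intermediate range. This is the standard three-region encoding used in \pcircuit-based reductions.

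Next, for every gate of the \pcircuit\ instance I would build a small gadget consisting of one or two auxiliary bidders, whose subjective priors are supported on the relevant input wire bidders and the output wire bidder. All other bidders in the auction are assigned, in these subjective priors, probability $1$ on value $0$; thus each bidder ``sees'' at most two others nontrivially, giving interaction degree $d = 2$ as required. Inside each gadget, the prior probabilities are tuned so that computing the winning probability against the inputs of the gate (via the formulas of \cref{lemma:DFPA-utilities-efficient}) yields a unique $\varepsilon$-best-response bid for the output wire bidder that sits in the correct encoding region of the corresponding gate output. In particular, the \NOT\ and \OR\ (equivalently \AND) gadgets are direct analogues of the simulation gadgets from the NP-hardness construction in \cref{app:np-hardness}, but adapted so that comparisons between expected utilities leave a slack of at least the target $\varepsilon < 1/36$, thereby ensuring that any $\varepsilon$-well-supported best-response must indeed fall in the intended region.

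The main obstacle will be the \PURIFY\ gadget. Its defining requirement, that a single $\garbo$-valued input be resolved into two outputs not both equal to $\garbo$, has no counterpart in the boolean setting: it is an inherently non-local constraint relating two output wires. I plan to implement it via a pair of auxiliary bidders whose subjective priors couple the two output wire bidders and a common input bidder in a way that is symmetric under swapping the outputs, but whose best-response structure has no fixed point in which both outputs simultaneously lie in the $\garbo$ region. A convenient way to enforce this is to make the winning-probability landscape of one output bidder depend monotonically on the other's probability $p$, so that if $p$ sits in the middle band, the best response of the other output is pushed away from the middle band by more than $\varepsilon$. Once the three gadgets are calibrated, the correctness of the reduction follows: an $\varepsilon$-well-supported MBNE forces each wire probability into one of the three encoding regions and forces all gate constraints to hold, yielding a valid \pcircuit\ assignment; conversely, any valid assignment lifts directly to an exact MBNE. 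The constant $1/36$ will come out of the worst-case gap across the three gadget analyses, after rescaling bids/values into $[0,1]$, in complete analogy with the NP-hardness bound of $1/180$ in \cref{app:np-hardness}.
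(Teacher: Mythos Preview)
Your proposal targets the wrong statement. The corollary is about \emph{$\varepsilon$-approximate} MBNE with the constant $1/(3^4 \cdot 2^8)$, whereas everything you sketch establishes hardness only for \emph{$\varepsilon$-well-supported} MBNE with the constant $1/36$. In the paper, the latter is the content of the preceding theorem; the corollary is then derived in one line by invoking \cref{lem:NE2WSNE}: since the construction has interaction degree $d=2$, any $\varepsilon^2/(8d)$-approximate MBNE can be converted in polynomial time into an $\varepsilon$-well-supported MBNE, so hardness at $\varepsilon<1/36$ for well-supported equilibria yields hardness at $\varepsilon' < (1/36)^2/(8\cdot 2) = 1/(3^4 \cdot 2^8)$ for approximate equilibria. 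Your write-up never makes this step, and in particular never explains where $1/(3^4\cdot 2^8)$ comes from; as written you have only (a sketch of) the theorem, not the corollary.

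On the gadget sketch itself: your \PURE description is the shakiest part and differs materially from the paper. You propose to couple the two output bidders to each other ``in a way that is symmetric under swapping the outputs'' and then hope no fixed point has both in the middle band. But symmetry works against you: a symmetric best-response correspondence can perfectly well have a symmetric fixed point with both outputs in the $\garbo$ region. The paper's implementation is instead deliberately \emph{asymmetric}: each output bidder looks only at the input $\uvar$ (not at the other output), and they are given distinct non-zero values $v_\vvar=9/16$ and $v_\wvar=11/16$. This shifts the indifference threshold of the two outputs to opposite sides of $p_1=1/2$, so for every $p_1\in[0,1]$ at least one of them is forced to a pure value with slack $\geq 1/32$. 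Your symmetric-coupling idea would need a genuinely different argument to rule out the symmetric $\garbo/\garbo$ fixed point, and you have not supplied one.
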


\addtocontents{toc}{\protect\setcounter{tocdepth}{1}}
\subsection{The \pcircuit Problem}
\addtocontents{toc}{\protect\setcounter{tocdepth}{2}}

An instance of the \pcircuit problem is given by a node set $\variables$ and a set $G$ of gates. Each gate $g \in G$ is of the form $g = (T,\uvar,\vvar,\wvar)$ where $\uvar,\vvar,\wvar \in \variables$ are distinct nodes, and $T \in \{\NOT, \AND, \PURE\}$ is the type of the gate, with the following interpretation.
\begin{itemize}
	\item[---] If $T=\NOT$, then $\uvar$ is the input of the gate, and $\vvar$ is its output. ($\wvar$ is unused)
	\item[---] If $T=\AND$, then $\uvar$ and $\vvar$ are the inputs of the gate, and $\wvar$ is its output.
	\item[---] If $T=\PURE$, then $\uvar$ is the input of the gate, and $\vvar$ and $\wvar$ are its outputs.
\end{itemize}
We require that each node is the output of exactly one gate.

A solution to instance $(\variables,G)$ is an assignment $\valonly: \variables \to \{0,1,\garbo\}$ that satisfies all the gates, i.e., for each gate $g=(T,\uvar,\vvar,\wvar) \in G$ we have the following. 
\begin{itemize}
	\item[---] If $T=\NOT$ in $g=(T,\uvar,\vvar)$, then $\valonly$ satisfies
	\begin{align*}
		&\val{\uvar} = 0 \implies \val{\vvar} = 1\\
		&\val{\uvar} = 1 \implies \val{\vvar} = 0.
	\end{align*}
	\item[---] If $T=\AND$ in $g = (T,\uvar,\vvar,\wvar)$, then $\valonly$ satisfies
	\begin{align*}
		\val{\uvar} = \val{\vvar} = 1 & \implies \val{\wvar} = 1\\
		\val{\uvar}=0 \lor \val{\vvar}=0 &\implies \val{\wvar}=0.
	\end{align*}
	\item[---] If $T=\PURE$, then $\valonly$ satisfies
	\begin{align*}
		& \{\val{\vvar}, \val{\wvar}\} \cap \{0,1\} \neq \emptyset\\
		& \val{\uvar} \in \{0,1\} \implies \val{\vvar} = \val{\wvar} = \val{\uvar}.
	\end{align*}
\end{itemize}

\begin{theorem}[\cite{deligkas2022pure}]\label{thm:purecircuit}
The \pcircuit problem is \ppad-complete.
\end{theorem}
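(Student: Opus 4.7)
The plan is to establish \ppad-hardness by a polynomial-time reduction from \pcircuit, which is \ppad-complete by \cref{thm:purecircuit}. Given an instance $(\variables, G)$ of \pcircuit, I construct a DFPA with one main bidder $i_x$ per node $x\in\variables$, and encode a candidate assignment via the bidding strategies. Because each \pcircuit gate depends on at most two input nodes, the subjective prior of each bidder will place non-zero-value mass on at most two other bidders (the inputs of the corresponding gate); every remaining bidder is perceived as having value $0$ and thus bids $0$ by no-overbidding. This immediately yields interaction degree bounded by $d=2$.

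I will use a constant-size common bidding space $B$ and constant-size value spaces, tuned so that each bidder $i_x$ has a distinguished ``strategic'' value at which exactly two admissible $\varepsilon$-best responses can ever arise: a ``low'' bid $b_L$ and a ``high'' bid $b_H$, with every other bid in $B$ at least $\varepsilon$-suboptimal regardless of opponents. The encoding is then: $\val{x}=0$ if $\support{\beta_{i_x}}$ at the strategic value is $\{b_L\}$, $\val{x}=1$ if it equals $\{b_H\}$, and $\val{x}=\garbo$ otherwise. In an $\varepsilon$-well-supported MBNE this reads off a well-defined $\valonly:\variables\to\{0,1,\garbo\}$, and the job of the gadgets is to force this to satisfy every gate.

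The heart of the construction is a gadget per gate. For \NOT with input $u$ and output $v$, I calibrate $F_{i_v,i_u}$ so that $i_v$'s interim utility gap between $b_L$ and $b_H$ at her strategic value flips sign with $i_u$'s pure strategy: when $\val{u}=1$, $b_L$ strictly $\varepsilon$-dominates $b_H$, forcing $\val{v}=0$, and symmetrically when $\val{u}=0$. The \AND gadget works analogously with priors $F_{i_w,i_u}$ and $F_{i_w,i_v}$ chosen so that $b_H$ strictly $\varepsilon$-dominates $b_L$ for $i_w$ iff both inputs play $b_H$. For the \PURE gadget I pair the two outputs $i_v,i_w$ with subjective priors on each other in addition to on the input $i_u$, engineering a cyclic best-response structure whose only $\varepsilon$-well-supported fixed points either have one of $\support{\beta_{i_v}},\support{\beta_{i_w}}$ collapsed to a singleton, or have both collapsed to the same singleton equal to $i_u$'s pure bid when $i_u$ is pure. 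All of these analyses reduce to arithmetic on interim utilities, which by \cref{lemma:DFPA-utilities-efficient} are computable through the explicit probabilities $H_i(b,\vec{\beta}_{-i})$, so the calibration is a finite check on constant-size gadgets.

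The reduction is polynomial-time and produces a DFPA of interaction degree $2$; any $\varepsilon$-well-supported MBNE yields via the decoding a valid \pcircuit solution. The constant $\varepsilon<1/36$ will emerge as the minimum utility margin across all three gadgets. The main obstacle will be the \PURE gadget: because DFPAs natively support mixed equilibria, ruling out a fully-mixed fixed point of the two outputs (which would illegally decode to $\val{v}=\val{w}=\garbo$) requires the cyclic priors to be fine-tuned so that whenever \emph{both} $\beta_{i_v}$ and $\beta_{i_w}$ mix over $\{b_L,b_H\}$ at their strategic value, the induced $\varepsilon$-best-response of at least one of them has a unique optimizer, collapsing its support.
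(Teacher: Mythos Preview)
The statement you were asked to address is \cref{thm:purecircuit}, a result cited from \cite{deligkas2022pure} asserting that \pcircuit is \ppad-complete. The paper does not prove this theorem; it invokes it as a black box. Your proposal does not prove it either---you explicitly \emph{assume} it in your first sentence (``which is \ppad-complete by \cref{thm:purecircuit}''). What you have actually sketched is a reduction \emph{from} \pcircuit \emph{to} the DFPA equilibrium problem, i.e., an argument for the paper's \cref{lem:ppad-hardness} (equivalently, the theorem opening \cref{app:ppad-hardness}), not for \cref{thm:purecircuit} itself. So there is a basic mismatch between the target statement and your argument.

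If your intent was the DFPA hardness result, your outline is in the right spirit but diverges from the paper in a way that matters. The paper's \PURE gadget does \emph{not} introduce cyclic priors between the two output bidders. Instead, both outputs $\vvar$ and $\wvar$ look only at the input $\uvar$, but are given different strategic values ($9/16$ and $11/16$) so that their best-response thresholds on $p_1=\beta_\uvar(v_\uvar)(b_1)$ straddle $1/2$: one output is forced pure whenever $p_1\geq 1/2$, the other whenever $p_1\leq 1/2$, so at least one output is pure for \emph{every} $p_1$. Your cyclic-prior approach would have to rule out fully-mixed fixed points of a two-bidder subgame, precisely the delicate obstruction the paper's design sidesteps. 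The paper's \NOT gadget also relies on an auxiliary bidder $\vvar'$ together with a separately constructed constant bidder, neither of which appears in your sketch.
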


\addtocontents{toc}{\protect\setcounter{tocdepth}{1}}
\subsection{Construction of the Auction}
\addtocontents{toc}{\protect\setcounter{tocdepth}{2}}

Consider an instance $(\variables,G)$ of \pcircuit. We now describe how to construct an instance of the discrete FPA that encodes the \pcircuit instance. In the next section, we will then argue that any approximate equilibrium of the auction must yield a solution to $(\variables,G)$.

The auction consists of one bidder for each of the variables in $\variables$, along with a number of additional auxiliary bidders. Each bidder $i$ has a value space consisting of two values, $V_i = \{0, v_i\}$, where $v_i \in (0,1]$ depends on $i$. The bidding space is $B = \{b_0,b_1,b_2,b_3\} = \{0,1/4,1/2,3/4\}$. We now describe the bidders in the auction.

\paragraph{\textbf{Constant bidders}} The purpose of this part of the construction is to have a bidder \cbidder who bids $b_2$ with probability $1$ when its value is $v_\cbidder$ in any approximate equilibrium. In order to achieve this, we introduce a set $C$ of two additional bidders and ensure that they always bid $b_1$ when their value is non-zero. In more detail, each bidder $i \in C$ has value space $V_i = \{0,v_i\}$ with $v_i = 1/2$, and we set $f_{i,j}(0) = 1, f_{i,j}(v_j) = 0$ for all other bidders $j$ in the auction. The bidder \cbidder has value space $V_\cbidder = \{0,v_\cbidder\}$ with $v_\cbidder = 1$, and we set $f_{\cbidder,j}(0) = 0, f_{\cbidder,j}(v_j) = 1$ for all $j \in C$, and $f_{i,j}(0) = 1, f_{i,j}(v_j) = 0$ for all other bidders $j$ in the auction. We show that this construction indeed satisfies the desired properties in the next section.

\paragraph{\textbf{Variable bidders}} For every variable $\uvar \in \variables$ of the \pcircuit instance, we introduce a corresponding bidder $\uvar$ in the auction. The value space $V_\uvar$ and distributions $f_{\uvar,j}$ depend on the type of gate that uses $\uvar$ as output.

\paragraph{\textbf{\AND gate}} For every variable $\wvar \in \variables$ that is the output of an \AND gate with inputs $\uvar$ and $\vvar$, we set the value space to be $V_\wvar = \{0, v_\wvar\}$, where $v_\wvar = 7/12$. The distributions are $f_{\wvar,j}(0)=0, f_{\wvar,j}(v_j)=1$ for $j \in \{\uvar, \vvar\}$, and $f_{\wvar,j}(0) = 1, f_{\wvar,j}(v_j) = 0$ for all other bidders $j$ in the auction.

\paragraph{\textbf{\PURE gate}} For every variable $\vvar \in \variables$ that is the \emph{first} output of a \PURE gate with input $\uvar$, we set the value space to be $V_\vvar = \{0,v_\vvar\}$, where $v_\vvar = 9/16$. The distributions are $f_{\vvar,\uvar}(0)=0, f_{\vvar,\uvar}(v_\uvar)=1$, and $f_{\vvar,j}(0) = 1, f_{\vvar,j}(v_j) = 0$ for all other bidders $j$ in the auction.

For every variable $\wvar \in \variables$ that is the \emph{second} output of a \PURE gate with input $\uvar$, we set the value space to be $V_\wvar = \{0,v_\wvar\}$, where $v_\wvar = 11/16$. The distributions are $f_{\wvar,\uvar}(0)=0, f_{\wvar,\uvar}(v_\uvar)=1$, and $f_{\wvar,j}(0) = 1, f_{\wvar,j}(v_j) = 0$ for all other bidders $j$ in the auction.

\paragraph{\textbf{\NOT gate}} For every variable $\vvar \in \variables$ that is the output of a \NOT gate with input $\uvar$, we introduce an additional auxiliary bidder $\vvar'$ to help with the implementation of the gate. The value space of bidder $\vvar'$ is $V_{\vvar'} = \{0,v_{\vvar'}\}$, where $v_{\vvar'} = 13/14$. The distributions are $f_{\vvar',\uvar}(0)=0, f_{\vvar',\uvar}(v_\uvar)=1$, $f_{\vvar',\cbidder}(0)=0, f_{\vvar',\cbidder}(v_\cbidder)=1$, and $f_{\vvar',j}(0) = 1, f_{\vvar',j}(v_j) = 0$ for all other bidders $j$ in the auction.

The value space of bidder $\vvar$ is $V_{\vvar} = \{0,v_{\vvar}\}$, where $v_{\vvar} = 5/8$. The distributions are $f_{\vvar,\vvar'}(0)=1/9, f_{\vvar,\vvar'}(v_{\vvar'})=8/9$, and $f_{\vvar,j}(0) = 1, f_{\vvar,j}(v_j) = 0$ for all other bidders $j$ in the auction.

\paragraph{\textbf{Properties}} This instance of the auction problem can clearly be constructed in polynomial time given an instance $(\variables,G)$ of \pcircuit. Furthermore, by construction the auction has interaction degree bounded by $d = 2$.

\addtocontents{toc}{\protect\setcounter{tocdepth}{1}}
\subsection{Analysis}
\addtocontents{toc}{\protect\setcounter{tocdepth}{2}}

Fix any constant $\varepsilon < 1/36$. Without loss of generality, we can assume that $|\variables|$ is sufficiently large so that the number of bidders in the auction, denoted by $n$, satisfies $n \geq \max\{1000,1/(1/36-\varepsilon)\}$.

Consider any $\varepsilon$-well-supported equilibrium $\vec{\beta}$ of the auction. In this section, we show how to extract a solution to the \pcircuit instance $(\variables, G)$ from $\vec{\beta}$ in polynomial time.

\paragraph{\textbf{Extraction}} We construct an assignment $\valonly: \variables \to \{0,1,\garbo\}$ to the variables of \pcircuit by letting
\begin{itemize}
    \item[---] $\val{\uvar} = 1$ if $\beta_{\uvar}(v_\uvar)(b_1) = 1$,
    \item[---] $\val{\uvar} = 0$ if $\beta_{\uvar}(v_\uvar)(b_1) = 0$,
    \item[---] $\val{\uvar} = \garbo$ otherwise.
\end{itemize}
In the remainder of this section, we argue that this assignment satisfies all the gates of the \pcircuit instance $(\variables, G)$. Note that by construction we have $v_\uvar < b_3$ for all $\uvar \in \variables$, so we will necessarily have $\beta_{\uvar}(v_\uvar)(b_3) = 0$. For a bidder $\uvar \in \variables$, we say that $\uvar$ is \emph{valid}, if, in addition, $\beta_{\uvar}(v_\uvar)(b_0) = 0$, or, in other words,
$$\beta_{\uvar}(v_\uvar)(b_1) + \beta_{\uvar}(v_\uvar)(b_2) = 1.$$
Ultimately, we will argue that all bidders $\uvar \in \variables$ must be valid.

\paragraph{\textbf{Constant bidders}}
We start by showing that the constant bidders have the desired behaviour.

\begin{lemma}\label{lem:constant-bidder}
We have $\beta_{\cbidder}(v_\cbidder)(b_2) = 1$.
\end{lemma}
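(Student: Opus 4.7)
The plan is to prove the lemma in two stages, first nailing down the behaviour of the auxiliary bidders in $C$, and then using this to compute \cbidder's best response.

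For the first stage, fix any bidder $i \in C$. By construction, $i$ believes every other bidder in the auction has value $0$ with probability $1$. By the no-overbidding assumption, any such bidder bids $b_0 = 0$ with probability $1$. Consequently, from $i$'s perspective each of the four possible bids yields a deterministic utility when $v_i = 1/2$: $b_0$ ties with $n-1$ bids of $0$, giving utility $1/(2n)$; $b_1 = 1/4$ wins outright, giving utility $1/4$; $b_2 = 1/2$ wins outright but yields $0$; and $b_3 = 3/4 > v_i$ is excluded by no-overbidding. Since $n \geq 1000$, the gap between the optimal utility $1/4$ and the next best is at least $1/4 - 1/(2n) > \varepsilon$, so in any $\varepsilon$-well-supported equilibrium the support of $\beta_i(v_i)$ must be exactly $\{b_1\}$; that is, $\beta_i(v_i)(b_1) = 1$ for every $i \in C$.

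For the second stage, turn to \cbidder with value $v_\cbidder = 1$. By construction she believes the two bidders in $C$ both have value $1/2$ with probability $1$, and by the first stage each therefore bids $b_1 = 1/4$ with probability $1$; all remaining bidders have value $0$ in her prior and thus bid $b_0 = 0$. A direct computation then gives the utility of each possible bid: $b_0$ loses (utility $0$); $b_1$ forms a three-way tie, giving utility $\tfrac{1}{3}(1-1/4) = 1/4$; $b_2$ wins outright for utility $1/2$; $b_3$ wins outright for utility $1/4$. Thus $b_2$ is the unique maximizer with a gap of $1/4$ over every other bid.

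Since $\varepsilon < 1/36 < 1/4$, the gap strictly exceeds $\varepsilon$, so the well-supported equilibrium condition forces $\support{\beta_\cbidder(v_\cbidder)} = \{b_2\}$, which yields $\beta_\cbidder(v_\cbidder)(b_2) = 1$ as required. The only subtlety worth flagging is to make sure the argument handles the interaction between the two stages cleanly: because the value of $\varepsilon$ is bounded away from $1/4$ and $n$ is large, the first stage pins down $\beta_i(v_i)$ exactly (not merely approximately), so no error propagates into the utility computation in the second stage, and the standard well-supported argument applies without any accuracy loss.
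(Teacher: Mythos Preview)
Your proof is correct and follows essentially the same two-stage approach as the paper: first showing that each bidder in $C$ must bid $b_1$ with probability $1$ (using the fact that they believe everyone else bids $b_0$), and then computing \cbidder's utilities against two opponents bidding $b_1$ and the rest bidding $b_0$, yielding that $b_2$ is the unique $\varepsilon$-best-response with a gap of $1/4$. The computations and conclusions match the paper's proof exactly.
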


\begin{proof}
We first analyse the behaviour of the bidders in set $C$. Consider any bidder $i \in C$ and note that since $f_{i,j}(0) = 1$ for all $j \neq i$, bidder $i$ believes that all other bidders have value $0$. As a result, by the no-overbidding assumption, bidder $i$ believes that all other bidders will bid $b_0 = 0$. This yields the following utilities for bidder $i$ at value $v_i = 1/2$:
\begin{equation*}
\begin{split}
u_i(b_0,\vec{\beta}_{-i};v_i) &= (v_i-b_0) \cdot H_i(b_0,\vec{\beta}_{-i}) = \frac{1}{2} \cdot \frac{1}{n}\\
u_i(b_1,\vec{\beta}_{-i};v_i) &= (v_i-b_1) \cdot H_i(b_1,\vec{\beta}_{-i}) = \frac{1}{4}\\
u_i(b_2,\vec{\beta}_{-i};v_i) &= (v_i-b_2) \cdot H_i(b_2,\vec{\beta}_{-i}) = 0
\end{split}
\end{equation*}
and bid $b_3$ is not playable because $b_3 > v_i$. Since $n \geq 1000$, we have $1/4 > 1/2n + \varepsilon$, and thus $b_1$ is the only $\varepsilon$-best-response, i.e., $\beta_i(v_i)(b_1) = 1$.

Now we turn our attention to the bidder \cbidder. Recall that $f_{\cbidder,j}(0) = 0, f_{\cbidder,j}(v_j) = 1$ for all $j \in C$, and $f_{i,j}(0) = 1, f_{i,j}(v_j) = 0$ for all other bidders $j$ in the auction. As a result, since $\beta_j(v_j)(b_1) = 1$ for all $j \in C$, bidder \cbidder believes that all bidders $j \in C$ bid $b_1$. Furthermore, by the same argument as above, bidder \cbidder believes that all other bidders outside $C$ bid $b_0 = 0$. This yields the following utilities for bidder $i = \cbidder$ at value $v_i = v_\cbidder = 1$:
\begin{equation*}
\begin{split}
u_i(b_0,\vec{\beta}_{-i};v_i) &= (v_i-b_0) \cdot H_i(b_0,\vec{\beta}_{-i}) = 0\\
u_i(b_1,\vec{\beta}_{-i};v_i) &= (v_i-b_1) \cdot H_i(b_1,\vec{\beta}_{-i}) = \frac{3}{4} \cdot \frac{1}{|C|+1}\\
u_i(b_2,\vec{\beta}_{-i};v_i) &= (v_i-b_2) \cdot H_i(b_2,\vec{\beta}_{-i}) = \frac{1}{2}\\
u_i(b_3,\vec{\beta}_{-i};v_i) &= (v_i-b_3) \cdot H_i(b_3,\vec{\beta}_{-i}) = \frac{1}{4}\\
\end{split}
\end{equation*}
Recalling that $|C| = 2$, we see that $b_2$ is the only $\varepsilon$-best-response, since $\varepsilon < 1/4$.
\end{proof}

\paragraph{\textbf{\AND gate}}
Next, we consider bidders corresponding to an \AND gate.

\begin{lemma}\label{lem:pcircuit-AND}
For any \AND gate with inputs $\uvar$ and $\vvar$, and output $\wvar$, we have that
\begin{itemize}
    \item[---] $\wvar$ is valid,
    \item[---] if $\uvar$ and $\vvar$ are valid, then $\valonly$ satisfies the conditions of the \AND gate.
\end{itemize}
\end{lemma}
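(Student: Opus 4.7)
The plan is to analyze bidder $\wvar$'s best-response behaviour given $\uvar$ and $\vvar$'s strategies. Under $\wvar$'s subjective priors, $\uvar$ and $\vvar$ have their nonzero values with probability $1$ and every other bidder has value $0$, hence bids $b_0$ by no-overbidding. Since $v_\wvar = 7/12 < b_3$, all of $\uvar, \vvar, \wvar$ can only play bids in $\{b_0, b_1, b_2\}$ at their nonzero values. Writing $r_i, p_i, q_i$ for the probabilities that $i \in \{\uvar, \vvar\}$ assigns to $b_0, b_1, b_2$ respectively (so $r_i + p_i + q_i = 1$), I would first express $\wvar$'s utilities $U(b_\ell) := (v_\wvar - b_\ell) \cdot H_\wvar(b_\ell, \vec{\beta}_{-\wvar})$ for $\ell \in \{0,1,2\}$ in closed form, via a straightforward case analysis on the number of bidders tying with $\wvar$ at each bid.

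For the first bullet (validity of $\wvar$), the crux is a uniform lower bound $U(b_2) \geq 1/36$. Setting $x := r_\uvar + p_\uvar$ and $y := r_\vvar + p_\vvar$, a direct manipulation yields $H_\wvar(b_2, \vec{\beta}_{-\wvar}) = \tfrac{1}{3} + \tfrac{x+y}{6} + \tfrac{xy}{3} \geq \tfrac{1}{3}$, independently of $\uvar$'s and $\vvar$'s strategies. Combined with $U(b_0) \leq 7/(12n) < 1/36 - \varepsilon$ (by the choice of $n$ at the beginning of this section), this excludes $b_0$ from any $\varepsilon$-well-supported best-response of $\wvar$, i.e., $\wvar$ is valid.

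Assuming now that $\uvar, \vvar$ are both valid (so $r_\uvar = r_\vvar = 0$ and $p_i + q_i = 1$), I case-split on $(\val{\uvar}, \val{\vvar})$. If $\val{\uvar} = \val{\vvar} = 1$, then $p_\uvar = p_\vvar = 1$, and direct substitution yields $U(b_1) = 1/9$ and $U(b_2) = 1/12$ with gap exactly $1/36 > \varepsilon$; hence $\beta_\wvar(v_\wvar)(b_2) = 0$, and together with validity we obtain $\beta_\wvar(v_\wvar)(b_1) = 1$, i.e., $\val{\wvar} = 1$. If $\val{\uvar} = 0$ (the case $\val{\vvar} = 0$ being symmetric), validity forces $q_\uvar = 1$, whence $H_\wvar(b_1, \vec{\beta}_{-\wvar}) = 0$ and so $U(b_1) = 0$, while $U(b_2) \geq 1/36$ by the uniform bound above. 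Thus $\beta_\wvar(v_\wvar)(b_1) = 0$, giving $\val{\wvar} = 0$ by the extraction rule.

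The main subtlety I expect is that the threshold $1/36$ is tight in two different ways: it is simultaneously the worst-case value of $U(b_2)$ (attained when $\uvar, \vvar$ both put all their mass on $b_2$) and the gap $U(b_1) - U(b_2)$ in the ``$1 \AND 1$'' configuration. The value $v_\wvar = 7/12$ is calibrated precisely so that these two constraints bind at the same threshold, which explains the assumption $\varepsilon < 1/36$ and the particular choice of $v_\wvar$; any looser calibration would either fail to exclude $b_2$ in the ``true-AND-true'' case or fail to exclude $b_1$ in the ``false-input'' case.
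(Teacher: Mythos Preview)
Your proposal is correct and follows essentially the same approach as the paper: compute $\wvar$'s utilities at $b_0,b_1,b_2$, use the uniform bound $H_\wvar(b_2,\vec\beta_{-\wvar})\geq 1/3$ to get $U(b_2)\geq 1/36$, rule out $b_0$ via $U(b_0)\leq 7/(12n)<1/36-\varepsilon$, and then case-split on the inputs when they are valid. The only (harmless) imprecision is that the sentence ``Since $v_\wvar=7/12<b_3$, all of $\uvar,\vvar,\wvar$ can only play bids in $\{b_0,b_1,b_2\}$'' really relies on $v_\uvar,v_\vvar<b_3$ as well, which holds by construction for all variable bidders and is stated in the paper just before the lemma.
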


\begin{proof}
Let us first show that $\wvar$ is valid. By construction, bidder $\wvar$ believes that all other bidders, except $\uvar$ and $\vvar$, will bid $b_0$. Furthermore, $\wvar$ believes that $\uvar$ and $\vvar$ will bid according to $\beta_{\uvar}(v_\uvar)$ and $\beta_{\vvar}(v_\vvar)$, respectively. Recalling that $\beta_{\uvar}(v_\uvar)(b_3) = \beta_{\vvar}(v_\vvar)(b_3) = 0$, and using the notation $p_j = \beta_{\uvar}(v_\uvar)(b_j)$, $q_j = \beta_{\vvar}(v_\vvar)(b_j)$, for $j \in \{0,1,2\}$, we can write the utilities of bidder $i=\wvar$ at value $v_i = v_\wvar = 7/12$:
\begin{equation*}
\begin{split}
u_i(b_0,\vec{\beta}_{-i};v_i) &= (v_i-b_0) \cdot H_i(b_0,\vec{\beta}_{-i}) = \frac{7}{12}p_0q_0/n\\
u_i(b_1,\vec{\beta}_{-i};v_i) &= (v_i-b_1) \cdot H_i(b_1,\vec{\beta}_{-i}) = \left(\frac{7}{12}-\frac{1}{4}\right)(p_0q_0 + p_1q_0/2 + p_0q_1/2 + p_1q_1/3) \\
u_i(b_2,\vec{\beta}_{-i};v_i) &= \left(\frac{7}{12}-\frac{1}{2}\right)((1-p_2)(1-q_2) + (1-p_2)q_2/2 + p_2(1-q_2)/2 + p_2q_2/3)
\end{split}
\end{equation*}
and bid $b_3$ is not playable because $b_3 > v_i$. Note that we always have $u_i(b_2,\vec{\beta}_{-i};v_i) \geq (7/12-1/2)/3 = 1/36$, because $(1-p_2)(1-q_2) + (1-p_2)q_2/2 + p_2(1-q_2)/2 + p_2q_2/3 \geq 1/3$. Furthermore, $u_i(b_0,\vec{\beta}_{-i};v_i) \leq 7/12n < 1/n \leq 1/36 - \varepsilon$, since $n \geq 1/(1/36-\varepsilon)$.
As a result, $b_0$ can never be an $\varepsilon$-best-response for $\wvar$, and thus $\wvar$ is valid.

Now consider the case where $\uvar$ and $\vvar$ are valid, i.e., $p_0 = q_0 = 0$. In that case, the utilities can be simplified to
\begin{equation*}
\begin{split}
u_i(b_0,\vec{\beta}_{-i};v_i) &= 0\\
u_i(b_1,\vec{\beta}_{-i};v_i) &= \left(\frac{7}{12}-\frac{1}{4}\right) p_1q_1/3 \\
u_i(b_2,\vec{\beta}_{-i};v_i) &= \left(\frac{7}{12}-\frac{1}{2}\right)(p_1q_1 + p_1(1-q_1)/2 + (1-p_1)q_1/2 + (1-p_1)(1-q_1)/3)
\end{split}
\end{equation*}
If $\val{\uvar} = \val{\vvar} = 1$, i.e., $p_1=q_1=1$, then
$$u_i(b_1,\vec{\beta}_{-i};v_i) - u_i(b_2,\vec{\beta}_{-i};v_i) = \left(\frac{7}{12}-\frac{1}{4}\right) \frac{1}{3} - \left(\frac{7}{12}-\frac{1}{2}\right) = \frac{1}{36} > \varepsilon$$
and so $b_1$ is the only $\varepsilon$-best-response, i.e., $\val{\wvar} = 1$.

Finally, consider the case where $\val{\uvar} = 0$ or $\val{\vvar} = 0$. Without loss of generality, given the symmetry of the construction, assume that $\val{\uvar} = 0$, i.e., $p_1 = 0$. In that case, we have
$$u_i(b_2,\vec{\beta}_{-i};v_i) - u_i(b_1,\vec{\beta}_{-i};v_i) = \frac{1}{12}(q_1/2+(1-q_1)/3) - 0 \geq \frac{1}{36} > \varepsilon$$
and so $b_2$ is the only $\varepsilon$-best-response, i.e., $\val{\wvar} = 0$.
\end{proof}

\paragraph{\textbf{\PURE gate}}
Next, we consider bidders corresponding to a \PURE gate.

\begin{lemma}\label{lem:pcircuit-PURE}
For any \PURE gate with input $\uvar$ and outputs $\vvar$ and $\wvar$, we have that
\begin{itemize}
    \item[---] $\vvar$ and $\wvar$ are valid,
    \item[---] if $\uvar$ is valid, then $\valonly$ satisfies the conditions of the \PURE gate.
\end{itemize}
\end{lemma}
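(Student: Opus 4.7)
The plan is to mirror the template used for the \AND-gate analysis in \cref{lem:pcircuit-AND}, exploiting the fact that from the viewpoint of either output bidder, the only player believed to bid non-trivially is the input bidder $\uvar$. I would set $p_j := \beta_\uvar(v_\uvar)(b_j)$ for $j \in \{0,1,2\}$ (recall $p_3 = 0$ by no-overbidding) and observe that the winning-probability profile is identical for $\vvar$ and $\wvar$:
\[
H(b_0,\vec{\beta}) = p_0/n, \qquad H(b_1,\vec{\beta}) = p_0 + p_1/2, \qquad H(b_2,\vec{\beta}) = 1 - p_2/2.
\]
The utilities then factor through the surpluses $v_\vvar - b$ and $v_\wvar - b$, with the bid $b_3 = 3/4$ being unavailable for either output since $b_3 > v_\wvar > v_\vvar$.

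For validity I would argue that $b_0$ is never an $\varepsilon$-best-response for either output. For $\vvar$ one has $u_\vvar(b_2) = (1/16)(1 - p_2/2) \geq 1/32$ regardless of $p_2$, while $u_\vvar(b_0) \leq 9/(16n)$. The assumption $n \geq 1/(1/36-\varepsilon)$ guarantees $1/32 - 9/(16n) > \varepsilon$, so $b_0$ is dominated by $b_2$ beyond tolerance $\varepsilon$; the analogous argument for $\wvar$ is only easier since $u_\wvar(b_2) \geq 3/32$. Plugging $p_0 = 0$ into the closed-form utilities, the relevant utility gaps collapse neatly to
\[
u_\vvar(b_1) - u_\vvar(b_2) = \frac{4p_1 - 1}{32}, \qquad u_\wvar(b_1) - u_\wvar(b_2) = \frac{4p_1 - 3}{32},
\]
so $\vvar$ exhibits threshold behaviour around $p_1 = 1/4$ and $\wvar$ around $p_1 = 3/4$.

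The second \PURE-condition then falls out immediately: if $\val{\uvar} \in \{0,1\}$ then $p_1 \in \{0,1\}$, so both gaps have magnitude at least $1/32 > \varepsilon$ with matching sign, forcing $\val{\vvar} = \val{\wvar} = \val{\uvar}$. The main obstacle is establishing the first \PURE-condition, namely that at most one of $\vvar, \wvar$ can equal $\garbo$. A valid bidder $i \in \{\vvar,\wvar\}$ is $\garbo$ only when $\beta_i(v_i)$ places positive mass on both $b_1$ and $b_2$, which in an $\varepsilon$-well-supported equilibrium requires $|u_i(b_1) - u_i(b_2)| \leq \varepsilon$. Were both outputs simultaneously $\garbo$, I would have $|4p_1 - 1| \leq 32\varepsilon$ and $|4p_1 - 3| \leq 32\varepsilon$, and the triangle inequality would give $2 = |(4p_1 - 3) - (4p_1 - 1)| \leq 64\varepsilon$, i.e., $\varepsilon \geq 1/32$, contradicting $\varepsilon < 1/36$. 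This is precisely why the construction chooses $v_\vvar = 9/16$ and $v_\wvar = 11/16$: their $\varepsilon$-indifference windows around the thresholds $1/4$ and $3/4$ are forced to remain disjoint.
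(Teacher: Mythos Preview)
Your proposal is correct and follows essentially the same route as the paper: compute the winning probabilities against the single non-trivial opponent $\uvar$, rule out $b_0$ by comparing against $u(b_2)\geq (v_i-1/2)/2$, and then, assuming $p_0=0$, reduce everything to the linear gap $\Delta_i(p_1)=p_1/8 - v_i/2 + 1/4$, which is exactly your $(4p_1-1)/32$ and $(4p_1-3)/32$ for $v_\vvar=9/16$ and $v_\wvar=11/16$.

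The only noteworthy difference is in how you establish the first \PURE condition. The paper does a direct case split at $p_1=1/2$: if $p_1\leq 1/2$ then $\Delta_\wvar\leq -1/32$ forces $\val{\wvar}=0$, and if $p_1\geq 1/2$ then $\Delta_\vvar\geq 1/32$ forces $\val{\vvar}=1$. You instead argue by contradiction that both outputs being $\garbo$ would place $p_1$ simultaneously within $8\varepsilon$ of $1/4$ and of $3/4$, whence the triangle inequality gives $\varepsilon\geq 1/32$. Both arguments are equivalent repackagings of the same gap estimates; the paper's version is slightly more constructive (it names which output is pure), while yours makes the design rationale for the two thresholds more transparent. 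A tiny remark: the paper justifies the validity step by citing $n\geq 1000$ rather than $n\geq 1/(1/36-\varepsilon)$, but as you implicitly rely on, the latter alone already suffices since $\varepsilon<1/36<1/28$.
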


\begin{proof}
Let us first show that $\vvar$ and $\wvar$ are valid. By construction, bidder $\vvar$ believes that all other bidders except $\uvar$ will bid $b_0$. Furthermore, $\vvar$ believes that $\uvar$ will bid according to $\beta_\uvar(v_\uvar)$. Recalling that $\beta_\uvar(v_\uvar)(b_3) = 0$, and using the notation $p_j = \beta_\uvar(v_\uvar)(b_j)$ for $j \in \{0,1,2\}$, we can write the utilities of bidder $i = \vvar$ at value $v_i = v_\vvar = 9/16$:
\begin{equation*}
\begin{split}
u_i(b_0,\vec{\beta}_{-i};v_i) &= (v_i-b_0) \cdot H_i(b_0,\vec{\beta}_{-i}) = \frac{9}{16}p_0/n\\
u_i(b_1,\vec{\beta}_{-i};v_i) &= (v_i-b_1) \cdot H_i(b_1,\vec{\beta}_{-i}) = \left(\frac{9}{16}-\frac{1}{4}\right)(p_0+p_1/2) \\
u_i(b_2,\vec{\beta}_{-i};v_i) &= (v_i-b_2) \cdot H_i(b_2,\vec{\beta}_{-i}) = \left(\frac{9}{16}-\frac{1}{2}\right)(p_0+p_1+p_2/2)
\end{split}
\end{equation*}
and bid $b_3$ is not playable because $b_3 > v_i$. Since $p_0+p_1+p_2/2 \geq 1/2$, we can write
$$u_i(b_2,\vec{\beta}_{-i};v_i) - u_i(b_0,\vec{\beta}_{-i};v_i) \geq \left(\frac{9}{16}-\frac{1}{2}\right) \frac{1}{2} - \frac{9}{16}p_0/n = \frac{1}{32} - \frac{9}{16}p_0/n > \varepsilon$$
because $n \geq 1000$. As a result, $b_0$ can never be an $\varepsilon$-best-response for $\vvar$, and thus $\vvar$ is valid. The same reasoning for $i = \wvar$, with $v_i = v_\wvar = 11/16$, yields
$$u_i(b_2,\vec{\beta}_{-i};v_i) - u_i(b_0,\vec{\beta}_{-i};v_i) \geq \left(\frac{11}{16}-\frac{1}{2}\right) \frac{1}{2} - \frac{11}{16}p_0/n = \frac{3}{32} - \frac{11}{16}p_0/n > \varepsilon$$
because $n \geq 1000$. Thus, $b_0$ can never be an $\varepsilon$-best-response for $\wvar$, and $\wvar$ is also valid.

Now consider the case where $\uvar$ is valid, i.e., $p_0 = 0$. In that case, the utilities for $i \in \{\vvar, \wvar\}$ at value $v_i$ can be simplified to
\begin{equation*}
\begin{split}
u_i(b_0,\vec{\beta}_{-i};v_i) &= 0\\
u_i(b_1,\vec{\beta}_{-i};v_i) &= \left(v_i - \frac{1}{4}\right)p_1/2 \\
u_i(b_2,\vec{\beta}_{-i};v_i) &= \left(v_i - \frac{1}{2}\right)(1/2+p_1/2)
\end{split}
\end{equation*}
We can write
$$\Delta := u_i(b_1,\vec{\beta}_{-i};v_i) - u_i(b_2,\vec{\beta}_{-i};v_i) = \left(v_i - \frac{1}{4}\right)p_1/2 - \left(v_i - \frac{1}{2}\right)(1/2+p_1/2) = p_1/8 - v_i/2 + 1/4.$$
Now for $i=\vvar$, and thus $v_i = 9/16$, we obtain that
\begin{itemize}
    \item[---] if $\val{\uvar} = 0$, i.e., $p_1 = 0$, then $\Delta = p_1/8 - 1/32 = -1/32 < - \varepsilon$, and $b_2$ is the only $\varepsilon$-best-response, i.e., $\val{\vvar} = 0$.
    \item[---] if $\val{\uvar} \geq 1/2$, i.e., $p_1 \geq 1/2$, then $\Delta = p_1/8 - 1/32 \geq 1/32 > \varepsilon$, and $b_1$ is the only $\varepsilon$-best-response, i.e., $\val{\vvar} = 1$.
\end{itemize}
Similarly, for $i=\wvar$, and thus $v_i = 11/16$, we obtain that
\begin{itemize}
    \item[---] if $\val{\uvar} \leq 1/2$, i.e., $p_1 \leq 1/2$, then $\Delta = p_1/8 - 3/32 \leq -1/32 < - \varepsilon$, and $b_2$ is the only $\varepsilon$-best-response, i.e., $\val{\wvar} = 0$.
    \item[---] if $\val{\uvar} = 1$, i.e., $p_1 = 1$, then $\Delta = p_1/8 - 3/32 = 1/32 > \varepsilon$, and $b_1$ is the only $\varepsilon$-best-response, i.e., $\val{\wvar} = 1$.
\end{itemize}
As a result, the conditions of the \PURE gate are indeed satisfied. In particular, we always have that at least one of $\val{\vvar}$ or $\val{\wvar}$ lies in $\{0,1\}$.
\end{proof}

\paragraph{\textbf{\NOT gate}}
Finally, we consider bidders corresponding to a \NOT gate.

\begin{lemma}\label{lem:pcircuit-NOT}
For any \NOT gate with input $\uvar$ and output $\vvar$, we have that
\begin{itemize}
    \item[---] $\vvar$ is valid,
    \item[---] if $\uvar$ is valid, then $\valonly$ satisfies the conditions of the \NOT gate.
\end{itemize}
\end{lemma}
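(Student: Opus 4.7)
The plan is to first establish the validity of $\vvar$ without any assumption on $\uvar$, and then, under the additional hypothesis that $\uvar$ is valid, pin down $\vvar$'s best-response by first determining the best-response of the auxiliary bidder $\vvar'$ at value $v_{\vvar'}$, and then propagating it through to $\vvar$.

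For validity of $\vvar$, I would exploit that $\vvar$ believes only $\vvar'$ can have non-zero value (so all remaining bidders bid $b_0$ by no-overbidding), and moreover assigns probability $1/9$ to $\vvar'$ itself having value $0$ (in which case $\vvar'$ also bids $b_0$). This alone guarantees that $\vvar$ wins with probability at least $1/9$ when bidding $b_1$, for a utility of at least $(5/8 - 1/4)\cdot(1/9) = 1/24$, while bidding $b_0$ yields at most $5/(8n)$. Since $n \geq 1000$, the gap is far larger than $\varepsilon$, and since $v_\vvar < b_3$ the bid $b_3$ is ruled out by no-overbidding, so $\vvar$ is valid.

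Next, assume $\uvar$ is valid and set $p_j := \beta_\uvar(v_\uvar)(b_j)$, so $p_0 = p_3 = 0$ and $p_1+p_2 = 1$. I would analyse $\vvar'$ at value $v_{\vvar'} = 13/14$. By \cref{lem:constant-bidder}, $\cbidder$ bids $b_2$ with probability $1$; this strictly exceeds $b_0$ and $b_1$, so both of those bids give $\vvar'$ zero utility, and the relevant comparison is between $b_2$ and $b_3$. A direct computation (tracking the tie with $\cbidder$, which always occurs, and with $\uvar$, which occurs with probability $p_2$) yields $u_{\vvar'}(b_2) = (3-p_2)/14$ and $u_{\vvar'}(b_3) = 5/28$, hence $u_{\vvar'}(b_2) - u_{\vvar'}(b_3) = (1-2p_2)/28$. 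Therefore the unique $\varepsilon$-best-response of $\vvar'$ at value $v_{\vvar'}$ is $b_2$ when $\val{\uvar} = 1$ (so $p_2 = 0$) and $b_3$ when $\val{\uvar} = 0$ (so $p_2 = 1$), the gap of $1/28 > \varepsilon$ ensuring uniqueness in both cases.

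Finally, I would propagate this back to $\vvar$ at value $v_\vvar = 5/8$, where from $\vvar$'s perspective $\vvar'$ bids $b_0$ with probability $1/9$ and the case-dependent value ($b_2$ or $b_3$) with probability $8/9$. A short calculation gives: if $\val{\uvar} = 1$, then $u_\vvar(b_1) = 3/72$ and $u_\vvar(b_2) = 5/72$, so $b_2$ is the unique $\varepsilon$-best-response and (combined with validity) $\val{\vvar} = 0$; if $\val{\uvar} = 0$, then $u_\vvar(b_1) = 3/72$ and $u_\vvar(b_2) = 1/72$, so $b_1$ is the unique $\varepsilon$-best-response and $\val{\vvar} = 1$. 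The main subtlety is that both utility gaps for $\vvar$ come out to exactly $1/36$, so the constants (notably $v_\vvar = 5/8$ and the prior weights $(1/9, 8/9)$ on $\vvar'$) are tuned to make the $\varepsilon < 1/36$ threshold tight; the principal technical task is the careful bookkeeping of the tie-breaking events in $\vvar'$'s $b_2$ utility, where ties may involve $\cbidder$ alone or both $\cbidder$ and $\uvar$.
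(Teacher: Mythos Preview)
Your proposal is correct and follows essentially the same approach as the paper's own proof: establish validity of $\vvar$ via the $b_1$ versus $b_0$ comparison, analyze $\vvar'$ at value $v_{\vvar'}=13/14$ using \cref{lem:constant-bidder} to reduce to the $b_2$ versus $b_3$ gap $(1-2p_2)/28$, and then propagate back to $\vvar$ to get the $1/36$ gap in each direction. The computations and case structure match the paper exactly, with only cosmetic differences in presentation order.
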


\begin{proof}
We first show that $\vvar$ is valid. By construction, bidder $\vvar$ believes that all other bidders except $\vvar'$ will bid $b_0$. Furthermore, $\vvar$ believes that $\vvar'$ will bid according to $\beta_{\vvar'}(0)$ with probability $1/9$ (namely when $\vvar'$ has value $0$), and according to $\beta_{\vvar'}(v_{\vvar'})$ with probability $8/9$ (namely when $\vvar'$ has value $v_{\vvar'}$). By the no-overbidding assumption, $\beta_{\vvar'}(0)$ is the distribution with $\beta_{\vvar'}(0)(b_0) = 1$. Using the notation $q_j = \beta_{\vvar'}(v_{\vvar'})(b_j)$ for $j = 0,1,2,3$, we can write the utilities of bidder $i = \vvar$ at value $v_i = v_\vvar = 5/8$:
\begin{equation*}
\begin{split}
u_i(b_0,\vec{\beta}_{-i};v_i) &= \frac{5}{8}\left(\frac{1}{9} \cdot \frac{1}{n} + \frac{8}{9} \cdot \frac{q_0}{n}\right)\\
u_i(b_1,\vec{\beta}_{-i};v_i) &= \left(\frac{5}{8} - \frac{1}{4}\right)\left(\frac{1}{9} + \frac{8}{9} (q_0 + q_1/2) \right) \\
u_i(b_2,\vec{\beta}_{-i};v_i) &= \left(\frac{5}{8} - \frac{1}{2}\right)\left(\frac{1}{9} + \frac{8}{9} (q_0 + q_1 + q_2/2) \right)
\end{split}
\end{equation*}
and bid $b_3$ is not playable because $b_3 > v_i$. We always have
$$u_i(b_1,\vec{\beta}_{-i};v_i) - u_i(b_0,\vec{\beta}_{-i};v_i) \geq \frac{3}{8} \cdot \frac{1}{9} - \frac{5}{8} \cdot \left(\frac{1}{9} \cdot \frac{1}{n} + \frac{8}{9} \cdot \frac{1}{n} \right) = \frac{3}{72} - \frac{5}{8n} > \varepsilon$$
because $n \geq 1000$. As a result, $b_0$ is never an $\varepsilon$-best-response for $\vvar$ at value $v_\vvar$, and thus $\vvar$ is valid. Furthermore, we note the following, which will be useful below
\begin{itemize}
    \item[---] if $\beta_{\vvar'}(v_{\vvar'})(b_2) = 1$, i.e., $q_2 = 1$, then
    $$u_i(b_2,\vec{\beta}_{-i};v_i) - u_i(b_1,\vec{\beta}_{-i};v_i) = \frac{1}{8}\left(\frac{1}{9} + \frac{8}{9} \cdot \frac{1}{2}\right) - \frac{3}{8} \cdot \frac{1}{9} = \frac{1}{36} > \varepsilon$$
    and thus $b_2$ is the only $\varepsilon$-best-response, i.e., $\val{\vvar} = 0$.
    \item[---] if $\beta_{\vvar'}(v_{\vvar'})(b_3) = 1$, i.e., $q_3 = 1$, then
    $$u_i(b_1,\vec{\beta}_{-i};v_i) - u_i(b_2,\vec{\beta}_{-i};v_i) = \frac{3}{8} \cdot \frac{1}{9} - \frac{1}{8} \cdot \frac{1}{9} = \frac{1}{36} > \varepsilon$$
    and thus $b_1$ is the only $\varepsilon$-best-response, i.e., $\val{\vvar} = 1$.
\end{itemize}

From now on, we assume that $\uvar$ is valid, and aim to show that the constraints of the \NOT gate are satisfied. By construction, bidder $\vvar'$ believes that all other bidders except $\uvar$ and \cbidder will bid $b_0$. By \cref{lem:constant-bidder} we know that bidder $\vvar'$ believes that bidder \cbidder will bid $b_2$. Finally, bidder $\vvar'$ believes that $\uvar$ will bid according to $\beta_\uvar(v_\uvar)$, where $\beta_\uvar(v_\uvar)(b_0) = \beta_\uvar(v_\uvar)(b_3) = 0$, since $\uvar$ is valid. Using the notation $p_1 = \beta_\uvar(v_\uvar)(b_1)$, and thus $1-p_1 = \beta_\uvar(v_\uvar)(b_2)$, we can write the utilities of bidder $i =\vvar'$ at value $v_i = v_{\vvar'} = 13/14$:
\begin{equation*}
\begin{split}
u_i(b_0,\vec{\beta}_{-i};v_i) &= \frac{13}{14} \cdot 0 = 0\\
u_i(b_1,\vec{\beta}_{-i};v_i) &= \left(\frac{13}{14} - \frac{1}{4}\right) \cdot 0 = 0\\
u_i(b_2,\vec{\beta}_{-i};v_i) &= \left(\frac{13}{14} - \frac{1}{2}\right)(p_1/2 + (1-p_1)/3)\\
u_i(b_3,\vec{\beta}_{-i};v_i) &= \left(\frac{13}{14} - \frac{3}{4}\right) \cdot 1
\end{split}
\end{equation*}
Now, we have
\begin{itemize}
    \item[---] if $\val{\uvar} = 0$, i.e., $p_1 = 0$, then
    $$u_i(b_3,\vec{\beta}_{-i};v_i) - u_i(b_2,\vec{\beta}_{-i};v_i) = \frac{5}{28} - \frac{3}{7} \cdot \frac{1}{3} = \frac{1}{28} > \varepsilon$$
    and thus $b_3$ is the only $\varepsilon$-best-response, i.e., $\beta_{\vvar'}(v_{\vvar'})(b_3) = 1$. As proved above in the analysis of bidder $\vvar$, this implies $\val{\vvar} = 1$.
    \item[---] if $\val{\uvar} = 1$, i.e., $p_1 = 1$, then
    $$u_i(b_2,\vec{\beta}_{-i};v_i) - u_i(b_3,\vec{\beta}_{-i};v_i) = \frac{3}{7} \cdot \frac{1}{2} - \frac{5}{28} = \frac{1}{28} > \varepsilon$$
    and thus $b_2$ is the only $\varepsilon$-best-response, i.e., $\beta_{\vvar'}(v_{\vvar'})(b_2) = 1$. As proved above in the analysis of bidder $\vvar$, this implies $\val{\vvar} = 0$.
\end{itemize}
As a result, the conditions of the \NOT gate are indeed satisfied.
\end{proof}

To complete the proof, note that by
\cref{lem:pcircuit-AND,lem:pcircuit-PURE,lem:pcircuit-NOT} all bidders $\uvar
\in \variables$ are valid, and thus, again by these lemmas, the constructed
assignment $\valonly$ satisfies all the gates of the \pcircuit instance
$(\variables,G)$.

\section{Existence of Symmetric Equilibria in IID Auctions}
\label{sec:existence-symmetric-MBNE-iid-DFPA}
This section is dedicated to proving ~\cref{th:existence-symmetric-equilibria-iid}. En route, we establish two
results that could be of independent interest, namely that (a) symmetric
continuous first-price auctions (with discrete bidding space) always have pure
symmetric equilibria (see~\cref{th:existence-symmetric-PBNE-iid-CFPA}), and (b)
the structure and approximability of (mixed) equilibria is preserved, in the
limit, in discrete auctions
(see~\cref{lemma:convergence-approx-equilibria-DFPA}).

\begin{theorem}
    \label{th:existence-symmetric-PBNE-iid-CFPA} Any first-price auction with continuous iid priors and discrete bids has a symmetric and monotone pure equilibrium.
\end{theorem}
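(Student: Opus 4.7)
The plan is to invoke Kakutani's fixed-point theorem on the symmetric best-response correspondence over a suitable space of monotone pure strategies. Parameterize symmetric monotone pure strategies by jump-point vectors $\vec{\alpha} = (\alpha_1, \ldots, \alpha_{m-1})$ with $0 \leq \alpha_1 \leq \ldots \leq \alpha_{m-1} \leq 1$ (adopting the conventions $\alpha_0 := 0$, $\alpha_m := 1$, and $B = \{b_1 < \ldots < b_m\}$): a bidder with value $v$ then bids $b_k$ iff $v \in (\alpha_{k-1}, \alpha_k]$. Ties at the jump points have probability zero because $F$ is continuous, so the induced bid distribution is well defined. The set $A$ of such jump-point vectors is a compact convex subset of $\mathbb{R}^{m-1}$.

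Next, I would analyze the structure of interim best-responses against a symmetric opponent profile encoded by $\vec{\alpha}$. The probability $H_k(\vec{\alpha})$ of winning by bidding $b_k$ admits a closed-form expression as a polynomial in $F(\alpha_{k-1})$ and $F(\alpha_k) - F(\alpha_{k-1})$, obtained by summing over the number of opponents drawing a value in the $b_k$-stripe and accounting for uniform tie-breaking. This is continuous in $\vec{\alpha}$ and non-decreasing in $k$, so the interim utility $u_k(v, \vec{\alpha}) = (v - b_k) H_k(\vec{\alpha})$ is continuous in $\vec{\alpha}$, affine in $v$, and its successive differences $u_{k+1}(v, \vec{\alpha}) - u_k(v, \vec{\alpha})$ satisfy single-crossing in $v$. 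Hence a non-overbidding monotone best-response always exists: for each consecutive pair $(b_k, b_{k+1})$, either the indifference equation $(v - b_k) H_k(\vec{\alpha}) = (v - b_{k+1}) H_{k+1}(\vec{\alpha})$ has a unique solution in $[0,1]$ (when $H_{k+1}(\vec{\alpha}) > H_k(\vec{\alpha})$), or $b_{k+1}$ is strictly dominated by $b_k$ and the corresponding jump point collapses.

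I would then define the best-response correspondence $\Phi : A \rightrightarrows A$ by letting $\Phi(\vec{\alpha})$ be the set of jump-point vectors $\vec{\alpha}' \in A$ that encode monotone best-responses to $\vec{\alpha}$ subject to no-overbidding. By the analysis above, $\Phi(\vec{\alpha})$ is a product of closed intervals intersected with $A$, hence non-empty, compact, and convex. Upper hemi-continuity of $\Phi$ follows from the continuity of $F$ and the closed-form expressions for $H_k$, since the indifference thresholds depend continuously on $\vec{\alpha}$. Kakutani's theorem then yields a fixed point $\vec{\alpha}^\ast \in \Phi(\vec{\alpha}^\ast)$, which is by construction a symmetric and monotone exact PBNE of the CFPA.

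The main technical difficulty is the careful treatment of degenerate configurations. When $H_{k+1}(\vec{\alpha}) = H_k(\vec{\alpha})$ the bid $b_{k+1}$ is strictly dominated for all positive values and the correspondence must collapse the jump point $\alpha_k' = \alpha_{k+1}'$; similar edge cases arise when indifference thresholds hit $0$ or $1$, or when they push against the no-overbidding constraint $b_k \leq v$. Preserving upper hemi-continuity across these transitions is where the argument needs care—for instance by defining $\Phi$ as the closure of the set of interior monotone best-responses—but it requires no new ideas, because the continuity of $F$ together with the explicit form of $H_k$ ensures that indifference values move continuously through the degenerate regimes.
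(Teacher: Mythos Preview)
Your proposal is correct and takes essentially the same approach as the paper: both parameterize monotone pure strategies by jump-point vectors, set up the symmetric best-response correspondence on this compact convex set, and apply Kakutani's fixed-point theorem, following \textcite{Athey2001}. The paper simply defers the verification of nonemptiness, convexity, and closed-graph to \parencite[Lemma~3]{Athey2001} and \parencite[p.~302]{krishna2009auction}, whereas you sketch these via the single-crossing structure of the interim utilities---the substance is the same.
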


\begin{proof}
    Our proof follows closely that of~\textcite{Athey2001} for the existence of (pure) equilibria in continuous auctions for general (i.e., not necessarily symmetric) priors. We need to take care in handling correctly the symmetry of the underlying strategies, but this turns out to be straightforward.
    
    Fix a CFPA $\mathcal A$ with iid priors over $[0,1]$ and discrete bidding space $B=\ssets{b_1\leq b_2 \leq \dots \leq b_m}$. It is known that, in this continuous setting, any monotone strategy $\beta:[0,1]\rightarrow B$ can be represented by a non-decreasing sequence of break points $\alpha=\alpha(\beta)=(\alpha_1,\alpha_2,\dots,\alpha_m)$ such that $\alpha_\ell=\sup\ssets{v\in[0,1]\fwh{\beta(v) \leq b_\ell}}$, for $\ell\in[m]$ (for a more detailed discussion, see~\parencite[Sec.~2]{fghlp2021_sicomp}). For simplicity, in the following we will slightly abuse notation, and sometimes use directly the jump-point representation $\alpha=\alpha(\beta)$ instead of a bidding strategy $\beta$. 
    
    Then, let us also denote the corresponding space of all feasible, no-overbidding (jump-point representation) strategies by
    $$
        \Omega = \sset{\alpha\in [0,1]^m\fwh{\alpha_{\ell}\leq \alpha_{\ell+1}\;\; \forall \ell\in[m-1]\quad\land\quad b_\ell \leq \alpha_\ell\;\;\forall \ell\in[m]}}.
    $$
    Notice that the linearity of the inequalities defining $\Omega$ imply that it is a polytope, and thus $\Omega$ is convex and compact subspace of the Euclidean space $\R^{m}$. Also, it is easy to see that $\Omega$ is nonempty, since $(1,1,\dots,1)\in \Omega$ (corresponding to the strategy of always bidding $\min B = b_1$).

    Next, given a symmetric and monotone strategy profile $\vec{\alpha} = (\alpha,\alpha,\dots,\alpha)$, we define (the set of) its best-responses by
    $$\mathsf{BR}(\alpha)\coloneqq \sset{\alpha'\in\Omega\fwh{u_i\left(\alpha'(v_i),\vec{\alpha}_{-i};v_i\right) \geq \max_{b\in B} u_i\left(b,\vec{\alpha}_{-i};v_i\right)\quad\forall v_i\in V }}$$
    Notice that, since auction $\mathcal A$ has iid priors, the specific choice of the bidder $i$ in the above definition is irrelevant.
    This gives rise to a correspondence $\mathsf{BR}:\Omega\rightarrow 2^{\Omega}$ with the property that: a symmetric and monotone strategy profile, where all bidders play a strategy $\alpha$, is an (exact) PBNE of $\mathcal A$, if and only if $\alpha\in\mathsf{BR}(\alpha)$. 

    Using this property, we can establish the existence of a symmetric PBNE by means of Kakutani's fixed-point theorem (see, e.g., \parencite[p.~583]{Aliprantis2006a}). Towards that end, we need to establish the following properties:
    \begin{itemize}
        \item[---] $\mathsf{BR}(\alpha)\neq \emptyset$ for all $\alpha\in\Omega$. 
        \item[---] $\mathsf{BR}(\alpha)$ is a convex set, for all $\alpha\in\Omega$. 
        \item[---] $\mathsf{BR}$ has a closed graph.
    \end{itemize}
    The above properties can be proved in an identical way to that of the existing results for general priors, and we refer the interested reader to~\parencite[Lemma~3]{Athey2001} or~\parencite[p.~302]{krishna2009auction}.
\end{proof}

\begin{lemma}[Convergence Lemma]
    \label{lemma:convergence-approx-equilibria-DFPA} Consider a DFPA $\mathcal
    A$ with iid priors. Assume that $\mathcal A$ has a symmetric
    monotone $\varepsilon$-approximate MBNE, for all $\varepsilon >0$. Then,
    $\mathcal A$ has an \emph{exact}, symmetric and monotone, MBNE.
\end{lemma}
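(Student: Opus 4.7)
\medskip

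\noindent\textbf{Proof plan.} The plan is to take a sequence of approximate equilibria $\vec{\beta}^{(k)} = (\beta^{(k)}, \dots, \beta^{(k)})$ with approximation parameters $\varepsilon_k \to 0$, extract a convergent subsequence via compactness, and show that the limit profile is an exact symmetric monotone MBNE. The ambient space of symmetric strategies $\beta: V \to \Delta(B)$ embeds naturally into the finite-dimensional Euclidean space $\mathbb{R}^{|V| \cdot |B|}$, and the subset $K := \Delta(B)^V$ (satisfying additionally the no-overbidding constraint $\beta(v)(b)=0$ whenever $b>v$) is compact, since it is defined by finitely many (closed) linear constraints over a bounded region. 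Thus, picking $\varepsilon_k \coloneqq 1/k$ and using the hypothesis, we obtain a sequence $\{\beta^{(k)}\} \subseteq K$, which, by the Bolzano--Weierstrass theorem, admits a subsequence (still denoted $\beta^{(k)}$) converging to some $\beta^* \in K$. Symmetry and no-overbidding of $\beta^*$ are immediate, since $K$ is closed.

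\medskip

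\noindent The next step is to verify that $\beta^*$ is \emph{monotone} in the sense of~\cref{def:monotonicty-discrete}. The key observation is that the support operation is lower semi-continuous under pointwise convergence: if $\beta^*(v)(b) > 0$, then, since $\beta^{(k)}(v)(b) \to \beta^*(v)(b)$, for all sufficiently large $k$ we have $b \in \support{\beta^{(k)}(v)}$. Now fix any $v < v'$ in $V$, and pick $b \in \support{\beta^*(v)}$ and $b' \in \support{\beta^*(v')}$. For all large enough $k$, it holds simultaneously that $b \in \support{\beta^{(k)}(v)}$ and $b' \in \support{\beta^{(k)}(v')}$; hence, by the monotonicity of $\beta^{(k)}$,
\[
b \;\leq\; \max \support{\beta^{(k)}(v)} \;\leq\; \min \support{\beta^{(k)}(v')} \;\leq\; b'.
\]
Taking $b = \max \support{\beta^*(v)}$ and $b' = \min \support{\beta^*(v')}$ yields monotonicity of $\beta^*$.

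\medskip

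\noindent Finally, the equilibrium condition passes to the limit by continuity of the utility function. From the explicit formula in~\eqref{eq:DFPA-utility-interim-mixed}, and as reflected in the dynamic-programming computation of~\cref{lemma:DFPA-utilities-efficient}, for each fixed $i \in N$, $v_i \in V_i$, and $\vec{\gamma} \in \Delta(B)$, the map $\vec{\beta} \mapsto u_i(\vec{\gamma},\vec{\beta}_{-i};v_i)$ is a polynomial in the probability entries $\{\beta_j(v_j)(b)\}$, hence continuous on $K^n$. Since $\vec{\beta}^{(k)}$ is an $\varepsilon_k$-approximate MBNE, for every $i \in N$, $v_i \in V_i$, and $\vec{\gamma} \in \Delta(B)$,
\[
u_i(\beta^{(k)}(v_i), \vec{\beta}^{(k)}_{-i}; v_i) \;\geq\; u_i(\vec{\gamma}, \vec{\beta}^{(k)}_{-i}; v_i) - \varepsilon_k.
\]
Letting $k \to \infty$ (and using $\varepsilon_k \to 0$ together with continuity) gives $u_i(\beta^*(v_i), \vec{\beta}^*_{-i}; v_i) \geq u_i(\vec{\gamma}, \vec{\beta}^*_{-i}; v_i)$ for all $\vec{\gamma} \in \Delta(B)$, which is precisely the exact MBNE condition of~\cref{def:approx-mixed-bayes-nash-equilibrium}.

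\medskip

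\noindent The main technical obstacle is the monotonicity argument: the support of a mixed strategy is not continuous under pointwise convergence (it can only shrink in the limit), so one must be careful to work with the correct semi-continuity direction. Every other piece is essentially soft topology: compactness of the closed convex set of feasible symmetric no-overbidding strategies, and continuity of the utility functionals, which follows because they are polynomials in the finitely many probability variables describing $\vec{\beta}$.
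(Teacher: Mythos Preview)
Your proof is correct and follows essentially the same approach as the paper: extract a convergent subsequence by compactness of the feasible strategy polytope, argue that supports can only shrink in the limit so monotonicity is preserved, and pass the equilibrium inequality to the limit via continuity of the (polynomial) utility functions. The only cosmetic difference is that the paper first passes to a subsequence with constant support before applying compactness, whereas you go directly to a convergent subsequence and invoke lower semi-continuity of the support; your route is slightly more streamlined but the content is the same.
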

\begin{proof}
    Fix a DFPA $\mathcal{A}$ with iid priors over a common value space $V$,
    and bidding space $B$. By the assumptions of our lemma, there exists a
    sequence of strategy profiles
    $\ssets{\vec{\beta}_n=(\beta_n,\beta_n,\dots,\beta_n)}$, $n\in \N$, where
    each bidder plays according to the same, monotone, strategy
    $\beta_n:V\rightarrow \Delta(B)$, such that $\vec{\beta}_n$ is a
    $\frac{1}{n}$-approximate MBNE of $\mathcal A$ for all positive integers
    $n$. For our proof, we will show that
    \begin{itemize}
        \item[(a)] there exists a subsequence $\ssets{\bar\beta_n}$ of $\ssets{\beta_n}$, such that $\bar\beta_n \to \beta^*$ (as $n\to \infty$), where $\beta^*$ is a monotone strategy.
        \item[(b)] $\vec{\beta}^*=(\beta^*,\beta^*,\dots,\beta^*)$ is an \emph{exact} MBNE.
    \end{itemize}

    Starting on (a), first observe that since we are at a discrete auction,
    there are finitely many different supports that a bidding strategy can have.
    Formally, since sets $V,B$ are finite, the set $\sset{\times_{v\in
    V}\support{\beta(v)}}_{\beta\in \Delta(B)^V}$ is also finite. Therefore,
    there has to exist a subsequence $\ssets{\tilde{\beta_n}}$ of
    $\ssets{\beta_n}$ such that all its elements have exactly the same support.

    Let $\tilde p_n=\left(\tilde p_n(v,b)\right)_{v\in V, b\in B}$ denote the
    canonical representation of strategy $\tilde\beta_n$; that is, $\tilde p_n(v,b)$
    is the probability of submitting bid $b$ when having true value $v$, under
    strategy $\tilde \beta_n$ (see~\cref{sec:model}), for all $n\in \N$. Observe
    that all these probabilities must satisfy the feasibility constraints
    \begin{align*}
        \tilde{p}_n(v,b) &\geq 0, \quad
        \sum_{b'\in B} \tilde{p}_n(v,b') = 1, \quad\text{and}\quad
        \tilde{p}_n(v,b) (v-b) \geq 0
        &&\quad\forall{v\in V, b\in B},
    \end{align*}
    which define a polytope, and thus a compact subspace, of the Euclidean
    metric space $\R^{V\times B}$. Therefore, there has to exist a subsequence
    $\ssets{\bar p_n}$ of $\ssets{\tilde p_n}$ that converges to a feasible
    representation $\bar p^*$ of bidding strategy $\beta^*\in \Delta(B)^V$.

    The only thing remaining in order to establish point (a) is the
    monotonicity of $\beta^*$. Indeed, since all strategies of
    $\ssets{\tilde{\beta}_n}$ has the same support, it must be that \emph{all}
    elements of the sequence $\ssets{\bar{p}_n}$ have exactly the same zero
    components. Therefore, due to compactness, the only way in which the
    supports of limiting strategy $\beta^*$ may differ from the supports of all
    strategies in $\ssets{\bar{\beta}_n}$ is if for a probability component it
    is $p^*(v,b)=0$ but $\bar{p}_n(v,b)\neq 0$ (for all $n\in\N$). But then,
    this implies that the supports may only \emph{shrink} in the limit; that is,
    $\support{\beta^*(v)} \subseteq \support{\bar{\beta}_n(v)}$ for all $n\in
    \N$. So, given that $\bar{\beta}_n$'s are monotone, $\beta^*$ must also be
    monotone; see~\cref{def:monotonicty-discrete}.
    
    Moving now to point (b), for a symmetric strategy profile
    $\vec{\beta}=(\beta,\beta,\dots,\beta)$, let
    \begin{equation}
        \label{eq:improvement-function-Z}
    Z(\beta) \coloneqq 
    \max_{v\in V, b\in B} u_i(b,\vec{\beta}_{-i};v_i) - u_i(\beta(v_i),\vec{\beta}_{-i};v_i)
    \end{equation}
    denote the maximum utility improvement that a bidder $i$ can achieve by
    unilaterally deviating, from the bid $\beta(v_i)$ dictated by $\beta$, to
    any other bid, while all other bidders play according to the symmetric
    profile $\vec{\beta}=(\beta,\beta,\dots,\beta)$, across all possible true
    values $v_i$. Note that, due to symmetry, the specific choice of the bidder
    $i$ in the above definition~\eqref{eq:improvement-function-Z} is irrelevant.
    Then, $\vec{\beta}$ is an $\varepsilon$-approximate MBNE of $\mathcal A$, if
    and only if $Z(\beta) \leq \varepsilon$;
    see~\cref{def:approx-mixed-bayes-nash-equilibrium} Obviously, $Z(\beta)\geq
    0$ for any strategy profile $\vec{\beta}$.

    Observe that, as defined by~\eqref{eq:improvement-function-Z},
    $Z:\Delta(B)^V\rightarrow [0,1]$ is a continuous function: bidder utilities
    are continuous (see~\eqref{eq:DFPA-utility-interim-mixed}) and (finitely
    many) max-operators preserve continuity. So,
    $$
    Z(\beta^*)= \lim_{n\to\infty} Z(\bar{\beta}_n) \leq  \lim_{n\to\infty} \frac{1}{n} = 0,
    $$
    meaning that indeed $\vec{\beta}^*$ is an exact MBNE of $\mathcal{A}$.
\end{proof}

Now we have all the necessary machinery in order to prove~\cref{th:existence-symmetric-equilibria-iid}:
\begin{proof}[Proof of~\cref{th:existence-symmetric-equilibria-iid}]
    Fix a DFPA $\mathcal A$ with iid priors and choose an arbitrary $\varepsilon \in (0,1)$. Then, by making use of our computational equivalence\footnote{As a matter of fact, in this proof we make use of the reduction from discrete to continuous auctions by deploying~\cref{lem:DFPA2CFPA} with $\varepsilon \gets 0$ and $\delta \gets \varepsilon$ in its statement.} between discrete and continuous auctions from~\cref{sec:equivalence}, we can construct a CFPA $\mathcal{A}'$ with iid priors (and the same, discrete, bidding space as $\mathcal A$) such that, any $\varepsilon$-approximate symmetric (and monotone) PBNE of $\mathcal A'$ can be translated back to a symmetric and monotone $\varepsilon$-approximate MBNE of $\mathcal A$.

    Thus, given that such $\varepsilon$-approximate symmetric PBNE are guaranteed to exist, due to~\cref{th:existence-symmetric-PBNE-iid-CFPA}, we also get guaranteed existence of an $\varepsilon$-approximate, symmetric and monotone MBNE in our original discrete auction $\mathcal A$. Since we chose $\varepsilon>0$ arbitrarily, our Convergence~\cref{lemma:convergence-approx-equilibria-DFPA} gives us the desired existence of a monotone \emph{exact} MBNE in $\mathcal A$.
\end{proof}

% Bibliography
\printbibliography

\end{document}
\typeout{get arXiv to do 4 passes: Label(s) may have changed. Rerun}